\newtheorem*{theorem*}{Theorem}
\newtheorem*{prop*}{Proposition}
\newtheorem{propS}{Proposition}
\newtheorem{theoremS}{Theorem}
\newcommand{\blind}{1}
\begin{document}

\def\spacingset#1{\renewcommand{\baselinestretch}%
{#1}\small\normalsize} \spacingset{1}

%%%%%%%%%%%%%%%%%%%%%%%%%%%%%%%%%%%%%%%%%%%%%%%%%%%%%%%%%%%%%%%%%%%%%%%%%%%%%%

\if1\blind
{
  \title{\bf Efficient Modeling of Quasi-Periodic Data with Seasonal Gaussian Process}
  \author{Ziang Zhang \hspace{.2cm}\\
    Department of Statistical Sciences, University of Toronto\\
    and \\
    Patrick Brown \\
    Department of Statistical Sciences, University of Toronto\\
      Centre for Global Health Research, St Michael's Hospital \\
    and \\
    Jamie Stafford \\
    Department of Statistical Sciences, University of Toronto\\}
  \maketitle
} \fi

\if0\blind
{
  \bigskip
  \bigskip
  \bigskip
  \begin{center}
    {\LARGE\bf Efficient Modeling of Quasi-Periodic Data with Seasonal Gaussian Process}
\end{center}
  \medskip
} \fi

\bigskip
\begin{abstract}
Quasi-periodicity refers to a pattern in a function where it appears periodic but has evolving amplitudes over time. 
This is often the case in practical settings such as the modeling of case counts of infectious disease or the carbon dioxide (CO2) concentration over time.
In this paper, we introduce a class of Gaussian processes, called seasonal Gaussian Processes (sGP), for model-based inference of such quasi-periodic behavior. 
We illustrate that the exact sGP can be efficiently fit within $O(n)$ time using its state space representation for equally spaced locations. 
However, for large datasets with irregular spacing, the exact approach becomes computationally inefficient and unstable. To address this, we develop a continuous finite dimensional approximation for sGP using the seasonal B-spline (sB-spline) basis constructed by damping B-splines with sinusoidal functions. 
We prove that the proposed approximation converges in distribution to the true sGP as the number of basis functions increases, and show its superior approximation quality through numerical studies.
We also provide a unified and interpretable way to define priors for the sGP, based on the notion of predictive standard deviation (PSD).
Finally, we implement the proposed inference method on several real data examples to illustrate its practical usage. 
\end{abstract}

\noindent%
{\it Keywords:}  Hierarchical Model, Bayesian Methods, Statistical Computing, Prior Elicitation
\vfill

\newpage
\spacingset{1.9} % DON'T change the spacing!
\section{Introduction}\label{sec:intro}

In astronomical studies the brightness of Sun-like stars tends to exhibit seasonal variation with amplitudes evolving overtime owing to the rotational modulation of magnetically active regions \citep{roberts2013gaussian}.
This kind of quasi-periodic behavior can also be observed in the weekly mortality counts due to influenza, and the carbon dioxide concentration (CO2) readings at the Mauna Loa observatory \citep{rasmussen2003gaussian}.
Therefore, inferring an unknown function $g$ with quasi-periodic behavior is often a crucial task in a variety of contexts.

In this paper, we proposed a class of Gaussian processes ($\GP$), which we call seasonal Gaussian Processes (sGP), to make inference of a quasi-periodic function $g$. 
The sGP has a natural construction through an ordinary stochastic differential equation (SDE) which provides a clear interpretation of the degree of quasi-periodicity in the inferred function. 
Additionally, once written into a state space representation with its first derivative, the sGP has a Markov property that simplifies the computation for inference at equally spaced locations.

However, for large datasets with irregular spacing, the exact approach can become computationally inefficient and numerically unstable.
To address this, we develop a novel finite dimensional continuous approximation to the sGP with the seasonal B-spline basis (sB-spline). Motivated by the covariance structure of the sGP, sB-splines are constructed by damping B-splines with sinusoidal functions. 
The proposed sB-spline approximation is then constructed using the finite element method {(FEM)}\citep{brenner2008mathematical} in a strategy similar to \cite{rw208}, \cite{spde} and \cite{iwpus} for approximating different classes of Gaussian processes.
Improvements in computational efficiency are realized by adopting a least squares approach so the resulting approximation is fully determined by a set of basis weights with a highly sparse precision matrix. This approximation is shown to converge rapidly to the true sGP as the number of basis functions increases.

The remainder of the paper is organized as follows. 
In Section \ref{sec:sGP}, we introduce the sGP model and its SDE characterization. 
We then derive some useful properties of the sGP, in particular a state-space representation which makes the sGP ideal for inference. 
In Section \ref{sec:FEM}, we develop our proposed sB-spline approximation to the sGP using the FEM. 
We also discuss the computational and theoretical properties that justify its practical advantages. 
In Section \ref{sec:prior}, we derive the predictive standard deviation (PSD) of the sGP \citep{iwpus}, and use it to define prior for the standard deviation parameter of the sGP in a unified and interpretable way. 
In Section \ref{sec:examples}, we demonstrate the effectiveness and versatility of our proposed methodology for inferring quasi-periodic functions. 
We first apply the approach to three simple examples as a proof of concept, and then showcase its advantage by re-analyzing the atmospheric Carbon Dioxide (CO2) concentrations data collected in Hawaii, which was originally analyzed in \cite{rust1979inferences}. 
Our inference uses the posterior approximation algorithm in \cite{noeps, aghqtheory}.
For a more detailed description of how to implement this algorithm with $\GP$, one can refer to Section 3 in \cite{iwpus}.

\section{Inference with the Seasonal Gaussian Process}\label{sec:sGP}

We consider the following hierarchical model:
\begin{equation}\label{equ:smoothModel}
    \begin{aligned}
    Y_i|\boldsymbol{\eta} &\overset{ind}{\sim} \pi(Y_i|\boldsymbol{\eta}, \kappa),  \\
    \eta_i &= \boldsymbol{v}_i ^T \boldsymbol{\beta} + \sum_{l=1}^{L}g_{l}(x_{li}),   \\
    g_l:\ &\Omega_l \rightarrow \mathbb{R}, \  g_l \overset{ind}{\sim} \GP(\C_l), \forall l \in [L].
    \end{aligned}
\end{equation}
Here $\pi(Y_i|\boldsymbol{\eta}, \kappa)$ is a twice continuously-differentiable density with linear predictors $\boldsymbol{\eta} = [\eta_1, ..., \eta_n]^T$, covariates $\boldsymbol{v}_i, \ x_{li}$ and hyperparameter $\kappa$. 
Each unknown function $g_l$ is assigned an independent Gaussian Process ($\GP$) model with zero mean and covariance function $C_l$. 
% In the Generalized Additive Model of \cite{hastiegambook}, the conditional distribution of each $Y_i$ in \cref{equ:smoothModel} will only depend on a single $\eta_i$. 
The model \ref{equ:smoothModel} is termed the Extended Latent Gaussian Model in \cite{noeps}, and an example is the partial likelihood of Cox proportional hazards model where each $Y_i$ depends on multiple elements of $\boldsymbol\eta$ \citep{coxphus}.
% The Extended Latent Gaussian Model of \cite{noeps} accommodates more general form of \cref{equ:smoothModel} such as the partial likelihood of Cox proportional hazards model, where each $Y_i$ depends on multiple elements of $\boldsymbol\eta$ \citep{coxphus}.
For the purpose of exposition, we consider the inference of a single unknown function $g$ with the sGP in the remainder of this section as well as \cref{sec:approximation} and \cref{sec:prior}.

% In this paper, we will focus in particular when each $\GP$ is the seasonal Gaussian process (sGP) with covariance function parameterized by a periodicity parameter $\alpha \in \mathbb{R^+}$ and a standard deviation (SD) parameter $\sigma \in \mathbb{R^+}$.

\subsection{Seasonal Gaussian Process: A Differential Equation Characterization}\label{subsec:sGP}

The seasonal Gaussian process (sGP) is a zero-mean $\GP$ defined for functions that exhibit quasi-periodic behavior.
It is characterized by two parameters, the frequency parameter $\alpha \in \R^{+}$ and the standard deviation (SD) parameter $\sigma \in \R^{+}$. 
If the function $g$ is assigned the sGP model, which we denote as $g \sim \text{sGP}(\alpha, \sigma)$, then $g$ satisfies the following stochastic differential equation (SDE):
\begin{equation}\label{equ:sGPdef}
    \begin{aligned}
    Lg(x) = \bigg[\frac{d^2}{dx^2}+\alpha^2\bigg] g(x) = \sigma \xi(x),
    \end{aligned}
\end{equation}
where $\xi(x)$ is the Gaussian white noise process. 
Note that the null space of the SDE in \cref{equ:sGPdef} is: 
\begin{equation}\label{equ:NullSpace}
    \begin{aligned}
    \text{Null}\{L\} = \text{span}\{\cos(\alpha x), \sin(\alpha x)\},
    \end{aligned}
\end{equation}
which is the space of sinusoidal functions with period $2\pi/\alpha$. 
For identifiability, we define the sGP $g$ in \cref{equ:sGPdef} to satisfy the following initial conditions
\begin{equation}\label{equ:boundaryCondition}
    \begin{aligned}
    g(0) = g'(0) = 0,
    \end{aligned}
\end{equation}
without the loss of generality.
More general initial conditions can be accommodated by adding two global trigonometric functions $\{v_1\cos(\alpha x), v_2\sin(\alpha x)\}$ to $g$, where initial conditions become $g(0) = v_1$ and $g'(0) = v_2 \alpha$.

The sGP model has a direct interpretation in its parameters: $\alpha$ specifies the frequency and the periodicity of the function $g$, and $\sigma$ specifies the degree that $g$ could deviate from the sinusoidal space in \cref{equ:NullSpace}.  
When $\sigma$ is larger, the sample path of sGP can deviate more from the sinusoidal pattern, which allows for more quasi-periodic behavior in the inferred $g$. 
Therefore, the quasi-periodic behavior of $g$ such as in the sunspot variation could be accommodated and quantified by the sGP model.

% Given $\sigma$ and $\alpha$, the sGP model for $g$ is defined through the following stochastic differential equation (SDE):
% \begin{equation}\label{equ:sGPdef}
%     \begin{aligned}
%     Lg(x) = \sigma \xi(x),
%     \end{aligned}
% \end{equation}
% where $L$ is a linear differential operator defined as $L = \frac{d^2}{dx^2}+\alpha^2$ and $\xi(x)$ is the Gaussian white noise process. 
% We denote this as $g \sim \text{sGP}(\alpha, \sigma)$.

\subsection{Properties of the Seasonal Gaussian Process}\label{subsec:sGP_Prop} 

\noindent The specific covariance function of the sGP model in \cref{equ:sGPdef} is given in \cref{theorem:sGP_Property}.
\begin{proposition}[Covariance Function of the Seasonal Gaussian Process]\label{theorem:sGP_Property} Let $g \sim \text{sGP}(\alpha, \sigma)$. 
Then $g$ has a covariance function:
\begin{equation}\label{equ:sGP-cov}
\begin{aligned}
        C(x_1,x_2) &= \bigg(\frac{\sigma}{\alpha}\bigg)^2 \bigg[\frac{x_1}{2}\cos(\alpha(x_2-x_1)) - \frac{\cos(\alpha x_2)\sin(\alpha x_1)}{2\alpha} \bigg] \\
        &= \bigg(\frac{\sigma}{\alpha}\bigg)^2 \bigg[ \frac{\cos(\alpha x_2)x_1}{2}\cos(\alpha x_1) + \left(\frac{\sin(\alpha x_2)x_1}{2} - \frac{\cos(\alpha x_2)}{2\alpha}\right)\sin(\alpha x_1) \bigg],
\end{aligned}
\end{equation}
for any $x_1,x_2 \in \mathbb{R}^+$ such that $x_1\leq x_2$.
\end{proposition}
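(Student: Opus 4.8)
The plan is to solve the defining SDE \cref{equ:sGPdef} explicitly as a stochastic integral and then read off the covariance via the It\^o isometry. Since $L = \frac{d^2}{dx^2} + \alpha^2$ is a second-order linear operator whose homogeneous solutions are $\cos(\alpha x)$ and $\sin(\alpha x)$ (the null space in \cref{equ:NullSpace}), I would first construct its causal Green's function for the initial value problem with $g(0) = g'(0) = 0$. By variation of parameters, the impulse response that vanishes at $s$ and has unit derivative there is $G(x,s) = \alpha^{-1}\sin(\alpha(x-s))$ for $x \ge s$. This yields the representation
\begin{equation*}
g(x) = \sigma \int_0^x \frac{\sin(\alpha(x-s))}{\alpha}\, \mathrm{d}B(s),
\end{equation*}
where $B$ is a standard Brownian motion (so that $\mathrm{d}B = \xi\,\mathrm{d}x$ formally), and one checks directly that this $g$ satisfies both the SDE and the initial conditions \cref{equ:boundaryCondition}.

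Next I would compute the covariance. Taking $x_1 \le x_2$ and applying the It\^o isometry, the expectation of the product of the two stochastic integrals collapses to an ordinary integral over the overlapping range $[0, x_1]$:
\begin{equation*}
C(x_1,x_2) = \frac{\sigma^2}{\alpha^2} \int_0^{x_1} \sin(\alpha(x_1 - s))\,\sin(\alpha(x_2 - s))\,\mathrm{d}s.
\end{equation*}

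The remaining work is a deterministic trigonometric integral. Using the product-to-sum identity $\sin A \sin B = \tfrac12[\cos(A - B) - \cos(A + B)]$ with $A = \alpha(x_1 - s)$ and $B = \alpha(x_2 - s)$, the difference $A - B = \alpha(x_1 - x_2)$ is constant in $s$ and integrates to the $\tfrac{x_1}{2}\cos(\alpha(x_2 - x_1))$ term, while the sum $A + B = \alpha(x_1 + x_2 - 2s)$ integrates to a $\sin$ evaluated at the two endpoints. A sum-to-product identity then collapses those endpoint terms into $-\cos(\alpha x_2)\sin(\alpha x_1)/(2\alpha)$, producing the first displayed form; the second form follows by expanding $\cos(\alpha(x_2 - x_1))$ with the angle-addition formula and collecting the coefficients of $\cos(\alpha x_1)$ and $\sin(\alpha x_1)$.

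The only genuinely delicate step is the first one: justifying the Green's-function representation of the white-noise-driven SDE and the associated use of the It\^o isometry, since $\xi$ is only rigorously a generalized process. Everything after the integral is set up amounts to routine calculus, so I would concentrate the care on verifying that the $g$ defined by the stochastic integral is indeed the solution consistent with the prescribed initial conditions before carrying out the trigonometric simplifications.
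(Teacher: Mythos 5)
Your proposal is correct and takes essentially the same route as the paper: both solve the SDE explicitly as the Wiener integral $g(x) = \frac{\sigma}{\alpha}\int_0^x \sin(\alpha(x-s))\,dB(s)$ and then obtain the covariance as the overlap integral $\frac{\sigma^2}{\alpha^2}\int_0^{x_1}\sin(\alpha(x_1-s))\sin(\alpha(x_2-s))\,ds$ evaluated by elementary trigonometry. The only difference is how the kernel is derived --- you construct the scalar causal Green's function by variation of parameters, whereas the paper vectorizes the SDE with the augmented state $(g,g')^T$ and computes the matrix exponential $\exp(\boldsymbol{F}x)$ via Taylor series (a formulation it then reuses for the state-space representation in Theorem 1) --- but the resulting impulse response $\alpha^{-1}\sin(\alpha(x-s))$ and every subsequent step coincide.
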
 

\noindent
The detailed derivation of \cref{theorem:sGP_Property} is provided in Appendix A. 
Since the covariance function in \cref{equ:sGP-cov} has the form of damped sinusoidal waves, it is clear that the sGP should be viewed as a model for quasi-periodic functions. 
As an illustration, some simulated sample paths from the sGP models with different $\alpha$ are shown in \cref{fig:samples}.
\begin{figure}[!h]
    \centering
                 \subfigure[$\alpha = \pi$: Covariance]{
      \includegraphics[width=0.45\textwidth]{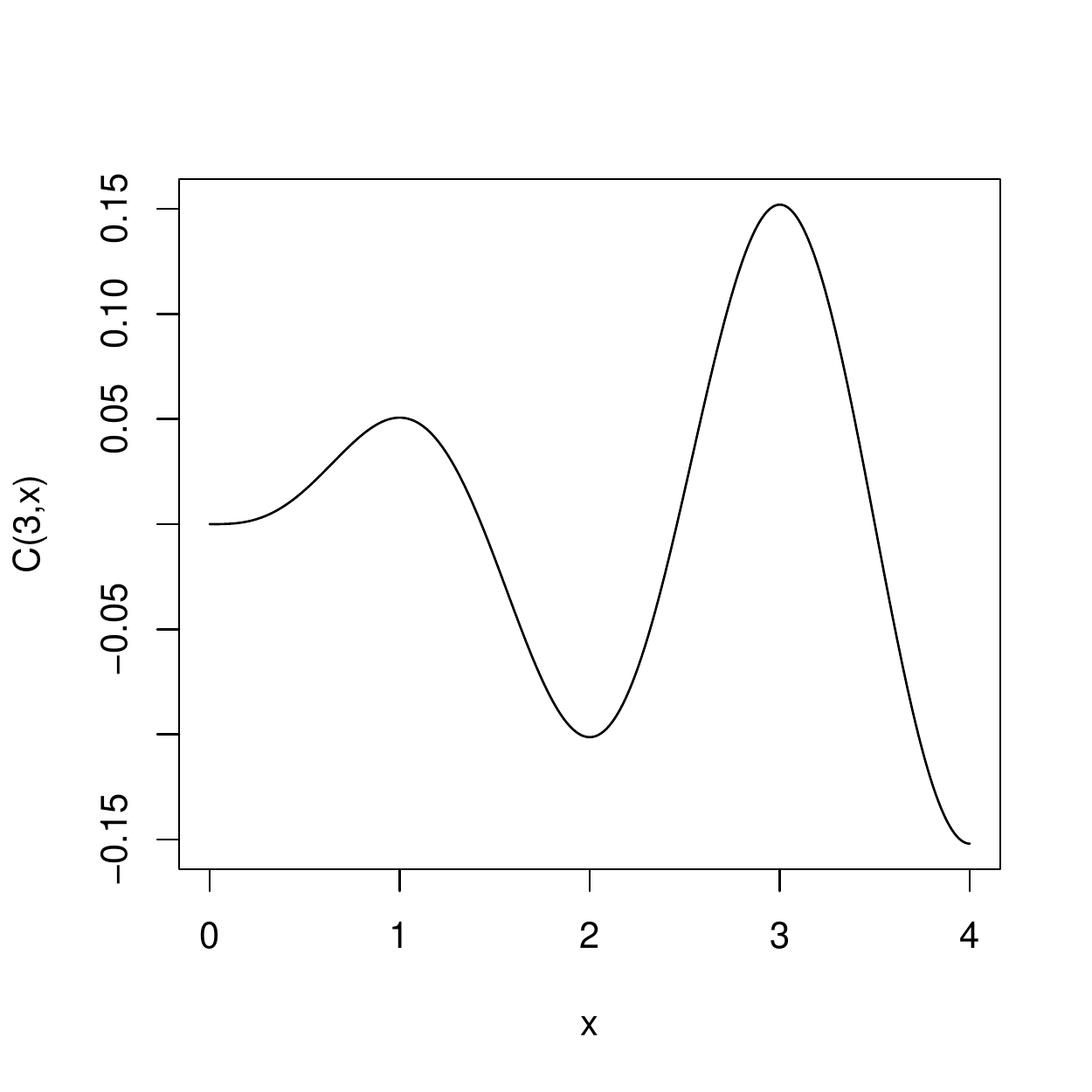}
    }
             \subfigure[$\alpha = 2\pi$: Covariance]{
      \includegraphics[width=0.45\textwidth]{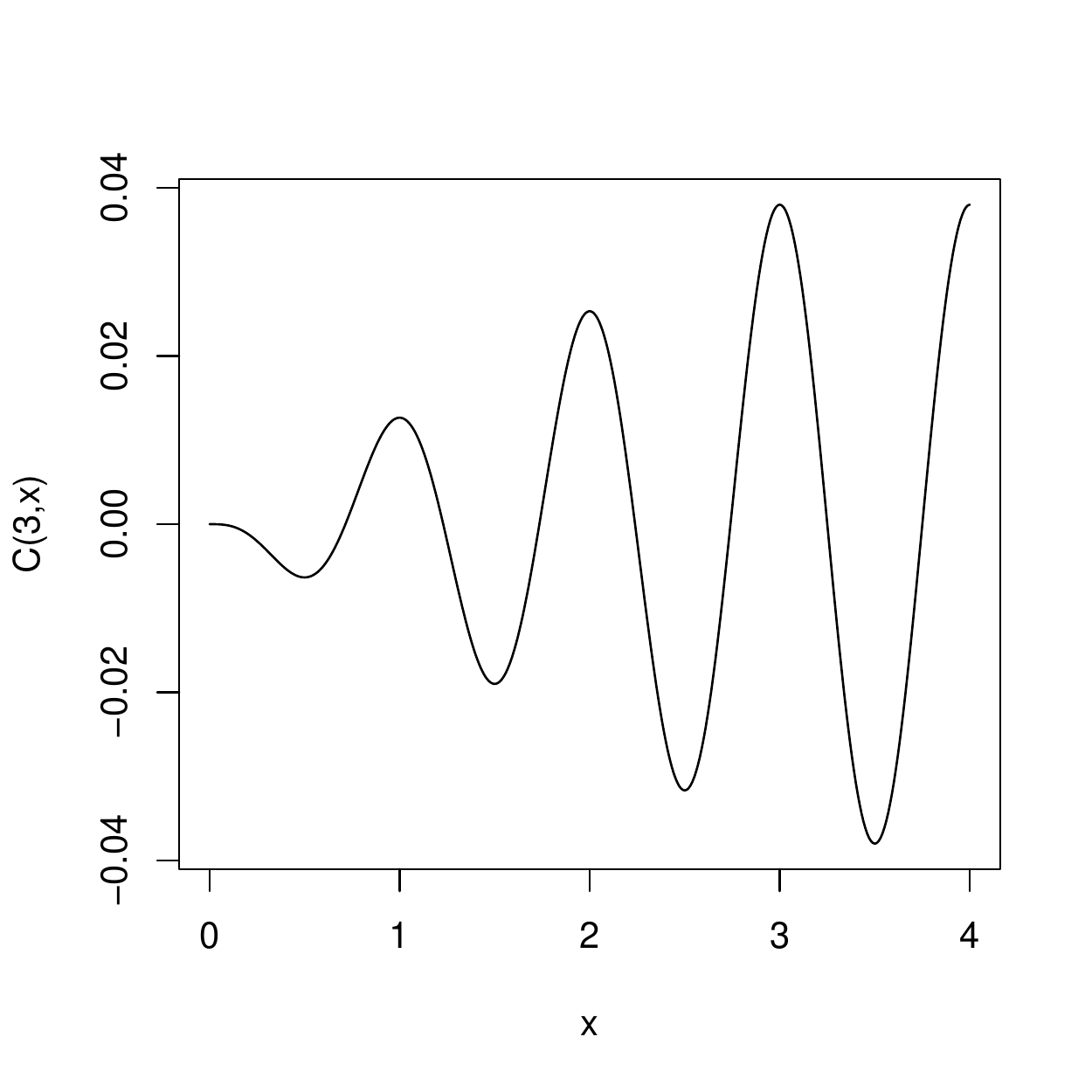}
    }
             \subfigure[$\alpha = \pi$: Samples]{
      \includegraphics[width=0.45\textwidth]{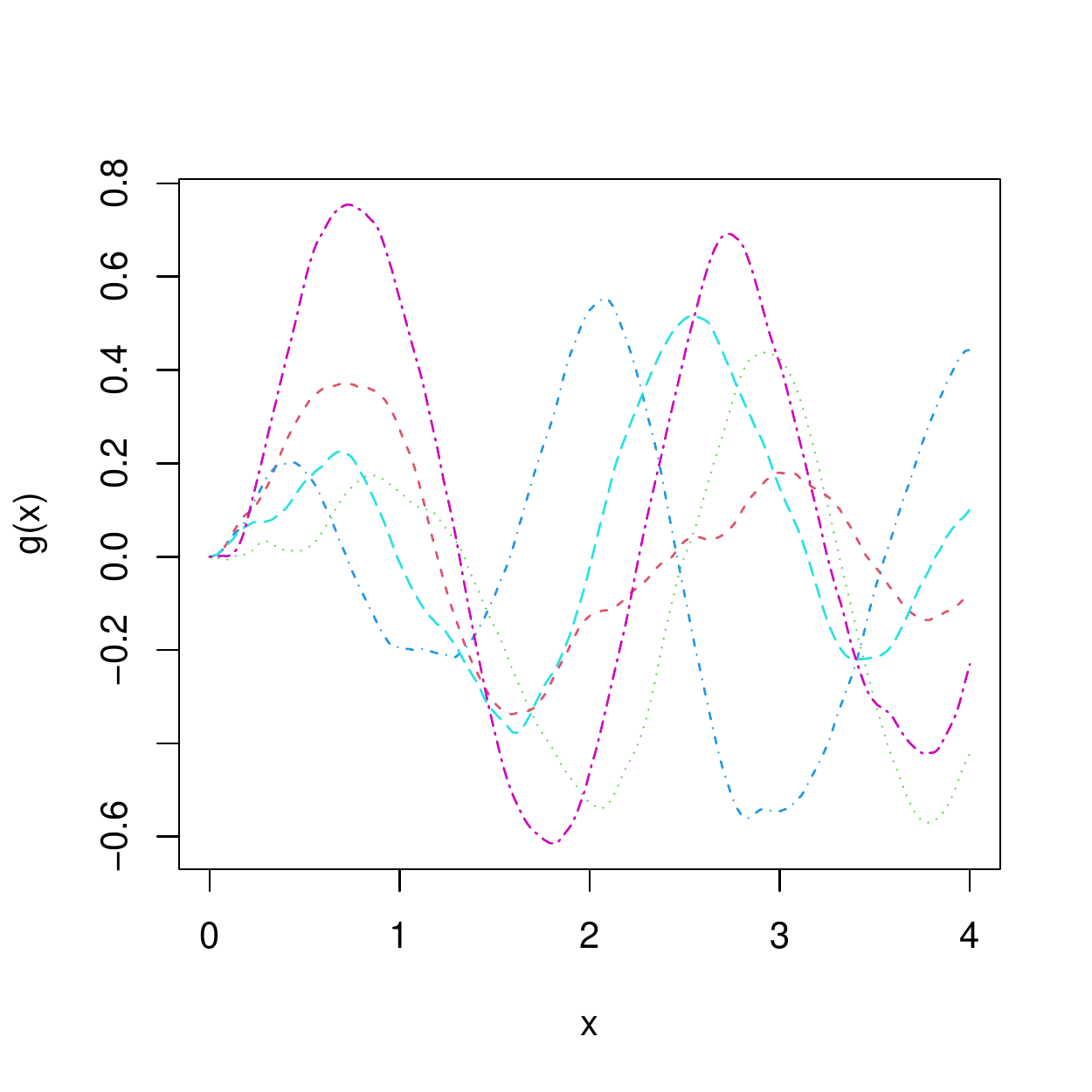}
    }
             \subfigure[$\alpha = 2\pi$: Samples]{
      \includegraphics[width=0.45\textwidth]{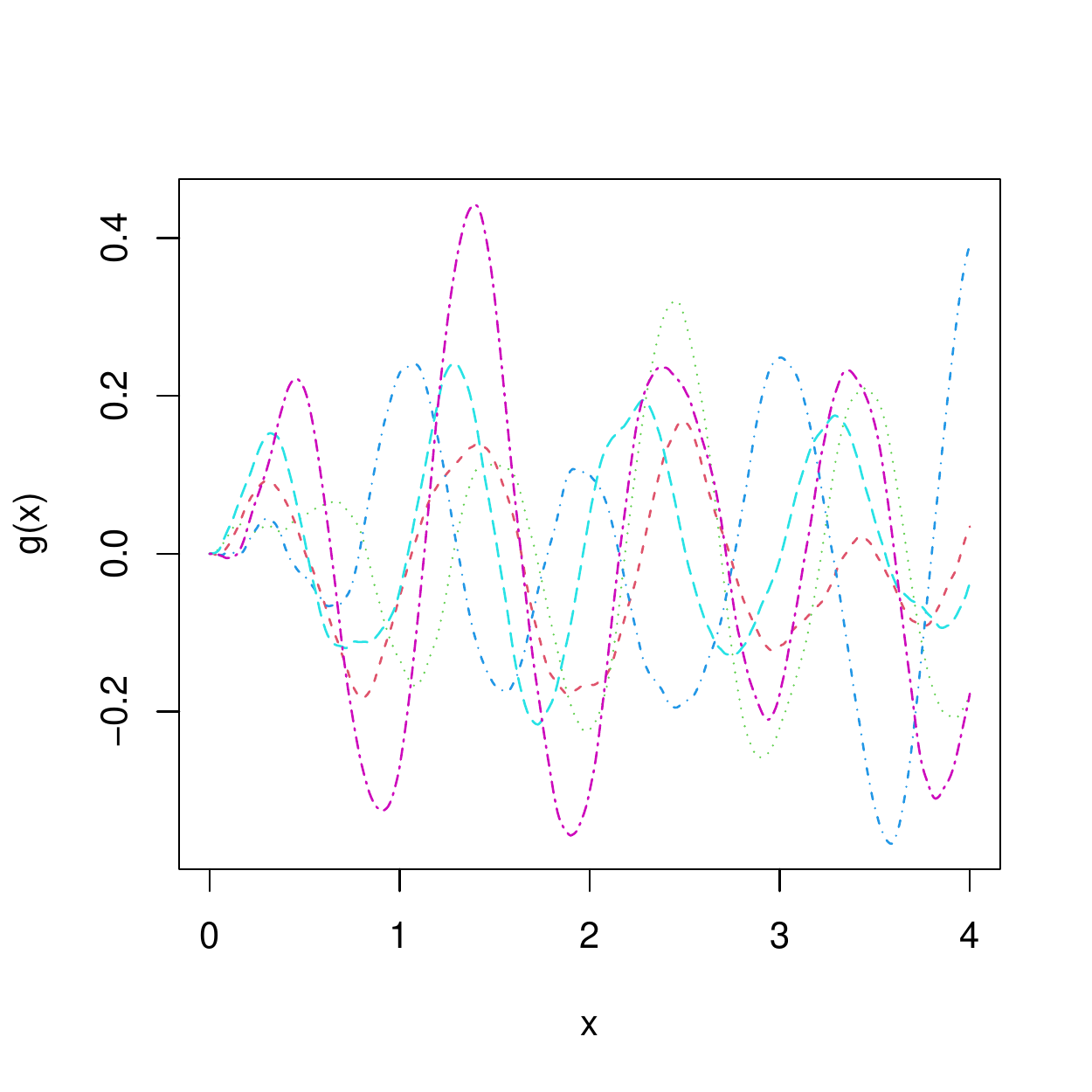}
    }
     \caption{(a,b) display the covariance functions of two sGPs at $3$, and (c,d) display five sample paths from the two sGPs.
     The frequency parameter $\alpha$ equals to $\pi$ in (a,c) and $2\pi$ in (b,d), and the SD parameter $\sigma = 1$ in both sGPs.
     Both the covariance functions and the sample paths exhibit quasi-periodic behavior with amplitudes varying overtime.
     }
    \label{fig:samples}
\end{figure}

Another property that makes sGP convenient to use in practice is that when augmented with its first derivative, the resulting augmented model has a state space representation that satisfies the Markov property. This property is summarized in \cref{thrm:markov}.

\begin{theorem}[State Space Representation of the sGP]\label{thrm:markov} 
Consider $g \sim \text{sGP}(\alpha, \sigma)$, and let $\boldsymbol{s} = \{s_1,..., s_n\} \subset \R^{+}$ denotes a set of $n$ sorted locations and spacing $d_1 = s_1$ and $d_i = s_i - s_{i-1}$ for $i \in \{2,..,n\}$. Then the augmented vector $\boldsymbol{g}_{aug}(s_i) = [g(s_i), g'(s_i)]^T$ can be written as a Markov model:
\begin{equation}\label{equ:state-space}
\boldsymbol{{g}}_{aug}(s_{i+1}) = \boldsymbol{R}_{i+1} \boldsymbol{{g}}_{aug}(s_{i}) + \boldsymbol{\epsilon}_{i+1},
\end{equation}
where $\boldsymbol{\epsilon}_i \overset{ind}{\sim}N(0,\boldsymbol{\Sigma}_i)$. The $2 \times 2$ matrices $\boldsymbol{R}_{i}$ and $\boldsymbol{\Sigma}_{i} = \boldsymbol{Q}_{i}^{-1}$ are respectively defined as:
\begin{equation}
    \begin{aligned}
    \boldsymbol{R}_{i} = \begin{bmatrix} \cos(\alpha d_i) & \frac{1}{\alpha}\sin(\alpha d_i) \\
    -\alpha\sin(\alpha d_i) & \cos(\alpha d_i)  \\
    \end{bmatrix}, \  
    \boldsymbol{\Sigma}_i = \sigma^2 \begin{bmatrix} \frac{1}{\alpha^2}\bigg(\frac{d_i}{2} - \frac{\sin(2\alpha d_i)}{4\alpha}\bigg) & \frac{\sin^2(\alpha d_i)}{2\alpha^2} \\
    \frac{\sin^2(\alpha d_i)}{2\alpha^2} & \frac{2\alpha d_i + \sin(2\alpha d_i)}{4\alpha}  \\
    \end{bmatrix}.
    \end{aligned}
\end{equation}
\end{theorem}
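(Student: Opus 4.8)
The plan is to recognize the defining SDE as a second-order linear equation that becomes a first-order linear system once we augment $g$ with its derivative. Writing $\boldsymbol{g}_{aug}(x) = [g(x), g'(x)]^T$ and substituting $g'' = -\alpha^2 g + \sigma\xi$, the state vector obeys
\[
\frac{d}{dx}\boldsymbol{g}_{aug}(x) = \boldsymbol{F}\,\boldsymbol{g}_{aug}(x) + \sigma\boldsymbol{L}\,\xi(x), \qquad \boldsymbol{F} = \begin{bmatrix} 0 & 1 \\ -\alpha^2 & 0 \end{bmatrix},\quad \boldsymbol{L} = \begin{bmatrix} 0 \\ 1 \end{bmatrix}.
\]
This is a linear time-invariant system driven by white noise, so the variation-of-constants (Duhamel) formula gives, over the interval $[s_i, s_{i+1}]$ of length $d_{i+1}$,
\[
\boldsymbol{g}_{aug}(s_{i+1}) = e^{\boldsymbol{F}d_{i+1}}\,\boldsymbol{g}_{aug}(s_i) + \sigma\int_{s_i}^{s_{i+1}} e^{\boldsymbol{F}(s_{i+1}-u)}\boldsymbol{L}\,dW(u),
\]
where $W$ is the Wiener process associated with $\xi$. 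This already exhibits the claimed additive structure, so it remains to identify the transition matrix with $\boldsymbol{R}_{i+1}$ and the covariance of the integral term with $\boldsymbol{\Sigma}_{i+1}$.

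For the transition matrix I would compute $e^{\boldsymbol{F}d}$ in closed form by exploiting $\boldsymbol{F}^2 = -\alpha^2\boldsymbol{I}$. Splitting the matrix-exponential series into even and odd powers collapses the two subseries into $\cos(\alpha d)$ and $\alpha^{-1}\sin(\alpha d)$, yielding $e^{\boldsymbol{F}d} = \cos(\alpha d)\boldsymbol{I} + \alpha^{-1}\sin(\alpha d)\boldsymbol{F}$, which is exactly $\boldsymbol{R}_{i}$ upon setting $d = d_i$. As a sanity check, in the noiseless case the first component then evolves as a sinusoid of frequency $\alpha$, consistent with the null space $\mathrm{span}\{\cos(\alpha x), \sin(\alpha x)\}$.

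The covariance term is where the bookkeeping concentrates. The stochastic integral is a zero-mean Gaussian vector because its integrand is deterministic, and by the It\^o isometry its covariance is
\[
\boldsymbol{\Sigma}_{i+1} = \sigma^2 \int_{s_i}^{s_{i+1}} e^{\boldsymbol{F}(s_{i+1}-u)}\boldsymbol{L}\boldsymbol{L}^T e^{\boldsymbol{F}^T(s_{i+1}-u)}\,du = \sigma^2\int_0^{d_{i+1}} \big(e^{\boldsymbol{F}\tau}\boldsymbol{L}\big)\big(e^{\boldsymbol{F}\tau}\boldsymbol{L}\big)^T\,d\tau
\]
after the substitution $\tau = s_{i+1}-u$. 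Since $e^{\boldsymbol{F}\tau}\boldsymbol{L} = [\,\alpha^{-1}\sin(\alpha\tau),\, \cos(\alpha\tau)\,]^T$, the integrand is a rank-one matrix of sinusoid products, and its three distinct entries reduce, via the double-angle identities for $\sin^2$, $\cos^2$ and $\sin\cos$, to elementary integrals that reproduce precisely the entries of $\boldsymbol{\Sigma}_i$ stated in the theorem.

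Finally, the Markov property follows directly from this construction: the noise term $\boldsymbol{\epsilon}_{i+1}$ is assembled only from the increments of $W$ on $(s_i, s_{i+1}]$, which are independent of the filtration $\mathcal{F}_{s_i}$ and hence of $\boldsymbol{g}_{aug}(s_j)$ for all $j \le i$, so the $\boldsymbol{\epsilon}_i$ are mutually independent across the non-overlapping intervals; the first step is the degenerate case $\boldsymbol{g}_{aug}(s_1) = \boldsymbol{\epsilon}_1$ with $d_1 = s_1$, using the initial condition $\boldsymbol{g}_{aug}(0) = \boldsymbol{0}$. I expect the main obstacle to be purely the careful trigonometric bookkeeping in the covariance integral, together with a clean justification of the Wiener-integral representation of the solution; the matrix-exponential computation and the independence argument are short by comparison.
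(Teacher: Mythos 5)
Your proposal is correct and follows essentially the same route as the paper's proof: rewriting the SDE as a first-order system $\frac{d}{dx}\boldsymbol{g}_{aug} = \boldsymbol{F}\boldsymbol{g}_{aug} + \sigma\boldsymbol{L}\xi$, applying the variation-of-constants solution over each interval, computing $e^{\boldsymbol{F}d}$ via $\boldsymbol{F}^2 = -\alpha^2\boldsymbol{I}$, and evaluating the noise covariance by integrating the sinusoid products, with independence across intervals from the white-noise (Wiener-increment) property. The only cosmetic difference is that you phrase the solution and covariance via the It\^o integral and isometry, where the paper works directly with the Gaussian white-noise formalism and cites S\"arkk\"a (2019); your explicit handling of the initial step $\boldsymbol{g}_{aug}(s_1) = \boldsymbol{\epsilon}_1$ is a small point the paper leaves implicit.
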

\noindent A proof of theorem \ref{thrm:markov} can be found in Appendix B.
An implication of \cref{thrm:markov} is that the discretely observed sGP model can be fitted sequentially using a filtering method by augmenting the state space with first derivatives. 
This is particularly useful when the response variable $y$ is Gaussian, as demonstrated by \cite{ansley1990filtering} and \cite{ansley1993nonparametric}.
Another implication of \cref{thrm:markov} is that unlike $[g(s_1), ..., g(s_n)]^T \in \mathbb{R}^n$ which has a dense precision matrix, the augmented vector $[\boldsymbol{g}_{aug}(s_1)^T,..,\boldsymbol{g}_{aug}(s_n)^T]^T \in \mathbb{R}^{2n}$ will be a GMRF \citep{gmrfbook} with a larger but much sparser precision matrix $\boldsymbol{Q}_{aug}$ being:
\begin{equation}\label{equ:aug-precision}
\begin{aligned}
\boldsymbol{Q}_{aug} = \left[\begin{smallmatrix} \boldsymbol{Q}_1 + \boldsymbol{A}_1 & \boldsymbol{H}_1 &   &   &   &   \\
                                      \boldsymbol{H}_1^{T} & \boldsymbol{Q}_1 + \boldsymbol{A}_2 & \boldsymbol{H}_2 &   &   &  \\
                                        & \boldsymbol{H}_2^{T} & \boldsymbol{Q}_2 +  \boldsymbol{A}_3 & \boldsymbol{H}_3 &   &  \\
                                        &   & \boldsymbol{H}_3^{T} & \boldsymbol{Q}_3 + \boldsymbol{A}_4 & \boldsymbol{H}_4 &   \\
                                        &   &     &             &  \ddots &   & & \\
                                        &   &     &             &    & \boldsymbol{Q}_{n-2} + \boldsymbol{A}_{n-1} & \boldsymbol{H}_{n-1}\\
                                        &   &     &             & \hdots & \boldsymbol{H}_{n-1}^{T}  & \boldsymbol{Q}_{n-1} \\
                                        \end{smallmatrix}\right],
                                        \end{aligned}
\end{equation}
where $\boldsymbol{Q}_i = \boldsymbol{\Sigma}_i^{-1}$, $\boldsymbol{A}_i = \boldsymbol{R}_i^T\boldsymbol{Q}_i\boldsymbol{R}_i$ and $\boldsymbol{H}_i = - \boldsymbol{R}_i^T\boldsymbol{Q}_i$. 

When all the locations are equally spaced, the Markov model (\ref{thrm:markov}) is reduced to a first-order vector auto-regression (VAR) model which requires only one coefficient matrix $\boldsymbol{R}$ and one error covariance $\boldsymbol{\Sigma}$ to be computed in order to obtain the entire precision matrix $\boldsymbol{Q}_{aug}$.
Because of the sparse structure of $\boldsymbol{Q}_{aug}$, the entire augmented vector $[\boldsymbol{g}_{aug}(s_1)^T,..,\boldsymbol{g}_{aug}(s_n)^T]^T$ can then be efficiently incorporated into the modern approximate Bayesian inference methods of \cite{noeps,inla}.
For large and irregularly spaced datasets, however, the approach using the state-space representation becomes computationally challenging, in particular when the correlation is strong.  This is because of the need to compute and invert each matrix $\boldsymbol{\Sigma}_i$ in \cref{equ:aug-precision}.
It is therefore often useful to consider a continuous finite-dimensional approximation to the entire process, which we introduce in the following section.

\section{Finite Dimensional Approximation to SGP}\label{sec:approximation}

% In this section, we propose a finite-dimensional approximation to the sGP model using the sB-spline basis: a basis we constructed by damping sinusoidal functions with the commonly used B-spline basis \citep{Bsplines}, which is motivated by the covariance structure of sGP in \cref{equ:sGP-cov}. Similar to \cite{rw208, spde, iwpus}, 

In this section, we derive the proposed seasonal B-spline (sB-spline) approximation as a Least Square approximation through the Finite Element Method (FEM). 
Our results demonstrate that this approximation has both desirable computational and theoretical properties.
In \cref{sec:FEM}, we provide the context and a general form of the FEM approximation. 
For the purpose of computation efficiency, we then adopt a least square approach to construct the FEM approximation and choose the cubic B-spline functions as the basis in \cref{sec:Bspline}.
In \cref{sec:sBspline}, we improve the FEM approximation by damping the basis functions with sinusoidal functions, which finally yields our proposed sB-spline approximation.

\subsection{Finite Element Method}\label{sec:FEM}

Given $\Omega = [a,b]$ for some $a,b \in \mathbb{R}^+$, the idea of FEM is to construct an approximation $\tilde{g}_k$ that has the form of:
\begin{equation}\label{equ:FEMapprox}
    \begin{aligned}
    \tilde{g}_k(x) = \sum_{i=1}^{k} w_i \varphi_i(x),
    \end{aligned}
\end{equation}
where $\boldsymbol{w} = (w_1, ..., w_k)^T \in \R^{k}$ is a set of random weights and $\mathbb{B}_k = \{\varphi_i(x), i \in [k]\}$ is a set of $k$ predetermined basis functions.
Given a set of test functions $\mathbb{T}_k = \{\phi_i(x); i \in [k]\}$, the distribution of the unknown weight vector $\boldsymbol{w}$ is defined such that 
\begin{equation}\label{weakSol}
\begin{aligned}
\langle L\tilde{g}_k(x) , \phi_i(x)\rangle &\overset{d}= \sigma \langle Lg(x) , \phi_i(x)\rangle \\ 
&= \sigma \langle \xi(x) , \phi_i(x)\rangle,
\end{aligned}
\end{equation}
for any test function $\phi_i \in \mathbb{T}_k$, where $\langle f_1(x) , f_2(x)\rangle := \int_\Omega f_1(x)f_2(x) dx$ denotes inner-product between any $f_1,f_2 \in L^2$.
The requirement \cref{weakSol} can be vectorized as
\begin{equation}\label{vectorizedWeakSol}
\begin{aligned}
\textbf{B}\boldsymbol{w} \overset{d}{=} N(\boldsymbol{0}, \sigma^2\textbf{T}),
\end{aligned}
\end{equation}
where $\textbf{B}$ and $\textbf{T}$ are $k \times k$ matrices with entries:
\begin{equation}\label{vectorizedRHS}
    \begin{aligned}
        \textbf{B}_{ij} = \langle \phi_i(x), L\varphi_j(x)\rangle, \quad
        \textbf{T}_{ij} =  \langle \phi_i(x) , \phi_j(x)\rangle.
    \end{aligned}
\end{equation}
When both $\textbf{B}$ and $\textbf{T}$ are non-singular, the requirement in \cref{weakSol} implies $\boldsymbol{w} \sim N(\boldsymbol{0}, \sigma^2 \boldsymbol{\Sigma}_{\boldsymbol{w}})$ where the precision matrix $\boldsymbol{\Sigma}^{-1}_{\boldsymbol{w}} = \textbf{B}^{T} \textbf{T}^{-1} \textbf{B}$.

\subsection{Least Square Approximation with Cubic B-Spline Basis}\label{sec:Bspline}

Given a set of locations $\boldsymbol{s}$, the FEM approximation at these locations can be written as $\tilde{\boldsymbol{g}}_k(\boldsymbol{s}) = \boldsymbol{\Phi} \boldsymbol{w}$ where $\boldsymbol{\Phi}$ is $n \times k$ design matrix with $ij$ element $\boldsymbol{\Phi}_{ij} = \varphi_j(s_i)$.
% Therefore $\tilde{g}_k(\boldsymbol{s}) \sim N(\boldsymbol{0}, \boldsymbol{\Phi} \Sigma_{\boldsymbol{w}} \boldsymbol{\Phi} ^T)$.
The computational bottleneck of the above FEM approximation therefore depends on both $\boldsymbol{\Sigma}^{-1}_{\boldsymbol{w}}$ and $\boldsymbol{\Phi}$. We aim to find a FEM approximation such that both $\boldsymbol{\Sigma}^{-1}_{\boldsymbol{w}}$ and $\boldsymbol{\Phi}$ can be computed efficiently and have sparse structures.

To simplify the computation of the precision matrix $\boldsymbol{\Sigma}^{-1}_{\boldsymbol{w}}$, we consider a least square approach to construct the FEM approximation, which corresponds to setting the test function $\phi_i(x) = L \varphi_i(x)$ for each $i \in [k]$. 
This strategy reduces the precision matrix to $\boldsymbol{\Sigma}^{-1}_{\boldsymbol{w}} = \textbf{T}^{T} \textbf{T}^{-1} \textbf{T} = \textbf{T}$, and hence avoids the need to compute another matrix $\textbf{B}$ or to invert the matrix $\textbf{T}$ \citep{iwpus}.

It remains to choose a set of basis functions $\mathbb{B}_k$ such that the corresponding matrices $\textbf{T}$ and $\boldsymbol{\Phi}$ are sparse. Given any basis $\mathbb{B}_k$, it can be shown that the precision matrix $\textbf{T}$ can be computed as:
\begin{equation}\label{equ:least-square-precision}
    \begin{aligned}
    \textbf{T} = a^4\textbf{G} + \textbf{C} + a^2\textbf{M},
    \end{aligned}
\end{equation}
with $\textbf{G}_{ij} =  \langle \varphi_i , \varphi_j \rangle$, $\textbf{C}_{ij} = \langle \frac{d^2\varphi_i}{dx^2} , \frac{d^2\varphi_j}{dx^2} \rangle$ and $\textbf{M}_{ij} = \langle \varphi_i, \frac{d^2\varphi_j}{dx^2} \rangle + \langle \frac{d^2\varphi_i}{dx^2}, \varphi_j \rangle$. 
The detailed derivation is provided in Appendix C.
Given \cref{equ:least-square-precision}, it is natural to consider using the cubic B-spline basis as $\mathbb{B}_k$ in the approximation, which ensures the sparsity of $\textbf{T}$ with $O(8k)$ nonzero elements, and the sparsity of $\boldsymbol{\Phi}$ with $O(4k)$ nonzero elements \citep{Bsplines}.
The cubic B-spline basis yields a computationally efficient approximation to the true sGP. Furthermore, the following \cref{theorem:B-spline-convergence} guarantees the convergence of the approximation to the true sGP process as the number of basis $k$ increases.
\begin{theorem}[Convergence of B-spline Approximation]\label{theorem:B-spline-convergence} Let $\Omega = [a,b]$ where $a,b \in \mathbb{R}^+$ and let $g \sim \text{sGP}(\alpha, \sigma)$. Assume $\mathbb{B}_k$ is a set of $k$ cubic B-splines constructed with equally spaced knots over $\Omega$, and $\tilde{g}_k$ denotes the corresponding FEM approximation defined as in \cref{equ:FEMapprox}, then:
$$\lim_{k\to \infty} \C_{k}(x_1,x_2) = \C(x_1,x_2),$$
for any $x_1,x_2 \in \Omega$, where $\C(x_1,x_2) = \Cov[g(x_1),g(x_2)]$, $\C_{k}(x_1,x_2) = \Cov[\tilde{g}_k(x_1),\tilde{g}_k(x_2)]$.
\end{theorem}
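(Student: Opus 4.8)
The plan is to use that $g$ and $\tilde g_k$ are both mean-zero Gaussian processes, so the claim reduces to the pointwise convergence of covariances, and to realize both processes as white-noise integrals against a \emph{common} noise $\xi$ so that each covariance becomes an $L^2(\Omega)$ inner product of a deterministic ``representer'' function. First I would write the true process through the Green's function $G_x$ of $L$ associated with the initial conditions \eqref{equ:boundaryCondition} (anchored at the left endpoint of $\Omega$, as permitted by the discussion following that display), namely $g(x) = \sigma\langle \xi, G_x\rangle$, so that $\C(x_1,x_2) = \sigma^2\langle G_{x_1}, G_{x_2}\rangle$. For the approximation I would couple the least-squares weights to the same noise by taking $\boldsymbol{w} = \sigma\,\textbf{T}^{-1}\boldsymbol{z}$ with $z_l = \langle \xi, L\varphi_l\rangle$; since $\Cov(\boldsymbol{z}) = \textbf{T}$ this reproduces $\boldsymbol{w}\sim N(\boldsymbol{0},\sigma^2\textbf{T}^{-1})$, and substituting into \eqref{equ:FEMapprox} gives $\tilde g_k(x) = \sigma\langle \xi, h_{k,x}\rangle$ with $h_{k,x} = \sum_{i,j}\varphi_i(x)(\textbf{T}^{-1})_{ij}L\varphi_j \in W_k := \mathrm{span}\{L\varphi_1,\dots,L\varphi_k\}$. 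Hence $\C_k(x_1,x_2) = \sigma^2\langle h_{k,x_1}, h_{k,x_2}\rangle$, and it suffices to prove $h_{k,x}\to G_x$ in $L^2(\Omega)$ for each fixed $x\in\Omega$.

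The second step identifies $h_{k,x}$. Using $\textbf{T}^{-1}\textbf{T}=I$ together with $\textbf{T}_{jl} = \langle L\varphi_j, L\varphi_l\rangle$, one obtains the characterization $\langle h_{k,x}, L\varphi_l\rangle = \varphi_l(x)$ for every $l$. I would then compare $h_{k,x}$ with the orthogonal projection $P_{W_k}G_x$, which instead satisfies $\langle P_{W_k}G_x, L\varphi_l\rangle = \langle G_x, L\varphi_l\rangle$. Integrating by parts twice moves $L$ onto $G_x$; the Green's-function jump at $u=x$ supplies the diagonal contribution $\varphi_l(x)$, while the endpoint terms supply a boundary contribution $b_l(x)$ involving $\varphi_l$ and $\varphi_l'$ at the left endpoint, so that $\langle G_x, L\varphi_l\rangle = \varphi_l(x) - b_l(x)$. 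Consequently $h_{k,x} = P_{W_k}G_x + r_{k,x}$, where $r_{k,x}\in W_k$ is the unique element with $\langle r_{k,x}, L\varphi_l\rangle = b_l(x)$.

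Convergence then rests on two estimates. For the projection error I would show that $W_k = L(\mathrm{span}\,\mathbb{B}_k)$ is dense in $L^2(\Omega)$: given $f\in L^2(\Omega)$, solve the ODE $Lu=f$ for some $u\in H^2(\Omega)$, approximate $u$ by a cubic spline $v_k\in\mathrm{span}\,\mathbb{B}_k$ with $\|u-v_k\|_{H^2}\to 0$ as the knot spacing $h\to 0$, and note $\|f-Lv_k\|_{L^2}\le C\|u-v_k\|_{H^2}\to 0$; applied to $f=G_x\in L^2(\Omega)$ this gives $\|P_{W_k}G_x - G_x\|_{L^2} = \mathrm{dist}(G_x,W_k)\to 0$. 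For the boundary residual I would use $\|r_{k,x}\|^2 = \boldsymbol{b}(x)^T\textbf{T}^{-1}\boldsymbol{b}(x)$; since $\boldsymbol{b}(x)$ is supported only on the $O(1)$ basis functions meeting the left endpoint and $\textbf{T} = \alpha^4\textbf{G}+\textbf{C}+\alpha^2\textbf{M}$ is dominated by the stiffness block $\textbf{C}$, tracking the entry scalings forces $\|r_{k,x}\|^2 = O(h)\to 0$. Combining the two yields $h_{k,x}\to G_x$ in $L^2(\Omega)$, and then bilinearity of the inner product together with the uniform bound $\|h_{k,x}\|\le \|G_x\| + o(1)$ gives $\langle h_{k,x_1}, h_{k,x_2}\rangle\to\langle G_{x_1}, G_{x_2}\rangle$, i.e. $\C_k(x_1,x_2)\to\C(x_1,x_2)$.

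The main obstacle I anticipate is precisely the boundary term $b_l(x)$: because the cubic B-spline space $\mathbb{B}_k$ does not enforce the initial conditions $g(0)=g'(0)=0$, the representer $h_{k,x}$ is \emph{not} exactly the $L^2$-projection of the Green's function, and one must show that the resulting boundary-layer residual $r_{k,x}$ — assembled from basis functions whose derivatives blow up like $h^{-1}$ near the endpoint — nevertheless vanishes in $L^2$. Establishing the correct scaling $\boldsymbol{b}(x)^T\textbf{T}^{-1}\boldsymbol{b}(x) = O(h)$, which requires controlling the corner of the inverse of the stiffness-dominated banded matrix $\textbf{T}$, is the technical heart of the argument; the remaining spline-approximation and inner-product continuity steps are standard.
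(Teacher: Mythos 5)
Your reduction of the theorem to $L^2(\Omega)$-convergence of representers is sound, and it is in fact the paper's own argument viewed on the other side of the operator $L$: the paper (Appendix D) projects the covariance functions $C_{x_1}$ onto the spline space in the inner product $\langle Lf, Lh\rangle_{L^2(\Omega)}$, and applying $L$ to that projection gives exactly your $P_{W_k}G_{x_1}$; the paper then gets a rate $O(1/k)$ from Schultz's spline approximation theorem where you use a qualitative density argument. The decisive difference, and the genuine gap, is your treatment of the boundary. The paper's proof takes the spline space to satisfy the initial conditions: it works inside $\{f \in H^3(\Omega): f(0)=f'(0)=0\}$ and explicitly notes that the B-spline basis is constructed so that its span lies in this constrained space, which makes your boundary term $b_l(x)$ identically zero and the representer exactly the projection. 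You instead keep the unconstrained cubic B-spline basis and propose to prove $\|r_{k,x}\|^2 = \boldsymbol{b}(x)^T\mathbf{T}^{-1}\boldsymbol{b}(x) = O(h)$. That step is not merely unproven; it is false, and the theorem itself fails for the unconstrained basis.

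The reason is that the unconstrained spline space nearly contains the null space of $L$. Let $v_k = \sum_l c_l \varphi_l$ be a cubic-spline approximant of $\cos(\alpha u)$ with $\|v_k - \cos(\alpha \cdot)\|_{H^2(\Omega)} = O(h^2)$, where $h \asymp 1/k$ is the knot spacing. Then
\begin{equation*}
\boldsymbol{c}^T\mathbf{T}\boldsymbol{c} = \|Lv_k\|^2_{L^2(\Omega)} = O(h^4),
\qquad
\boldsymbol{c}^T\boldsymbol{b}(x) = \cos(\alpha x)\, v_k(0) + \frac{\sin(\alpha x)}{\alpha}\, v_k'(0) \longrightarrow \cos(\alpha x),
\end{equation*}
so by Cauchy--Schwarz in the $\mathbf{T}$-inner product,
\begin{equation*}
\boldsymbol{b}(x)^T\mathbf{T}^{-1}\boldsymbol{b}(x) \;\ge\; \frac{\bigl(\boldsymbol{c}^T\boldsymbol{b}(x)\bigr)^2}{\boldsymbol{c}^T\mathbf{T}\boldsymbol{c}} \;\gtrsim\; \frac{\cos^2(\alpha x)}{h^4} \;\longrightarrow\; \infty,
\end{equation*}
and the spline approximant of $\sin(\alpha u)$ covers the points where $\cos(\alpha x)=0$. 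So the residual diverges rather than vanishes. Your heuristic that $\mathbf{T}$ is dominated by the stiffness block fails in exactly these directions: in $\|Lv\|^2 = \|\alpha^2 v + v''\|^2$ the two terms cancel when $v$ is near the sinusoidal null space, which is what makes $\mathbf{T}$ asymptotically singular. The same lower bound with $\boldsymbol{\Phi}(x)$ in place of $\boldsymbol{b}(x)$ shows $C_k(x,x) = \sigma^2\boldsymbol{\Phi}(x)^T\mathbf{T}^{-1}\boldsymbol{\Phi}(x) \to \infty$: with an unconstrained basis, the least-squares weak formulation leaves the (approximate) null-space component of $\tilde g_k$ unpenalized, so its variance explodes instead of converging. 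The fix is not a sharper estimate of the corner of $\mathbf{T}^{-1}$ but the structural one the paper adopts: impose $\varphi_l(0)=\varphi_l'(0)=0$ on the basis (after the reduction that anchors the initial conditions at the left endpoint of $\Omega$). Then $b_l \equiv 0$, $h_{k,x} = P_{W_k}G_x$ exactly, and the rest of your argument — density of $W_k$, projection convergence, bilinearity — goes through and recovers the paper's conclusion.
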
 
\noindent The proof of this theorem is provided in Appendix D.

\subsection{Improving Approximation Accuracy with sB-Spline}\label{sec:sBspline}

% The FEM approximation derived with the least square approach and the cubic B-spline basis in the above section is computationally efficient, and with \cref{theorem:B-spline-convergence} guarantees its approximation quality when the number of basis $k$ is large enough. However in practice, when only a limited number of basis is used, the approximation error from the cubic B-spline approximation could be non-negligible. 

Although \cref{theorem:B-spline-convergence} guarantees the limiting property of the cubic B-spline approximation, the error from the cubic B-spline approximation could be non-negligible when only a limited number of basis is used, which is a typical situation in real applications.
To see that, note the covariance function (\ref{equ:sGP-cov}) at a fixed location $x_1$ can be written as:
\begin{equation}
    \begin{aligned}
        C_{x_1}(x) = C(x_1,x) = l_{x_1}(x) \cos(\alpha x) + \tilde{l}_{x_1}(x) \sin(\alpha x), \\
    \end{aligned}
\end{equation}
where $l_{x_1}(x)$ and $\tilde{l}_{x_1}(x)$ are piecewise linear functions with a single knot at $x_1$. In order to accurately approximate the distribution of the original sGP, the basis $\mathbb{B}_k$ needs to be flexible enough to approximate well the covariance function $C_{x_1}(x)$ for any $x_1 \in \Omega$. 
The cubic B-spline basis can well approximate the covariance function when $\alpha$ is small relative to the width of $\Omega$. 
However, when $\alpha$ is large relative to the width of $\Omega$, the covariance function will be too oscillating to be approximated well by the cubic B-spline basis. See \cref{fig:fastConverg} for example.

\begin{figure}[!h]
    \centering
             \subfigure[$\alpha = 2\pi/5$]{
      \includegraphics[width=0.45\textwidth]{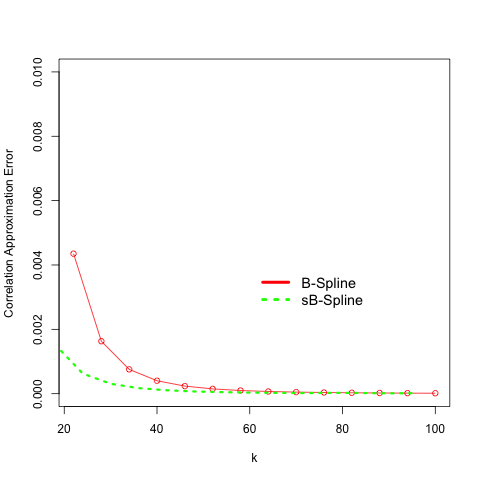}
    }
             \subfigure[$\alpha = 2\pi$]{
      \includegraphics[width=0.45\textwidth]{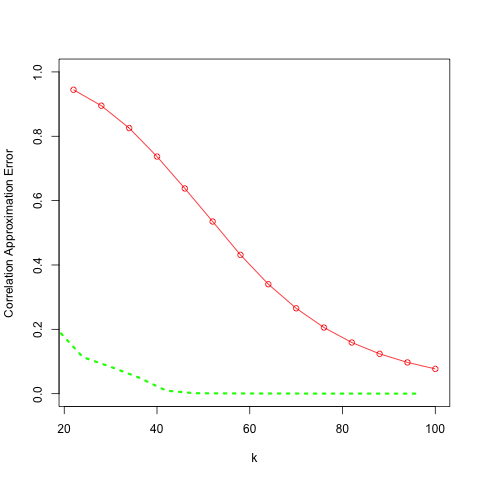}
    }
     \caption{Assessing the approximation accuracy with cubic B-spline (red) and sB-spline (green) basis. 
     In order to make the y-axis comparable, we focus on the correlation function $\rho_{5}(x)$ instead of the covariance function.
     The y-axis shows the maximum correlation approximation error $|\rho_5(x) - \tilde{\rho}_5(x)|$ for $x \in [1,9]$, where $\rho_5(x) = \text{Corr}[g(5),g(x)]$ and $\tilde{\rho}_5(x) = \text{Corr}[\tilde{g}_k(5),\tilde{g}_k(x)]$ are respectively the correlation functions of the true sGP and of the approximation. The interval $\Omega = [0,10]$, $\sigma = 1$ and $\alpha = 2\pi$ or $2\pi/5$.}
    \label{fig:fastConverg}
\end{figure}

Given $C_{x_1}(x)$ is itself a piece-wise polynomial damped by sinusoidal functions, we increase the accuracy of our approximation by augmenting the B-spline basis with B-splines that are also damped by sinusoidal functions.
We call this new basis the seasonal B-spline ($\textit{sB-spline}$) basis. 
The detailed construction of the sB-spline is given as the following. We first construct $r \in \mathbb{Z}^+$ cubic B-spline basis $\{b_i(x)\}_{i=1}^r$ over $\Omega$. Then, we augment the cubic B-spline basis with its damped versions and get
\begin{equation}\label{equ:sBspline}
    \begin{aligned}
     \mathbb{B}_k = \{b_i(x)\}_{i=1}^{r} \cup \{b_i(x)\cos(\alpha x)\}_{i=1}^{r} \cup \{b_i(x)\sin(\alpha x)\}_{i=1}^{r},
    \end{aligned}
\end{equation}
where $k = 3r$. Because the sB-spline basis is directly obtained by damping B-spline it inherits the compact support of the B-spline basis \citep{Bsplines}. Therefore each of $\textbf{G},\textbf{C}$ and $\textbf{M}$ is still a banded sparse matrix, and so will be the precision matrix $\Sigma_{\boldsymbol{w}}^{-1}$ and the design matrix $\boldsymbol{\Phi}$.

While preserving the original computational efficiency from cubic B-spline basis, the sB-spline approximation has significantly better approximation accuracy, particular when $k$ is small and $\alpha$ is large. This can be seen from \cref{fig:fastConverg}; When $\alpha = 2\pi$, the sB-spline approximation reduces the correlation error to less than $0.2$ with $k = 20$, but the original cubic B-spline requires $k$ more than $80$ to achieve the same accuracy. The sB-spline approximation also inherits the convergence result in \cref{theorem:B-spline-convergence}, as summarized in the corollary below:
\begin{corollary}[Convergence of sB-spline Approximation]\label{cor:sB-spline-convergence} 
Let $\Omega = [a,b]$ where $a,b \in \mathbb{R}^+$ and let $g \sim \text{sGP}$ with parameters $\alpha, \sigma >0$.
Assume $\mathbb{B}_k$ is a set of $k$ sB-splines constructed with equally spaced knots over $\Omega$ as defined in \cref{equ:sBspline}, and $\tilde{g}_k$ denotes the corresponding FEM approximation defined as in \cref{equ:FEMapprox}, then:
$$\lim_{k\to \infty} \C_{k}(x_1,x_2) = \C(x_1,x_2),$$
for any $x_1,x_2 \in \Omega$, where $\C(x_1,x_2) = \Cov[g(x_1),g(x_2)]$, $\C_{k}(x_1,x_2) = \Cov[\tilde{g}_k(x_1),\tilde{g}_k(x_2)]$.
\end{corollary}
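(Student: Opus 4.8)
The plan is to reduce the corollary to the already-established Theorem~\ref{theorem:B-spline-convergence} by exploiting the fact that the sB-spline basis in \cref{equ:sBspline} contains the plain cubic B-spline collection $\{b_i\}_{i=1}^{r}$ as a sub-basis, so that enlarging the basis can only improve the underlying least-squares approximation. The first step is to recast the least-squares FEM approximation in projection form. Since the least-squares choice sets $\phi_i = L\varphi_i$, the weak requirement \cref{weakSol} states precisely that $L\tilde g_k$ is the $L^2(\Omega)$-orthogonal projection of $\sigma\xi$ onto $V_k := \mathrm{span}\{L\varphi_i : i\in[k]\}$. Writing $G_x$ for the Green's function of $L$ under the initial conditions \cref{equ:boundaryCondition} (so that $\C(x_1,x_2)=\sigma^2\langle G_{x_1},G_{x_2}\rangle$, with $G_x\in L^2(\Omega)$ because $\Omega$ is bounded), self-adjointness of the Green's operator together with the white-noise isometry yields $\tilde g_k(x)=\sigma\langle P_k G_x,\xi\rangle$, where $P_k$ denotes the orthogonal projection onto $V_k$; hence $\C_k(x_1,x_2)=\sigma^2\langle P_k G_{x_1},P_k G_{x_2}\rangle$. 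Letting $R_k:=I-P_k$ and using idempotence and self-adjointness of $P_k$, a short computation gives the exact error identity
\begin{equation*}
\C(x_1,x_2) - \C_k(x_1,x_2) = \sigma^2\langle R_k G_{x_1}, R_k G_{x_2}\rangle,
\end{equation*}
so that by Cauchy--Schwarz
\begin{equation*}
|\C_k(x_1,x_2) - \C(x_1,x_2)| \le \sigma^2\,\mathrm{dist}(G_{x_1}, V_k)\,\mathrm{dist}(G_{x_2}, V_k).
\end{equation*}
Thus it suffices to prove that $\mathrm{dist}(G_x, V_k)\to 0$ for every $x\in\Omega$.

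The second step obtains this distance convergence for free, without reopening Appendix~D. Specializing the error identity to $x_1=x_2=x$ gives $\mathrm{dist}(G_x, V^{B}_r)^2 = \sigma^{-2}\big(\C(x,x)-\C^{B}_r(x,x)\big)$, where $V^{B}_r$ is the space generated by the $r$ plain cubic B-splines. Theorem~\ref{theorem:B-spline-convergence} asserts $\C^{B}_r(x,x)\to\C(x,x)$, so $\mathrm{dist}(G_x, V^{B}_r)\to 0$ as $r\to\infty$. Now I invoke the nesting: because $\mathbb{B}_k$ in \cref{equ:sBspline} contains $\{b_i\}_{i=1}^{r}$, the associated FEM spaces satisfy $V^{sB}_{3r}\supseteq V^{B}_r$, and the variational characterization of the orthogonal projection gives $\mathrm{dist}(G_x, V^{sB}_{3r})\le \mathrm{dist}(G_x, V^{B}_r)$. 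Combining the two facts yields $\mathrm{dist}(G_x, V^{sB}_{3r})\to 0$, and since $k=3r\to\infty$ is equivalent to $r\to\infty$, the displayed bound forces $\C_k(x_1,x_2)\to\C(x_1,x_2)$ for all $x_1,x_2\in\Omega$, which is the claim.

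The genuinely new content is only the nesting argument; the main obstacle is the bookkeeping in the first step, namely justifying the projection representation $\C_k=\sigma^2\langle P_k G_{x_1},P_k G_{x_2}\rangle$ rigorously. This requires confirming that $V_k$ is a well-defined subspace of $L^2(\Omega)$ on which $P_k$ acts (equivalently that $\mathbf{T}$ is nonsingular, so the $L\varphi_i$ are linearly independent), that $G_x\in L^2(\Omega)$ and the Green's operator is self-adjoint under \cref{equ:boundaryCondition}, and that the white-noise pairings are legitimate. All of this apparatus already underlies the proof of Theorem~\ref{theorem:B-spline-convergence}, so I would simply reuse it; once the projection identity is in hand, the corollary follows in the few lines above.
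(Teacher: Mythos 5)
Your proposal is correct, and at its core it travels the same route as the paper: the paper's entire proof of this corollary is the one-sentence observation that the basis in \cref{equ:sBspline} contains the plain cubic B-splines, so that convergence is inherited by nesting. What you supply is the scaffolding that makes that sentence rigorous, plus one genuinely nice twist. Your projection representation $\C_k(x_1,x_2)=\sigma^2\langle P_kG_{x_1},P_kG_{x_2}\rangle$ and the exact error identity are, modulo the isometry $f\mapsto Lf$ between the constrained Sobolev space and $L^2(\Omega)$, precisely the content of \cref{lemma:RK} in Appendix D (which is proved there for a generic least-squares basis, so it does apply to the damped basis); the monotonicity of the projection error under enlargement of the approximation space is exactly the unstated step that the paper's one-liner relies on. The twist is that, where a literal expansion of the paper's argument would re-invoke \cref{lemma:SplineApproxi} (the Schultz spline estimate) to get $\mathrm{dist}(G_x,V^{B}_r)\to 0$, you recover this from the \emph{statement} of \cref{theorem:B-spline-convergence} by setting $x_1=x_2=x$ in the error identity, so the theorem is used as a true black box; this is cleaner and avoids reopening Appendix D. One caveat: your closing claim that nonsingularity of $\mathbf{T}$ for the sB-spline basis can simply be ``reused'' from the earlier proof is not quite right, because linear independence of $\{L\varphi_i\}$ is a genuinely different check for the damped basis. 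For instance, if the underlying B-splines formed a partition of unity on $\Omega$, then $\cos(\alpha x)=\sum_i b_i(x)\cos(\alpha x)$ would lie in $\mathrm{span}\{\mathbb{B}_k\}\cap\mathrm{Null}\{L\}$ and $\mathbf{T}$ would be singular; it is the boundary constraints $\varphi_i(0)=\varphi_i'(0)=0$ built into the construction that rule this out. Since the corollary presupposes that the FEM approximation $\tilde{g}_k$ is well defined, this does not invalidate your argument, but it deserves its own short verification rather than an appeal to reuse.
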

\begin{proof}
    This corollary directly follows from \cref{theorem:B-spline-convergence} with the observation that \cref{equ:sBspline} contains the original cubic B-spline basis.
\end{proof}

\section{Prior for the sGP}\label{sec:prior}

% In order to make Bayesian inference with $g\sim \text{sGP}(\alpha,\sigma)$, it is important to define prior on its SD parameter $\sigma$ in an interpretable way. 
% Although the SD parameter $\sigma$ can be viewed as the deviation size from the corresponding sinusoidal space $\text{Null}\{L\}$, its effect and interpretation on the fitted function vary significantly across different choice of $\alpha$.
To perform Bayesian inference with $g\sim \text{sGP}(\alpha,\sigma)$, it is crucial to define a prior on the standard deviation (SD) parameter $\sigma$ in an interpretable manner. However, the effect and interpretation of $\sigma$ on the fitted function vary significantly with the choice of $\alpha$, even though it can always be viewed as the deviation size from the corresponding sinusoidal space $\text{Null}\{L\}$.
Therefore, instead of assigning prior on the original SD $\sigma$, we propose to assign the prior on the predictive standard deviation (PSD) \citep{iwpus}, which is a one-to-one transformation of $\sigma$ with a consistent interpretation across different values of $\alpha$.

Given a once-differentiable function $g$ and a prediction unit $h>0$, the PSD is defined as:
\begin{equation}\label{equ:PSD-general}
    \text{SD}\bigg[g(x+h) \bigg| g(x), g^{(1)}(x)\bigg],
\end{equation}
which quantifies the uncertainty in predicting the function $g$ at $h$ units ahead, using the current value and the value of its derivative. 
In practice, the unit $h$ can be chosen based on the practical requirements of the application.
For the sGP, the PSD has a simple formula, as given in the following corollary:
% In real applications, the unit $h$ can be selected accordingly to the practical relevance.
% For the sGP, its PSD has a simple formula as given in the following corollary:
\begin{corollary}[Predictive Standard Deviation of the sGP]\label{cor:PSD} 
Let $g \sim \text{sGP}$ with parameters $\alpha, \sigma >0$, then:
\begin{equation}\label{equ:PSD-sGP}
    \begin{aligned}
    \sigma(h) &= \text{SD}\bigg[g(x+h) \bigg| g(x), g^{(1)}(x)\bigg]
    = \frac{\sigma}{\alpha} \sqrt{\bigg(\frac{h}{2} - \frac{\sin(2\alpha h)}{4\alpha}\bigg)},
    \end{aligned}
\end{equation}
for any $x,h \in \mathbb{R}^+$.
\end{corollary}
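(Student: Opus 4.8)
The plan is to read the predictive standard deviation directly off the innovation covariance in the state-space representation of \cref{thrm:markov}. The crucial observation is that the conditioning event in \cref{equ:PSD-general} --- fixing $g(x)$ together with $g^{(1)}(x)$ --- is precisely conditioning on the augmented state $\boldsymbol{g}_{aug}(x) = [g(x), g'(x)]^T$. This is exactly what makes augmenting with the first derivative so convenient here, since it turns the conditional prediction problem into a one-step transition of the Markov model.

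First I would treat $x$ and $x+h$ as two consecutive locations in the notation of \cref{thrm:markov}, so that the spacing between them is $d = h$. Because the defining SDE \cref{equ:sGPdef} has constant coefficients, its transition is time-homogeneous: the matrices $\boldsymbol{R}$ and $\boldsymbol{\Sigma}$ depend only on the gap $h$ and not on the absolute location $x$, which is precisely the form already recorded in \cref{thrm:markov}. Applying that representation gives
$$\boldsymbol{g}_{aug}(x+h) = \boldsymbol{R}\,\boldsymbol{g}_{aug}(x) + \boldsymbol{\epsilon}, \qquad \boldsymbol{\epsilon}\sim N(\boldsymbol{0}, \boldsymbol{\Sigma}),$$
with $\boldsymbol{\epsilon}$ independent of $\boldsymbol{g}_{aug}(x)$ and $\boldsymbol{\Sigma}$ evaluated at $d=h$.

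Next I would condition on $\boldsymbol{g}_{aug}(x)$. Since $\boldsymbol{R}\,\boldsymbol{g}_{aug}(x)$ is then a fixed vector and $\boldsymbol{\epsilon}$ is independent of the conditioning information, the conditional law of $\boldsymbol{g}_{aug}(x+h)$ is Gaussian with mean $\boldsymbol{R}\,\boldsymbol{g}_{aug}(x)$ and covariance $\boldsymbol{\Sigma}$. The quantity of interest, $g(x+h)$, is the first coordinate of $\boldsymbol{g}_{aug}(x+h)$, so its conditional variance is simply the $(1,1)$ entry of $\boldsymbol{\Sigma}$, namely $\sigma^2 \alpha^{-2}\bigl(h/2 - \sin(2\alpha h)/(4\alpha)\bigr)$.

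Finally, taking the square root of this $(1,1)$ entry yields exactly \cref{equ:PSD-sGP}. There is no genuinely hard step: the only point deserving a word of care is the identification of the conditioning set $\{g(x), g^{(1)}(x)\}$ with the augmented state, and the independence of the innovation $\boldsymbol{\epsilon}$ from it --- both of which are immediate consequences of the Markov structure established in \cref{thrm:markov}.
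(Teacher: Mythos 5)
Your proposal is correct and takes exactly the route the paper intends: the paper's own proof is the one-line remark that the corollary is a direct consequence of the state-space representation in \cref{thrm:markov}, and your argument simply fills in the details of that remark --- identifying the conditioning set with the augmented state, invoking time-homogeneity so the gap is $d=h$, and reading the conditional variance off the $(1,1)$ entry of $\boldsymbol{\Sigma}$. Nothing is missing, and the independence of the innovation from the current state (via the disjoint white-noise integrals in the proof of \cref{thrm:markov}) is correctly flagged as the only point needing care.
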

\begin{proof}
This corollary is a direct consequence of the state-space representation in \cref{thrm:markov}.
\end{proof}

It is worth noting that the PSD $\sigma(h)$ of the sGP does not depend on the current location $x$, as shown in \cref{equ:PSD-sGP}. 
This feature distinguishes our approach from that of \cite{scalingigmrf}, which uses the marginal standard deviation $\text{SD}[g(x)]$ and hence requires a set of reference locations to be specified.
Additionally, the PSD in our approach is computed using the true sGP and is therefore invariant to the approximation choices, such as the number and placement of the knots.
Finally, we choose an exponential prior for $\sigma(h)$ that encourages $g$ to behave like a sinusoidal function, which is motivated by \cite{pcprior}.
The prior for the original $\sigma$ is then recovered by scaling the exponential prior for $\sigma(h)$ with $\alpha/\sqrt{(h/2 - \sin(2\alpha h)/4\alpha)}$.

% Note that the PSD $\sigma(h)$ of the sGP does not depend the current location $x$, as shown in \cref{equ:PSD-sGP}.
% This is different from the approach in \cite{scalingigmrf} which uses the marginal standard deviation $\text{SD}[g(x)]$, and requires a set of reference locations to be specified.
% Furthermore, the PSD in our approach is computed using the true sGP,
% and therefore is invariant to the approximation choices such as the number and placement of the knots.
% Motivated by \cite{pcprior}, we then choose an exponential prior for $\sigma(h)$ to encourage $g$ to behave like a sinusoidal function.
% The prior for the original $\sigma$ is then recovered by scaling the exponential prior for $\sigma(h)$ with $\alpha/\sqrt{(h/2 - \sin(2\alpha h)/4\alpha)}$.

\section{Examples}\label{sec:examples}

\subsection{Canadian Mortality}\label{sec:Mortality}

In this example, we utilize the sGP to analyze the cause-specific mortality counts in Ontario from Statistics Canada. These data were collected weekly from Jan 9th 2010 to January 1st 2022. 
To illustrate the methodology, we consider the mortality counts due to heart attack, influenza and accidental injuries. 
We fit models using data prior to March 1st, 2020 and use the remaining data to quantify the excess mortality of each cause during the COVID-19 pandemic.

% For all three datasets, we consider the following model:
% \begin{equation}
%     \begin{aligned}
%         y_i|\lambda_i &\sim \text{Poisson}(\lambda_i) ,\\
%         \log(\lambda_i) &= \eta_i = \beta_0 + \beta_1 x_i + g(x_i) + \xi_i,\\
%         g &\sim \text{sGP}(a = 2\pi, \sigma), \ 
%         \xi_i \sim N(0,\sigma_\xi^2),
%     \end{aligned}
% \end{equation}
% where $y_i$ denotes the mortality count and $x_i$ denotes the time of measurement since Jan 9th 2010, in years.

For all three datasets (heart attack, influenza, and accidental injuries), we use the following model:
\begin{equation}
    \begin{aligned}
        y_i|\lambda_i &\sim \text{Poisson}(\lambda_i) ,\\
        \log(\lambda_i) &= \eta_i = \beta_0 + \beta_1 x_i + g(x_i) + \xi_i,\\
        g &\sim \text{sGP}(a = 2\pi, \sigma), \ 
        \xi_i \sim N(0,\sigma_\xi^2),
    \end{aligned}
\end{equation}
where $y_i$ denotes the mortality count and $x_i$ denotes the time of measurement in years since Jan 9th 2010.
The observation-level random effect $\xi_i$ is used to capture the overdispersion. 
We assign independent normal priors with mean 0 and variance 1000 to the fixed effects ($\beta_0$ and $\beta_1$) and the boundary conditions of the sGP.
For the variance parameters $\sigma$ and $\sigma_\xi$, we use exponential priors such that:
\begin{equation}
    \begin{aligned}
        &\mathbb{P}(\sigma(1) > 0.01) = 0.5, \
        &\mathbb{P}(\sigma_{\xi} > 1) = 0.5,
    \end{aligned}
\end{equation}
where $\sigma(1)$ is the $1$-year PSD of the sGP $g$ defined in \cref{sec:prior}.

% The inference results of three kinds of mortality rates are summarized in \cref{fig:ResultofInflu1}. Compared to heart attack mortality, the mortality from Influenza exhibits a much stronger deviation from the yearly periodicity. For example in the influenza dataset, the peak in 2015 is much higher than the peak in the preceded year. This also results in bigger prediction uncertainty of $g$ in the Influenza data, as compared to the heart attack. 
% The seasonality in the injury data has a smaller effect than in influenza and heart attack mortality data.

To perform inference on the $1$-year sGP, we use the augmented space method described in \cref{subsec:sGP}.
The results of the analysis for the three mortality rates are presented in \cref{fig:ResultofInflu1}. The mortality due to influenza shows a much larger deviation from yearly periodicity compared to heart attack mortality. In particular, the peak mortality rate in 2015 is much higher than peaks in the preceding years. Consequently, the prediction uncertainty of $g$ is greater for influenza compared to for heart attack. On the other hand, the effect of seasonality in injury data is relatively small compared to that in influenza and heart attack data.
% The inferences of the total excess mortality during the COVID-19 period are displayed in \cref{fig:ResultofInflu2}. Based on the posteriors, there is strong evidence of a negative excess mortality in the influenza death counts and likely a positive excess mortality in the accidental injury death counts, but for the heart attack data, no excess mortality is suggested by the data in either directions. 
The inference results for total excess mortality during the COVID-19 period are presented in \cref{fig:ResultofInflu2}. The posteriors indicate strong evidence of negative excess mortality in influenza and likely positive excess mortality in accidental injury. However, for heart attack, there is no evidence of excess mortality in either direction.

\begin{figure}[!p]
    \centering
             \subfigure[]{
      \includegraphics[width=0.45\textwidth]{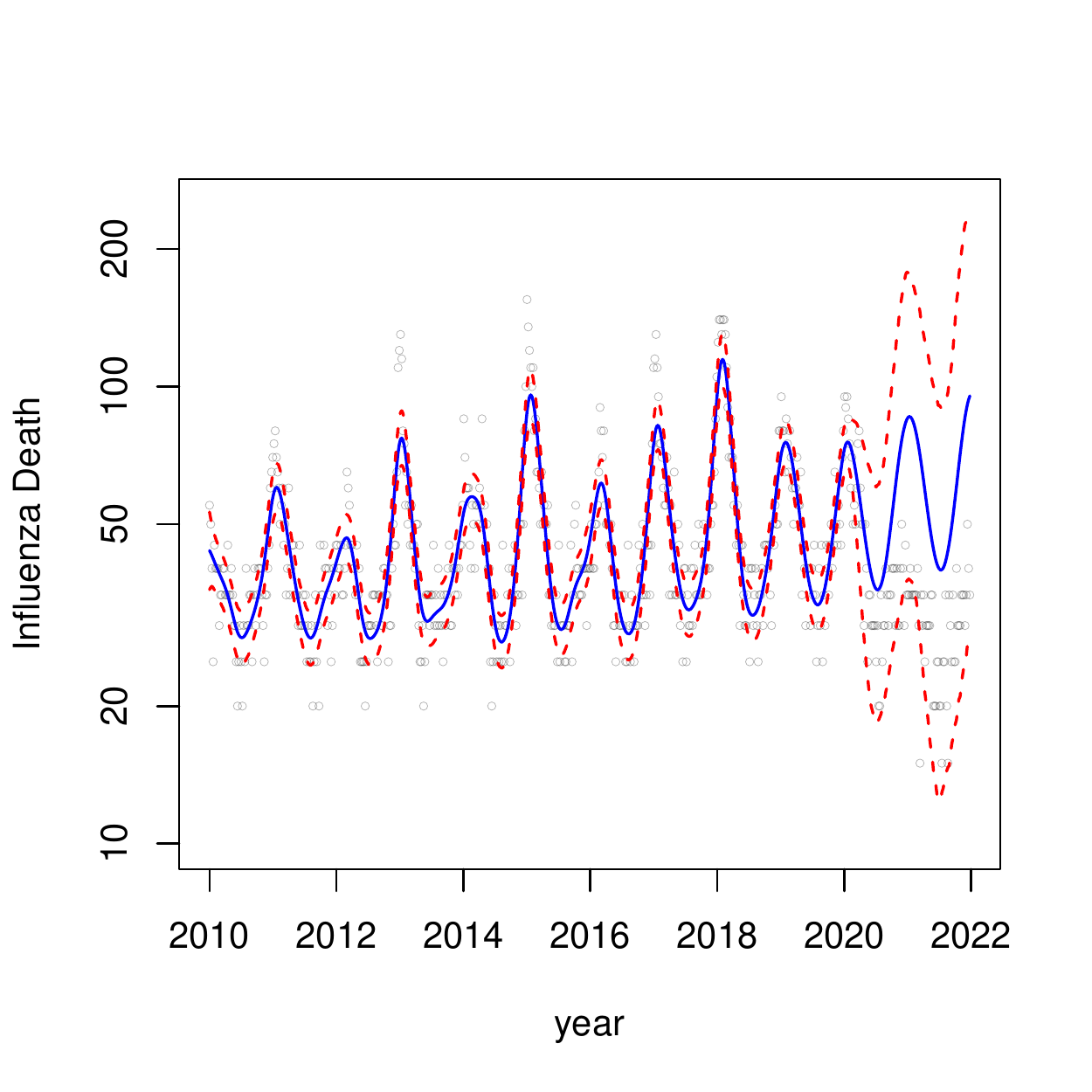}
    }
             \subfigure[]{
      \includegraphics[width=0.45\textwidth]{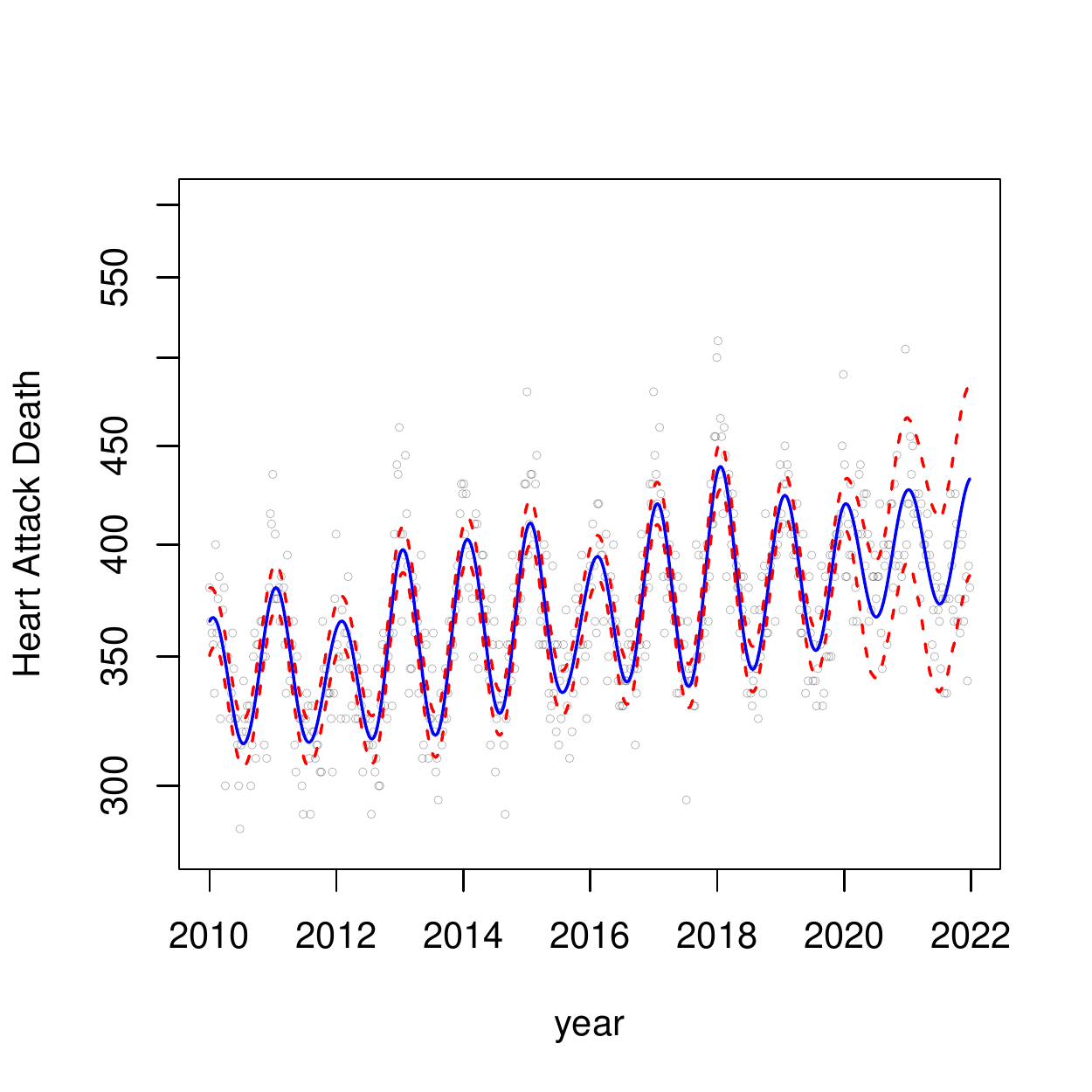}
    }
              \subfigure[]{
      \includegraphics[width=0.45\textwidth]{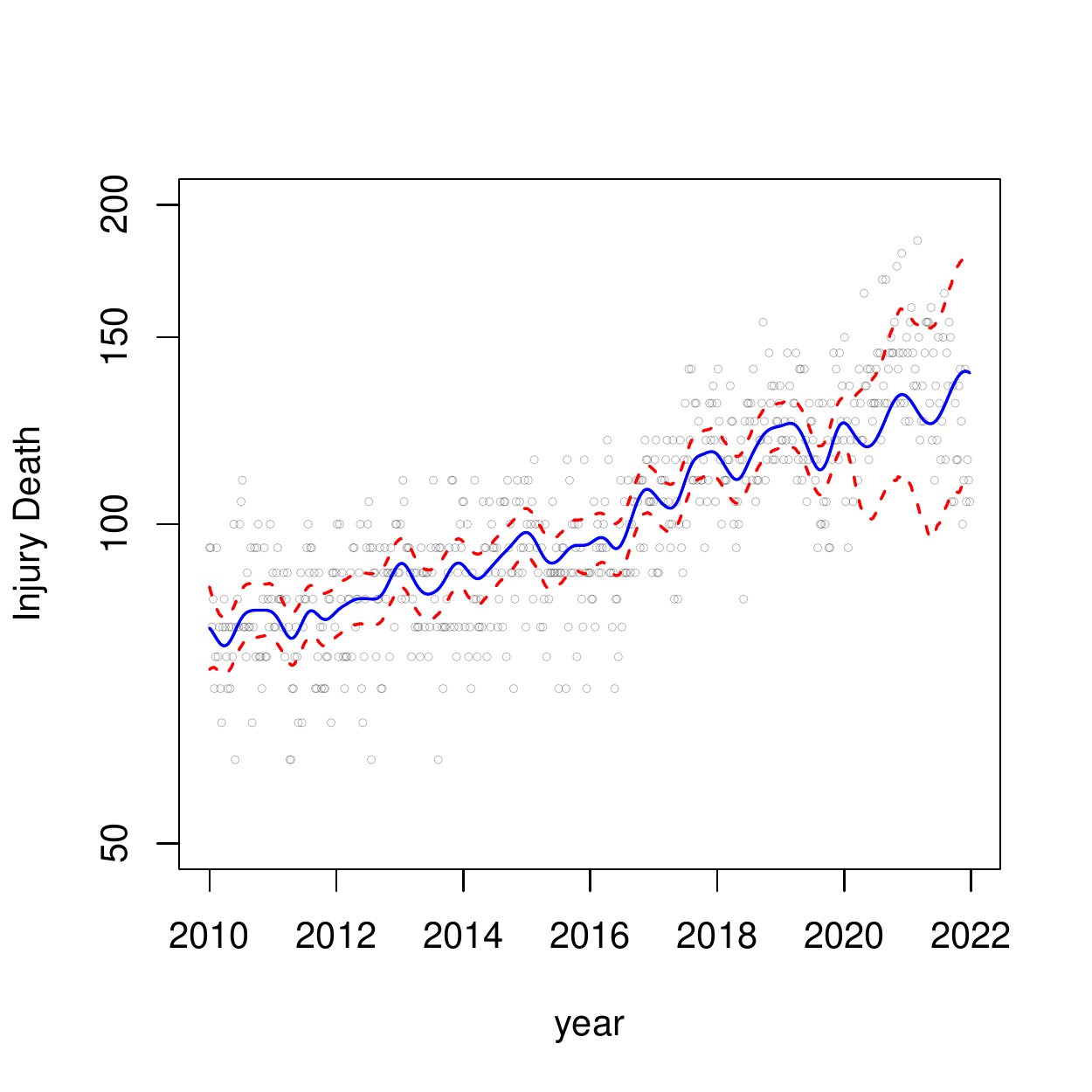}
    }
     \caption{Inferential Results for the mortality rates over time in \cref{sec:Mortality}. Figures (a)-(c) display the posterior distributions of $\exp [\beta_0 + \beta_1 x_i + g(x_i)]$ and the death counts (points). The posterior means are shown in the blue solid lines and the 95 percent posterior intervals are shown in the red dashed lines.}
    \label{fig:ResultofInflu1}
\end{figure}

\begin{figure}[!p]
    \centering
            \subfigure[]{
      \includegraphics[width=0.45\textwidth]{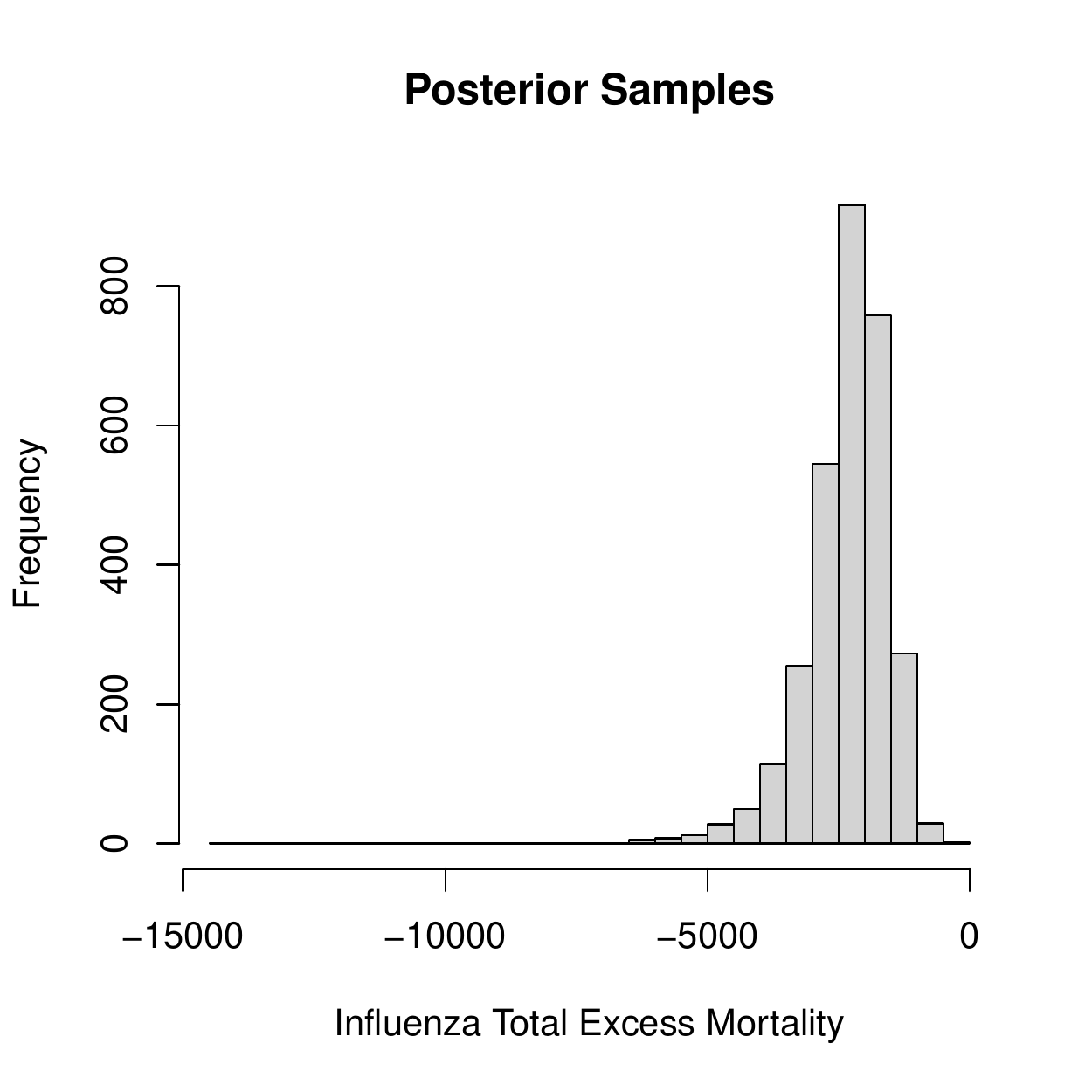}
    }
             \subfigure[]{
      \includegraphics[width=0.45\textwidth]{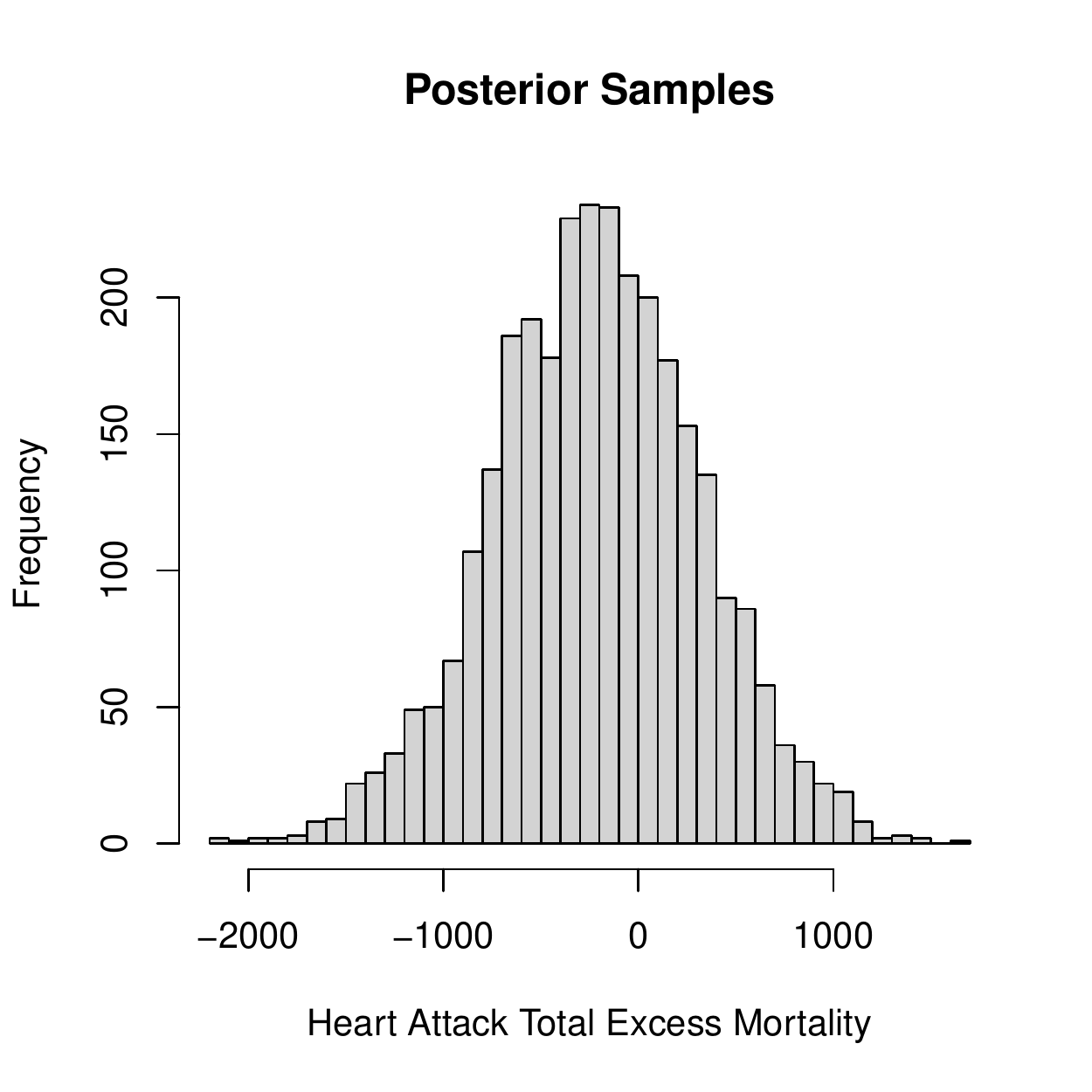}
    }
              \subfigure[]{
      \includegraphics[width=0.45\textwidth]{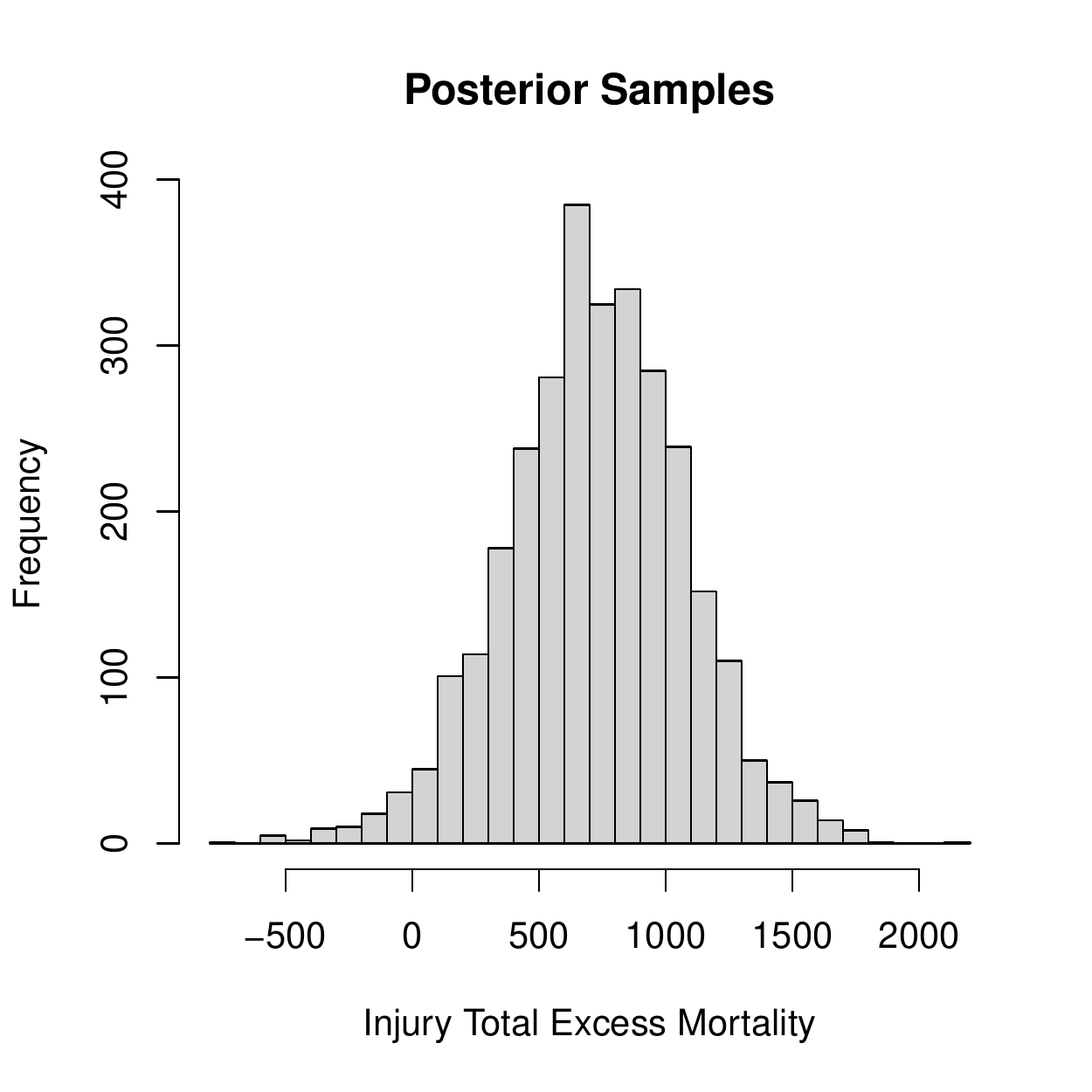}
    }
     \caption{Inferential Results for total excess mortality in \cref{sec:Mortality}. 
     Figures (a)-(c) show the posterior samples of the total excess mortality, where each sample is the sum of the difference between a posterior sample of the predicted death and the actual death counts.}
    \label{fig:ResultofInflu2}
\end{figure}

\subsection{Lynx Counts}\label{sec:Lynx}

In this section, we analyze the annual counts of trapped lynxes in Canada for 1821–1934 \citep{campbell1977survey}, with the following model:
\begin{equation}
    \begin{aligned}
        y_i|\lambda_i &\sim \text{Poisson}(\lambda_i) ,\ 
        \log(\lambda_i) = \eta_i = \beta_0 + g_{c}(x_i) + g_{c/2}(x_i) + \xi_i,\\
        g_{j} &\sim \text{sGP} \bigg(a = \frac{2\pi}{j}, \sigma_{j}\bigg)\text{ for } j \in \bigg\{c, \frac{c}{2}\bigg\}, \
        \xi_i \sim N(0,\sigma_\xi^2),
    \end{aligned}
\end{equation}
where $y_i$ denotes the counts of trapped lynxes in Canada and $x_i$ denotes the year since $1821$.
We assume the variation in lynxes counts is driven by two independent sGPs, with $c$ years ($\alpha = 2\pi/c$)  and $c/2$ years cycle ($\alpha = 4\pi/c$).
We assign independent exponential priors such that $\mathbb{P}(\sigma_j(50) > 1) = 0.01$ for each $j \in \{c, c/2\}$, and $\mathbb{P}(\sigma_\xi > 1) = 0.01$. All the boundary conditions in the sGPs and the intercept $\beta_0$ are assigned with independent normal prior $N(0,1000)$.
The year of periodicity $c$ is assumed unknown between $6$ years to $12$ years, and we place a discrete uniform prior for $c$ from $6$ to $12$ years with $0.1$ year spacing.

We carry out the inference of the sGPs with the state-space approach described in \cref{subsec:sGP}.
The inferential results are summarized in \cref{fig:ResultofLynx}. Based on \cref{fig:ResultofLynx}(a), it is most likely that the data exhibits a roughly 10.1 years cyclic variation, which is close to the duration of the solar cycle. The two sGPs well capture the quasi-periodic behaviors in the variation, as illustrated in \cref{fig:ResultofLynx}(b). 

\begin{figure}[!p]
    \centering
            \subfigure[]{
      \includegraphics[width=0.45\textwidth]{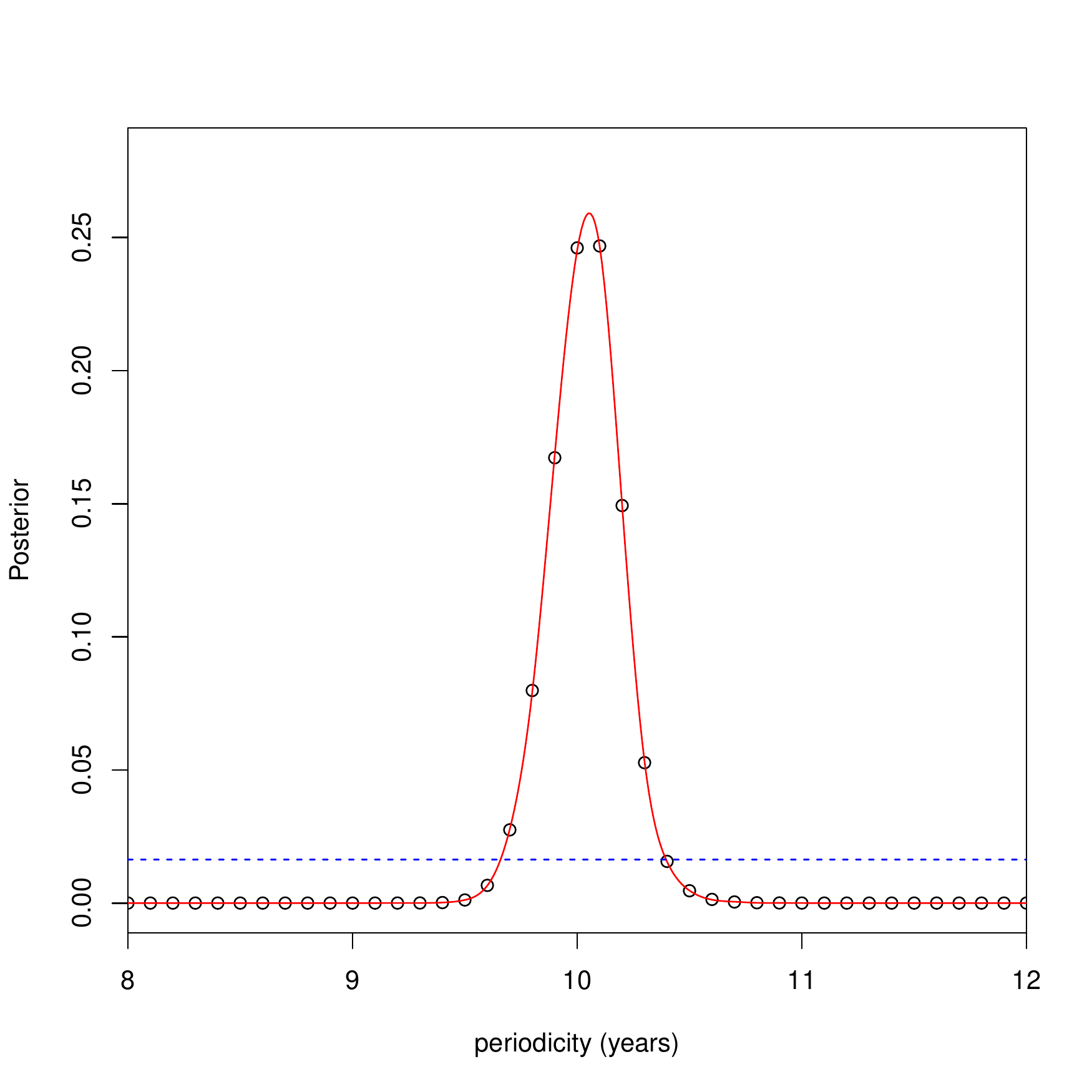}
    }
             \subfigure[]{
      \includegraphics[width=0.45\textwidth]{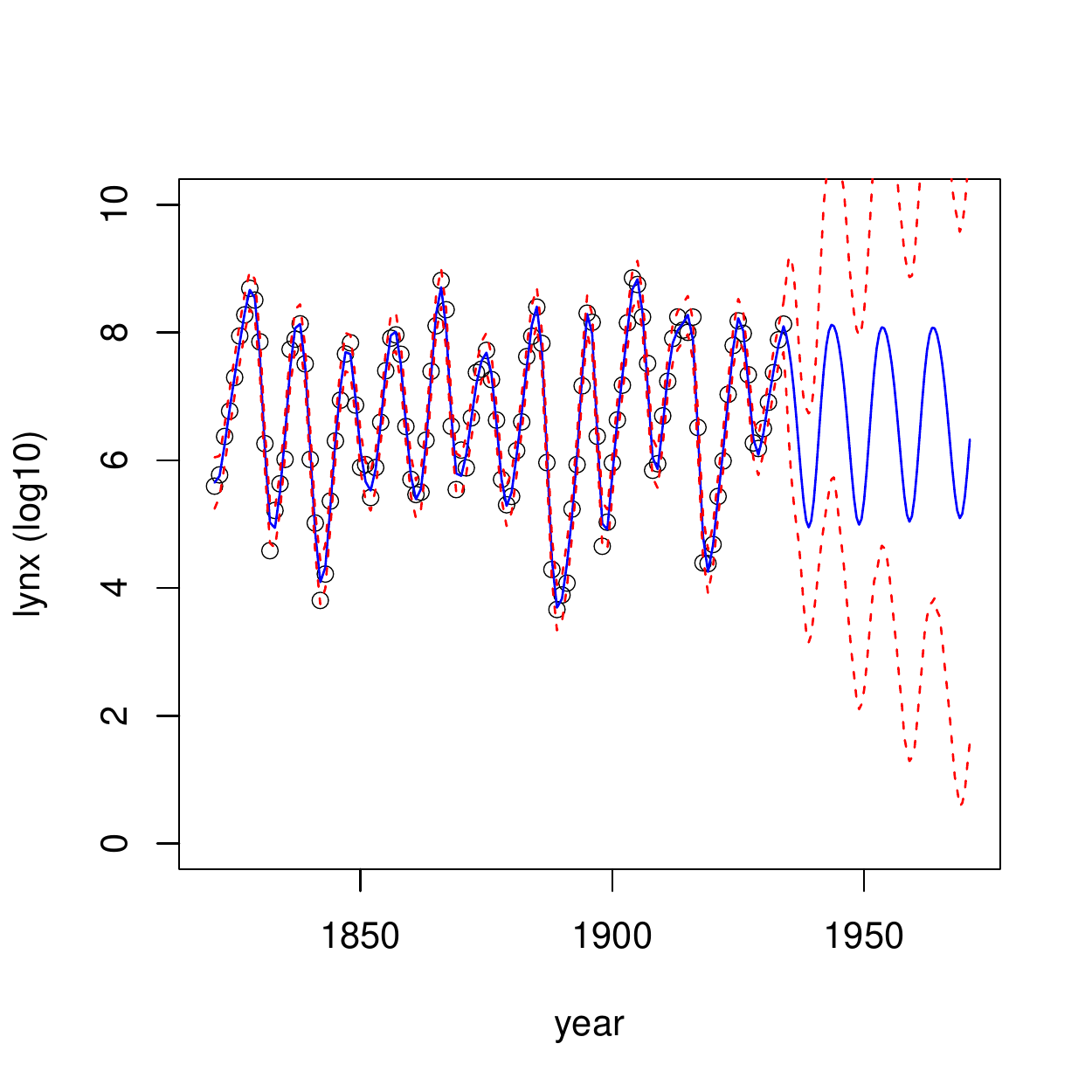}
    }
     \caption{Inferential Results for the Lynx analysis in \cref{sec:Lynx}. 
     Figure (a) shows the posterior distribution of the unknown periodicity year $c$. Figure (b) shows the posterior distribution of $g_{c} + g_{c/2}$: posterior mean shown in the blue line, 95 credible intervals shown in the red lines, and observations shown in circles. }
    \label{fig:ResultofLynx}
\end{figure}

\subsection{Sunspot Data}\label{sec:sunspot}
In this example, we analyze the yearly mean sunspot data from the World Data Center SILSO, Royal Observatory of Belgium, Brussels \citep{sidc}, with the following model:
\begin{equation}
    \begin{aligned}
         y_i &= g_{s}(x_i) + g_{tr}(x_i) + \epsilon_i,\\
        g_{s} &\sim \text{sGP}\bigg(a = \frac{2\pi}{c}, \sigma_s \bigg), \ g_{tr} \sim \text{IWP}_3(\sigma_{tr}),\\
        \epsilon_i &\sim N(0,\sigma_\epsilon^2),
    \end{aligned}
\end{equation}
where $y_i$ denotes the log-transformed yearly mean sunspot, and $x_i$ denotes the year of measurement since 1700.

We model the seasonal variation in yearly mean sunspots with an sGP model with periodicity around $c$ years, denoted as $g_s$. 
The long-term trend $g_{tr}$ is modelled with a third-order Integrated Wiener process (IWP-$3$) with smoothing parameter $\sigma_{tr}$.
For variance parameters, we use independent exponential priors such that $\mathbb{P}(\sigma_c(50) > 1) = 0.5$, $\mathbb{P}(\sigma_{tr}(50) > 5) = 0.5$ and $\mathbb{P}(\sigma_\epsilon > 1) = 0.1$. All the boundary conditions in the sGP are assigned with independent normal prior $N(0,1000)$.
The year of periodicity $c$ is assumed unknown between $8$ years to $13$ years, and we place a discrete uniform prior for $c$ from $8$ to $13$ years with $0.1$ year spacing.

To facilitate the computation, we then approximate $g_{tr}$ using 100 O-spline basis functions introduced in \citep{iwpus}, and $g_s$ using 300 sB-spline basis functions as described in \cref{sec:FEM}. 
All the basis functions are constructed with equally spaced knots.
The inferential results are summarized in \cref{fig:ResultofSunspot}. Similar to the results in \cref{sec:Lynx}, the posterior distribution of $c$ is concentrated between $9.5$ years to $11.5$ years, close to the duration of the solar cycle.

\begin{figure}[!p]
    \centering
            \subfigure[]{
      \includegraphics[width=0.45\textwidth]{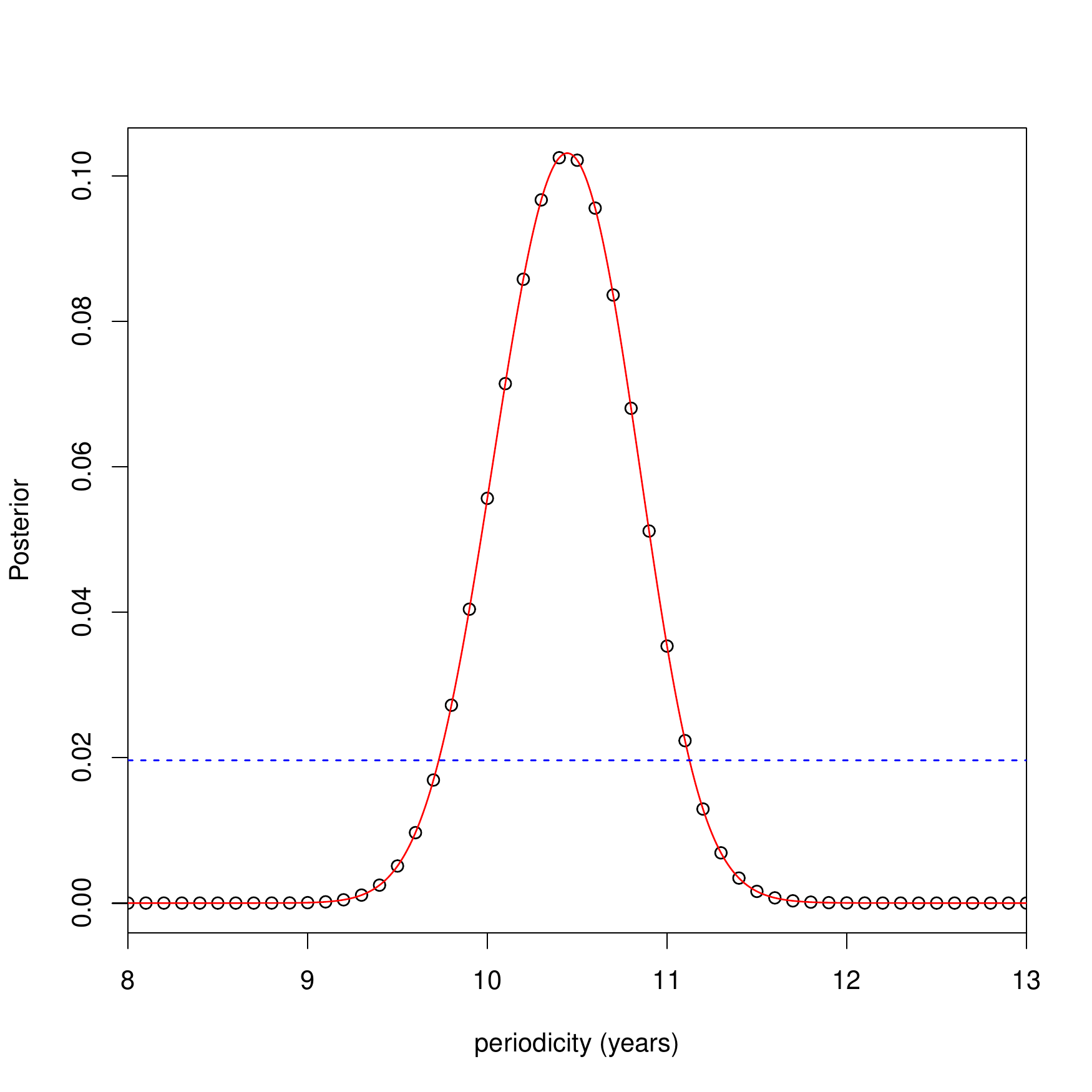}
    }
             \subfigure[]{
      \includegraphics[width=0.45\textwidth]{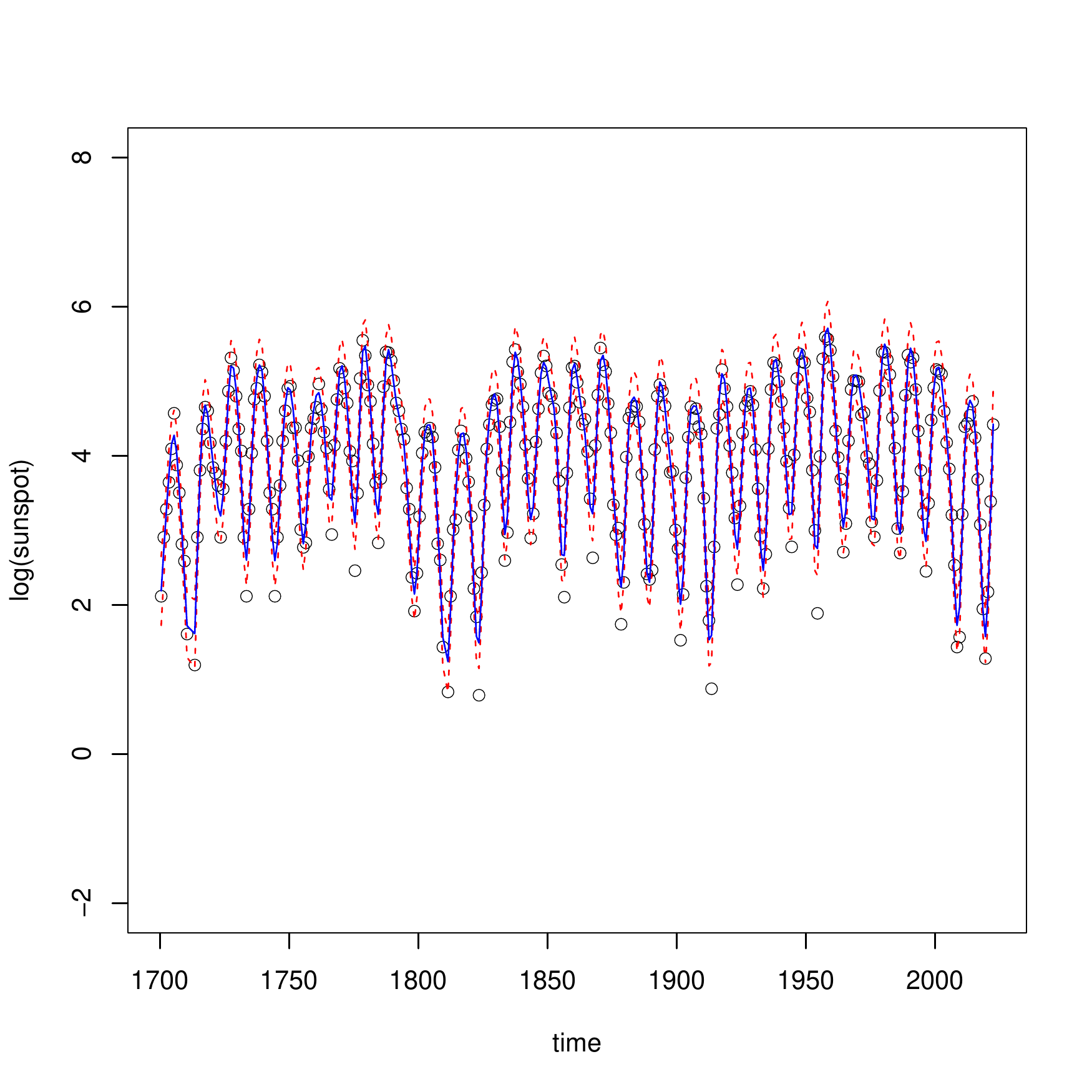}
    }
     \caption{Inferential Results for the sunspot analysis in \cref{sec:sunspot}. 
     Figure (a) shows the posterior distribution of the unknown periodicity year $c$. 
     Figure (b) shows the posterior distribution of $g_s + g_{tr}$: posterior mean shown in the blue line, 95 credible intervals shown in the red lines, and observations shown in circles.}
    \label{fig:ResultofSunspot}
\end{figure}

\subsection{CO2 Variation}\label{sec:CO2}

In this section, we will study the atmospheric Carbon Dioxide (CO2) concentrations data collected from an observatory in Hawaii from 1960 to 2021. The observations were collected monthly before May 1974 and weekly afterward.

Previous analyses have suggested the presence of several periodic components in addition to the obvious yearly cycle in the CO2 variation. 
\cite{rust1979inferences} analyzed the data from this observatory and hypothesized that the besides the obvious yearly cyclic variation, two longer periodic components also likely exist, due to the southern oscillation index (ENSO) and solar activity cycles.
Therefore, \cite{rust1979inferences} adopted a model with an exponential function for the trend, one sinusoidal function and its first harmonic component for the annual cycle, and two additional sinusoidal functions for the ENSO and solar cycles. 
In a later study of the climate oscillation, \cite{scafetta2010empirical} also identified a 10.4-year cycle induced by the solar cycle, and an additional 9.1-year cycle induced by long lunar tidal behaviors

% \cite{scafetta2010empirical} also noted that the climate oscillation contains a 10.4-year cycle induced by the solar cycle, and a 9.1-year cycle induced by long lunar tidal behaviors.
% Previous analyses have suggested the presence of several periodic components in addition to the obvious yearly cycle in the CO2 variation. 
% \cite{rust1979inferences} analyzed the data from this observatory and proposed a model that included an exponential trend, a sinusoidal function and its first harmonic component for the annual cycle, and two additional sinusoidal functions with periods of 44 and 142 months, respectively for the southern oscillation index (ENSO) and solar activity cycles.
% \cite{scafetta2010empirical} later also identified a 10.4-year cycle related to the solar activity in the climate oscillation, with an additional 9.1-year cycle induced by long lunar tidal behaviors.
% Therefore, the final model adopted in \cite{rust1979inferences} contains an exponential function for the trend, one sinusoidal function and its first harmonic component for the annual cycle, and two additional sinusoidal functions for the 44-month and 142-month cycles.
% \cite{scafetta2010empirical} also noted that the climate oscillation contains a 10.4-year cycle likely induced by the solar cycle, and a 9.1-year cycle induced by long lunar tidal behaviors.

Subsequently, both \cite{rasmussen2003gaussian} and \cite{solin2014explicit} analyzed the CO2 data with Gaussian process regressions. They assumed the CO2 concentration consists of an increasing long-term trend, a pronounced yearly seasonal variation, and some irregular noise. 
\cite{rasmussen2003gaussian} suggested that the seasonal variation in the CO2 variation is smooth but not exactly periodic, and both \cite{rasmussen2003gaussian} and \cite{solin2014explicit} chose to model the quasi-periodic seasonal variation as one GP with covariance being the product of the canonical periodic covariance and the Matern covariance with $\nu = 3/2$ or $\infty$.

Based on these existing studies, we consider the following model (M1):
\begin{equation}
    \begin{aligned}
        y_i &= g_{tr}(x_i) + g_{1}(x_i) + g_{\frac{1}{2}}(x_i) + g_{\frac{44}{12}}(x_i) + g_{9.1}(x_i) + g_{10.4}(x_i) + e_i,\\
        g_{s} &\sim \text{sGP}\bigg(a = \frac{2\pi}{s}, \sigma_{s}\bigg),\text{ for } s \in \bigg\{1, \frac{1}{2}, \frac{44}{12}, 9.1, 10.4 \bigg\} \\ 
        g_{tr} &\sim \text{IWP-3}(\sigma_{tr}), \ 
        e_i \sim N(0,\sigma_e^2),
    \end{aligned}
\end{equation}
where each $y_i$ denotes the CO2 concentration and $x_i$ denotes the time of measurement since March 30, 1960, in years.
The long-term trend $g_{tr}$ is modelled using a third-order IWP with smoothing parameter $\sigma_{tr}$. 
The seasonal variation is modelled as the sum of five separate sGPs: the first two sGPs capture the one-year cycle and its first-order harmonic, and the third sGP capture the 44-month ENSO cycles suggested in \cite{rust1979inferences}, and the fourth and fifth sGPs respectively capture the 9.1-year lunar cycle and 10.4-year solar cycle suggested in \cite{scafetta2010empirical}.
All the components in the model are considered to be independent.

In contrast to \cite{rasmussen2003gaussian} and \cite{solin2014explicit} where all the hyperparameters are estimated through maximizing marginal likelihood, we assign independent exponential priors to all the variance parameters such that:
\begin{equation}
    \begin{aligned}
        &\mathbb{P}(\sigma_{s}(10) > 1) = 0.5, \text{ for } s \in \bigg\{1, \frac{1}{2}, \frac{44}{12}, 9.1, 10.4 \bigg\}\\
        &\mathbb{P}(\sigma_{tr}(10) > 30) = 0.5, \
        \mathbb{P}(\sigma_{e} > 1) = 0.5.
    \end{aligned}
\end{equation}
All the boundary conditions in the sGP or IWP are assigned with independent priors $N(0,1000)$. 
As a comparison, we also consider an exact periodic model (M2):
\begin{equation}
    \begin{aligned}
        y_i &= g_{tr}(x_i) + g_{1}(x_i) + g_{\frac{1}{2}}(x_i) + g_{\frac{44}{12}}(x_i) + g_{9.1}(x_i) + g_{10.4}(x_i) + e_i,\\
        g_{s}(x) &= v_{s_1} \cos\bigg(\frac{2\pi x}{s}\bigg) + v_{s_2} \sin\bigg(\frac{2\pi x}{s}\bigg), \text{ for } s\in \bigg\{1, \frac{1}{2}, \frac{44}{12}, 9.1, 10.4 \bigg\} \\ 
        g_{tr} &\sim \text{IWP-3}(\sigma_{tr}), \ 
        e_i \sim N(0,\sigma_e^2),
    \end{aligned}
\end{equation}
where all the sGP components in M1 are replaced with just their corresponding sinusoidal functions, and all the priors are the same as in M1.

In order to improve computational efficiency, we adopt FEM approximations for all the $\GP$s in M1 and M2, with basis functions constructed using equally spaced knots. Specifically, we utilize the sB-splines approximation with 90 basis functions for the sGP models, and the O-Spline approximation \citep{iwpus} with 50 basis functions for the IWP model. Once the models M1 and M2 are fitted, we use them to forecast CO2 variation up to the year 2030. The inference and prediction results are summarized in \cref{fig:ResultofCO2}, \cref{fig:DetailedResultofCO2} and \cref{table:co2}.

As shown in \cref{fig:ResultofCO2}, the sGP-based model (M1) produces a much more stable prediction than the exact-periodic model (M2), demonstrated by the width of the posterior interval in the forecast. 
This is because, the value of $\sigma_{tr}(10)$ in M2 is estimated to be much larger (Post Median: 100.6) compared to in M1 (Post Median: 0.593), due to the failure of M1 to accommodate the quasi-periodic behavior of the seasonal variation. 
This difference can be further confirmed in \cref{fig:DetailedResultofCO2}, where the derivative of the long-term trend has smooth sample paths in M1 but very wiggly sample paths in M2.

\begin{table}[]
\centering
\begin{tabular}{llll}
                            & 1st Quantile    & Median          & 3rd Quantile    \\
$\sigma_{tr}(10)$             & 0.4371 (102.05)  & 0.6446 (120.81)  & 0.9462 (139.49)  \\
$\sigma_{1}(10)$             & 1.8179 (--)      & 2.4542 (--)      & 3.1709 (--)      \\
$\sigma_{\frac{1}{2}}(10)$    & 1.8179 (--)      & 2.4542 (--)      & 3.1709 (--)      \\
$\sigma_{\frac{44}{12}}(10)$  & 0.4004 (--)     & 0.4514 (--)     & 0.5019 (--)     \\
$\sigma_{9.1}(10)$ & 0.004248 (--)    & 0.01045 (--)    & 0.02445 (--)    \\
$\sigma_{10.4}(10)$ & 0.01090 (--)    & 0.02035 (--)    & 0.03335 (--)    \\
$\sigma_{\epsilon}$           & 0.5870 (0.6099) & 0.5928 (0.6159) & 0.5988 (0.6220)
\end{tabular}
     \caption{Posterior summary of variance parameters for the CO2 example in \cref{sec:CO2}. Results from M2 are shown in parenthesis. }
     \label{table:co2}
\end{table}

\begin{figure}[!p]
    \centering
             \subfigure[M1: Overall mean CO2 concentration]{
      \includegraphics[width=0.45\textwidth]{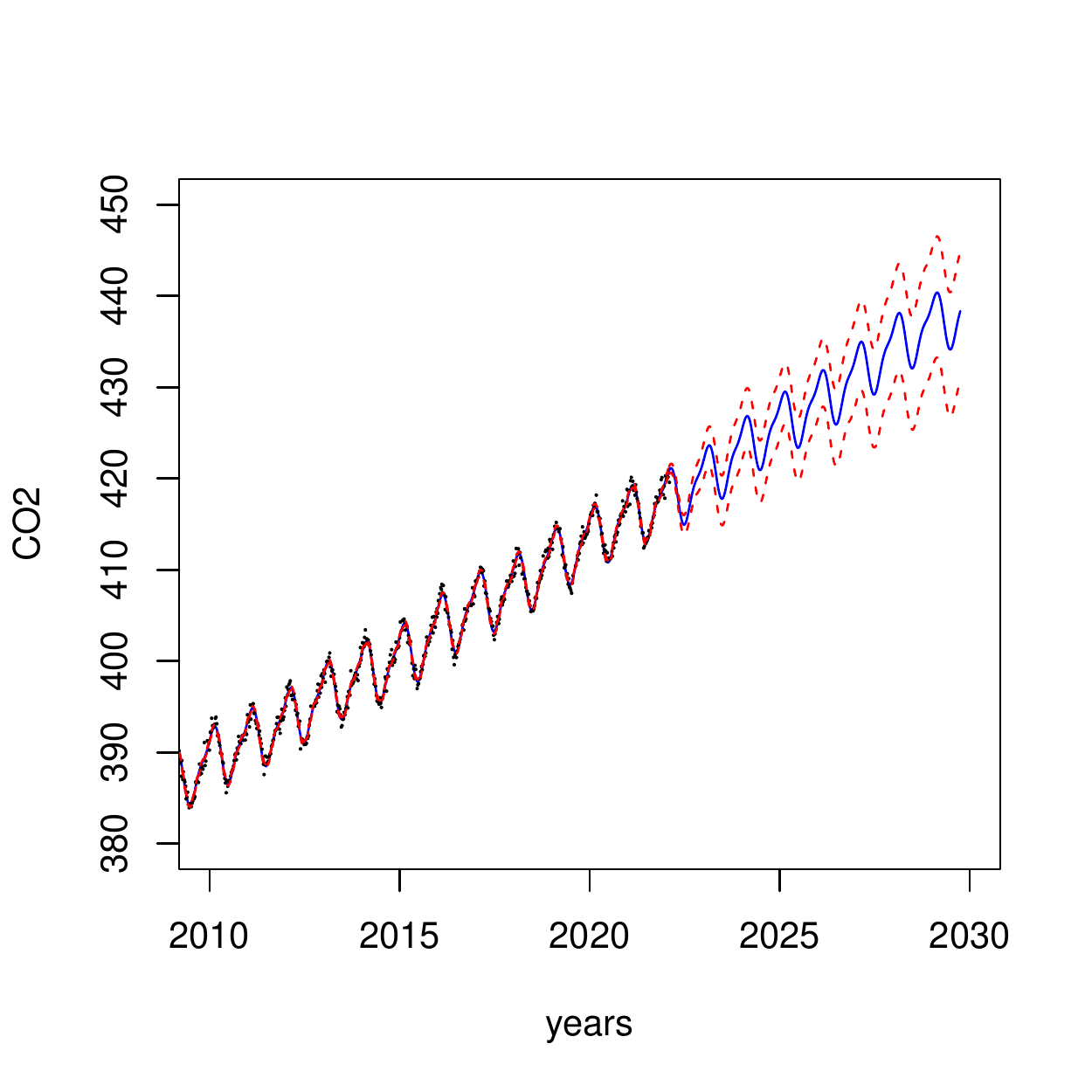}
    }
              \subfigure[M1: Seasonal variation]{
      \includegraphics[width=0.45\textwidth]{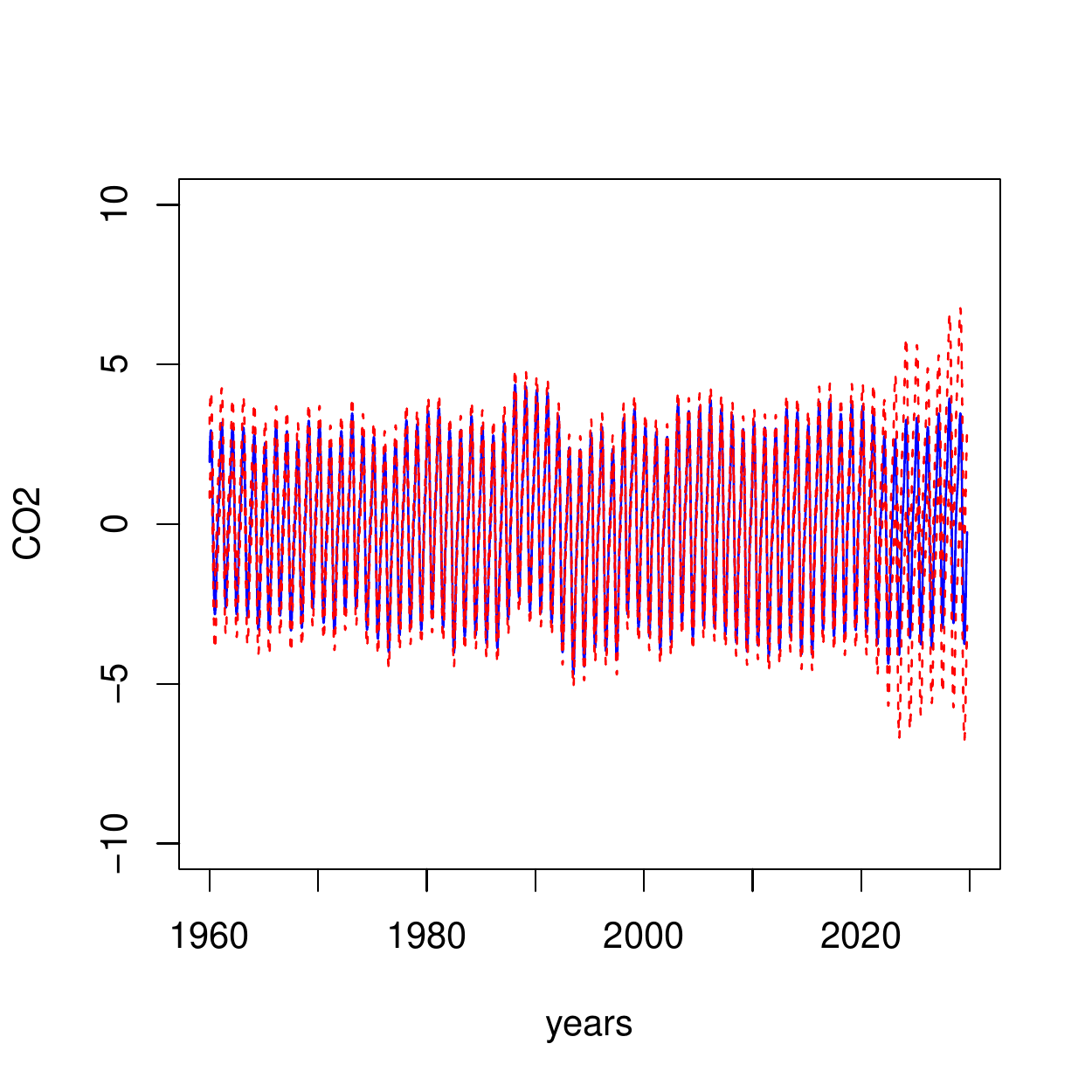}
    }
              \subfigure[M2: Overall mean CO2 concentration]{
      \includegraphics[width=0.45\textwidth]{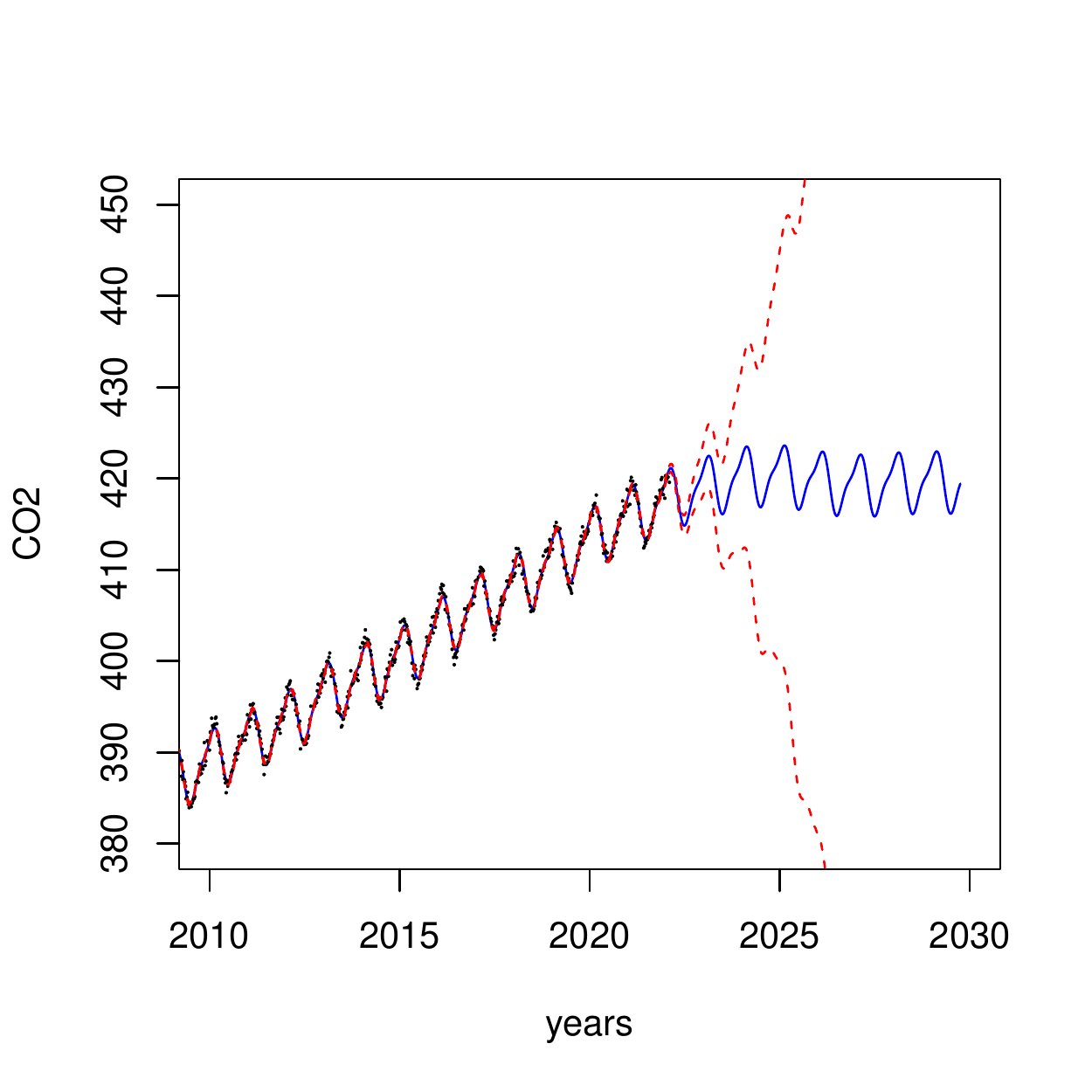}
    }
              \subfigure[M2: Seasonal variation]{
      \includegraphics[width=0.45\textwidth]{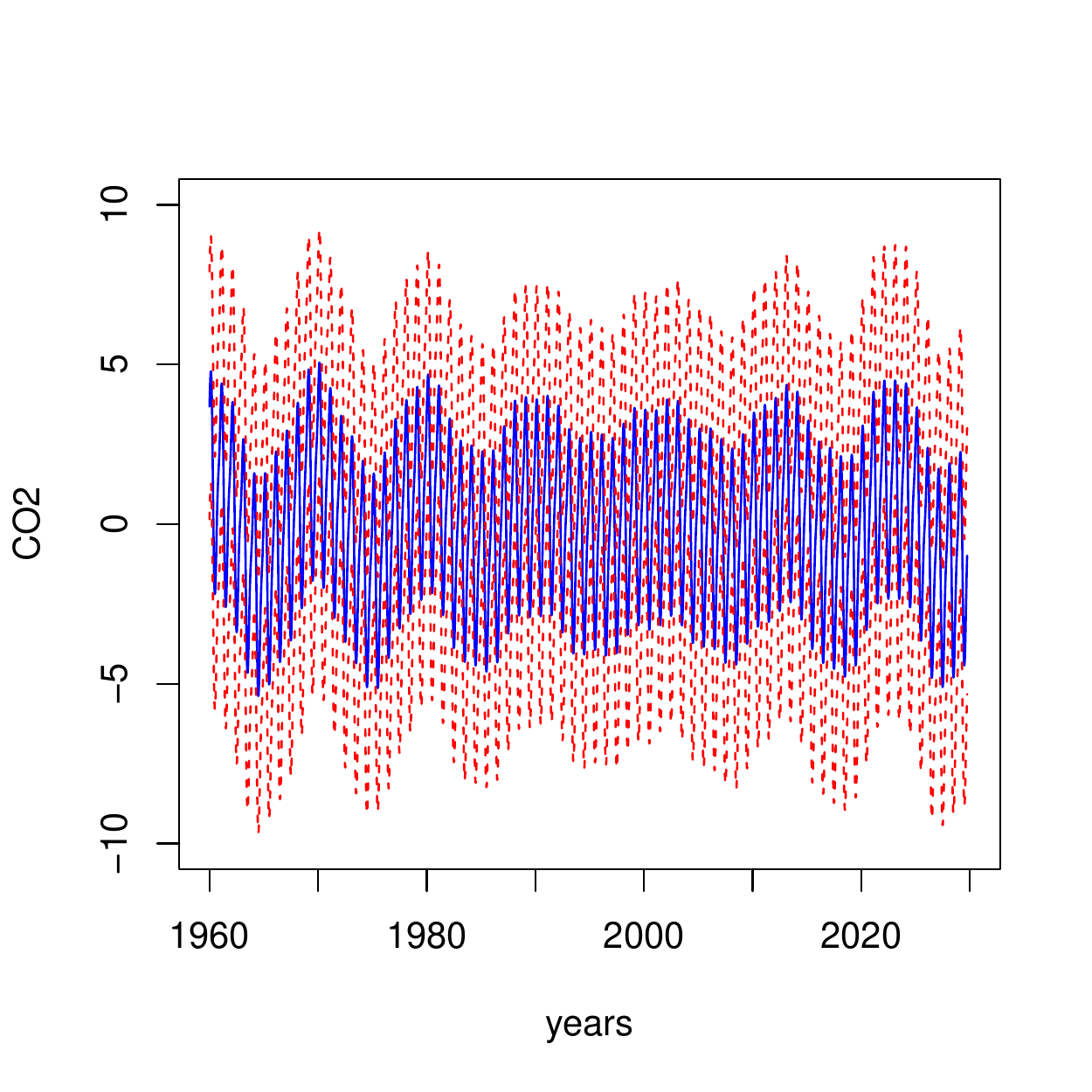}
    }
     \caption{Inferential Results for the CO2 example in \cref{sec:CO2}. (a,c) display the posterior of the overall mean CO2 concentration over years (i.e. $g_{tr} + \sum_s g_s$). (b,d) show the posterior of the overall seasonal component in the CO2 concentration (i.e. $\sum_s g_s$). 
     In (a)-(d), the blue solid lines denote the posterior mean, and the red dashed lines denote the 95 percent posterior credible interval.
     The prediction result from M1 is more stable than that from M2, as shown by the width of the interval.
     }
    \label{fig:ResultofCO2}
\end{figure}

\begin{figure}[!p]
    \centering
            \subfigure[M1: Long-term trend $g_{tr}$]{
      \includegraphics[width=0.45\textwidth]{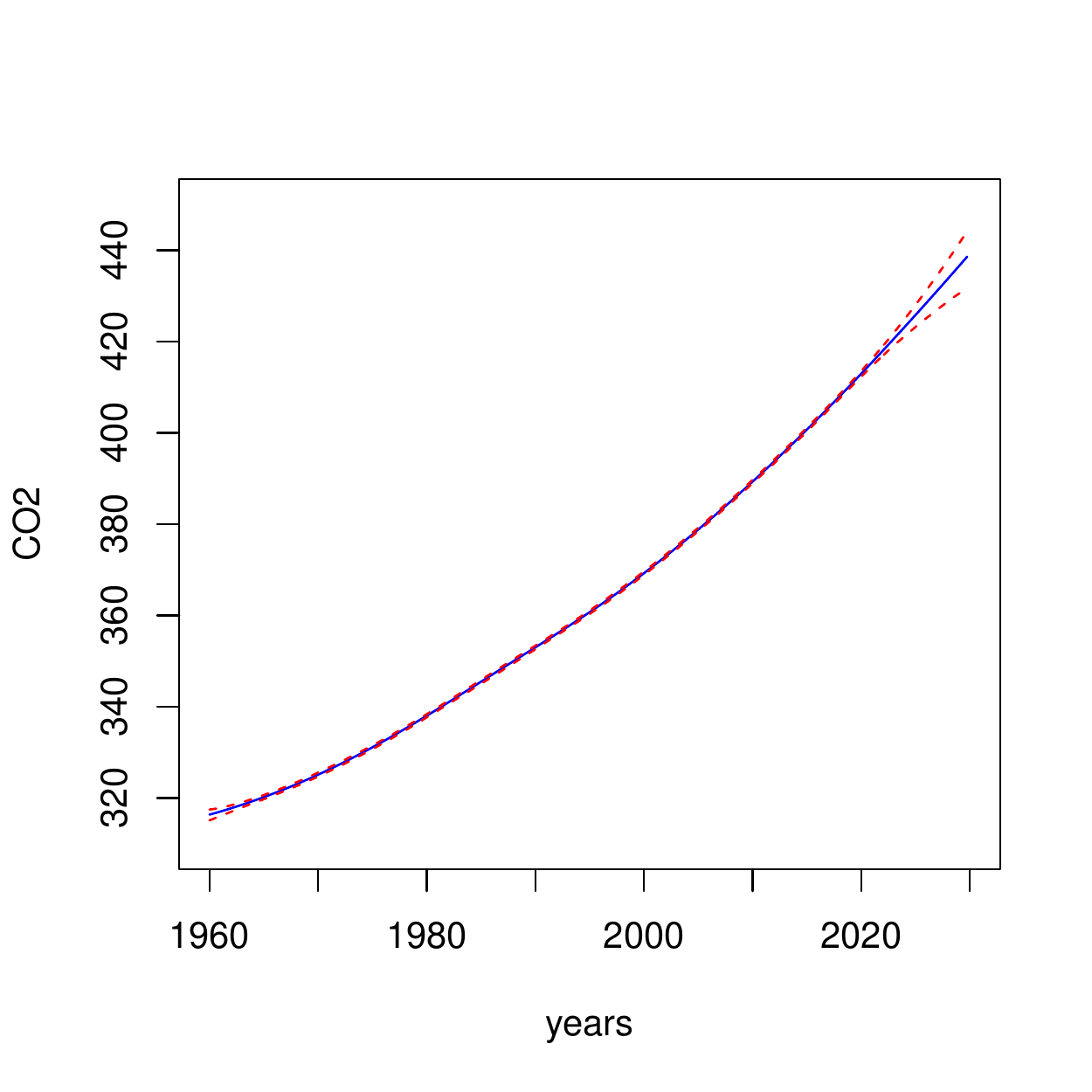}
    }
            \subfigure[M1: Inference of $g'_{tr}$]{
      \includegraphics[width=0.45\textwidth]{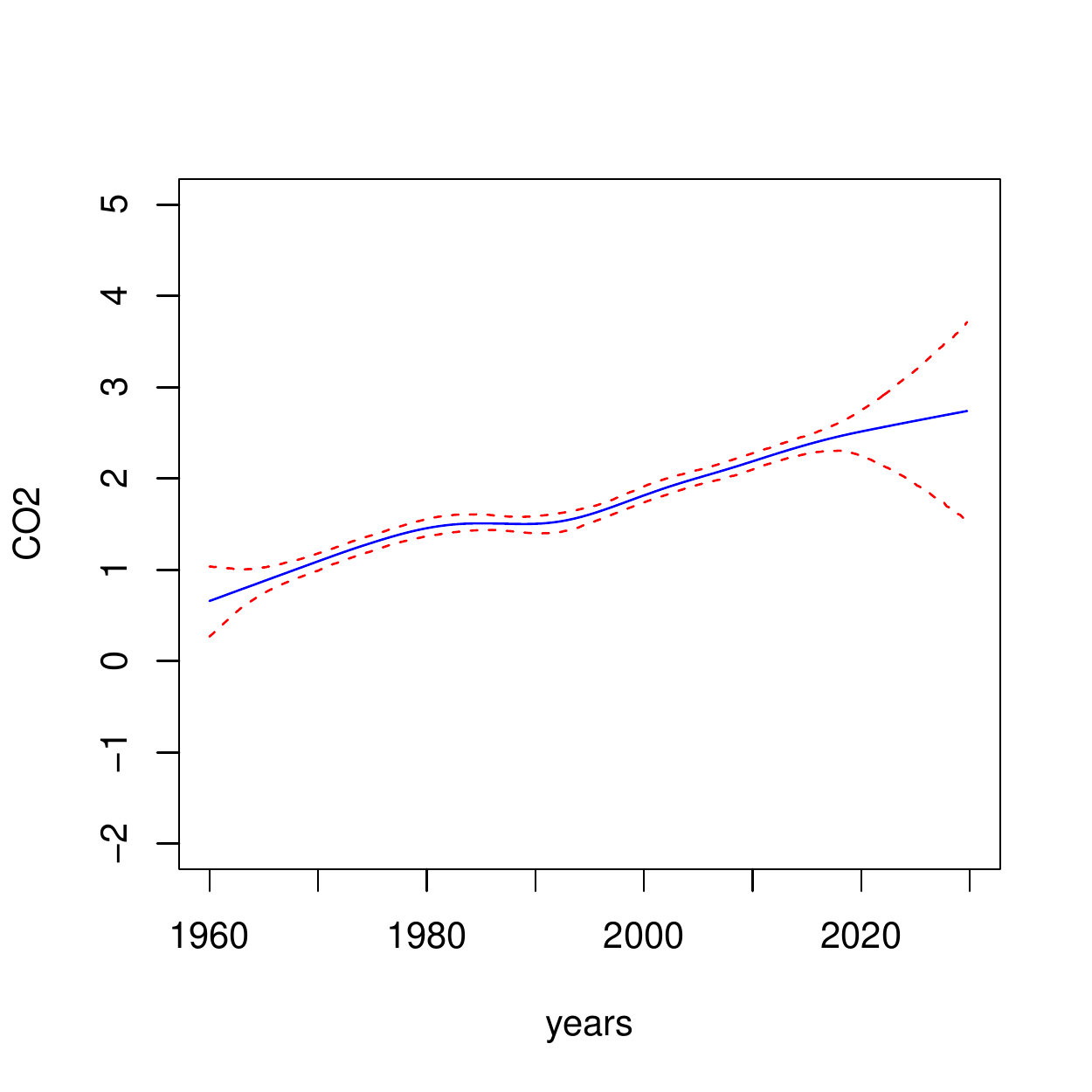}
    }
              \subfigure[M2: Long-term trend $g_{tr}$]{
      \includegraphics[width=0.45\textwidth]{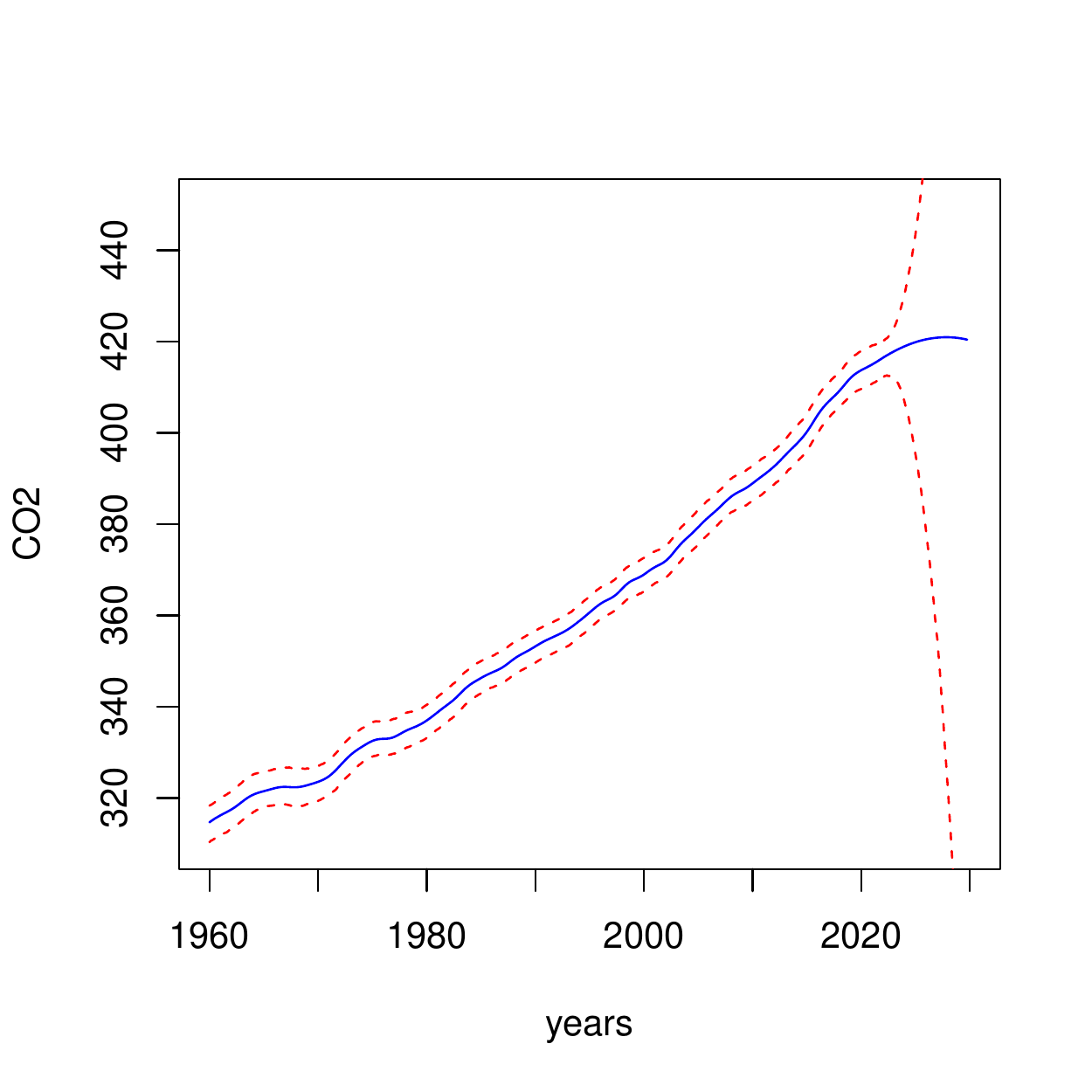}
    }
             \subfigure[M2: Inference of $g'_{tr}$]{
      \includegraphics[width=0.45\textwidth]{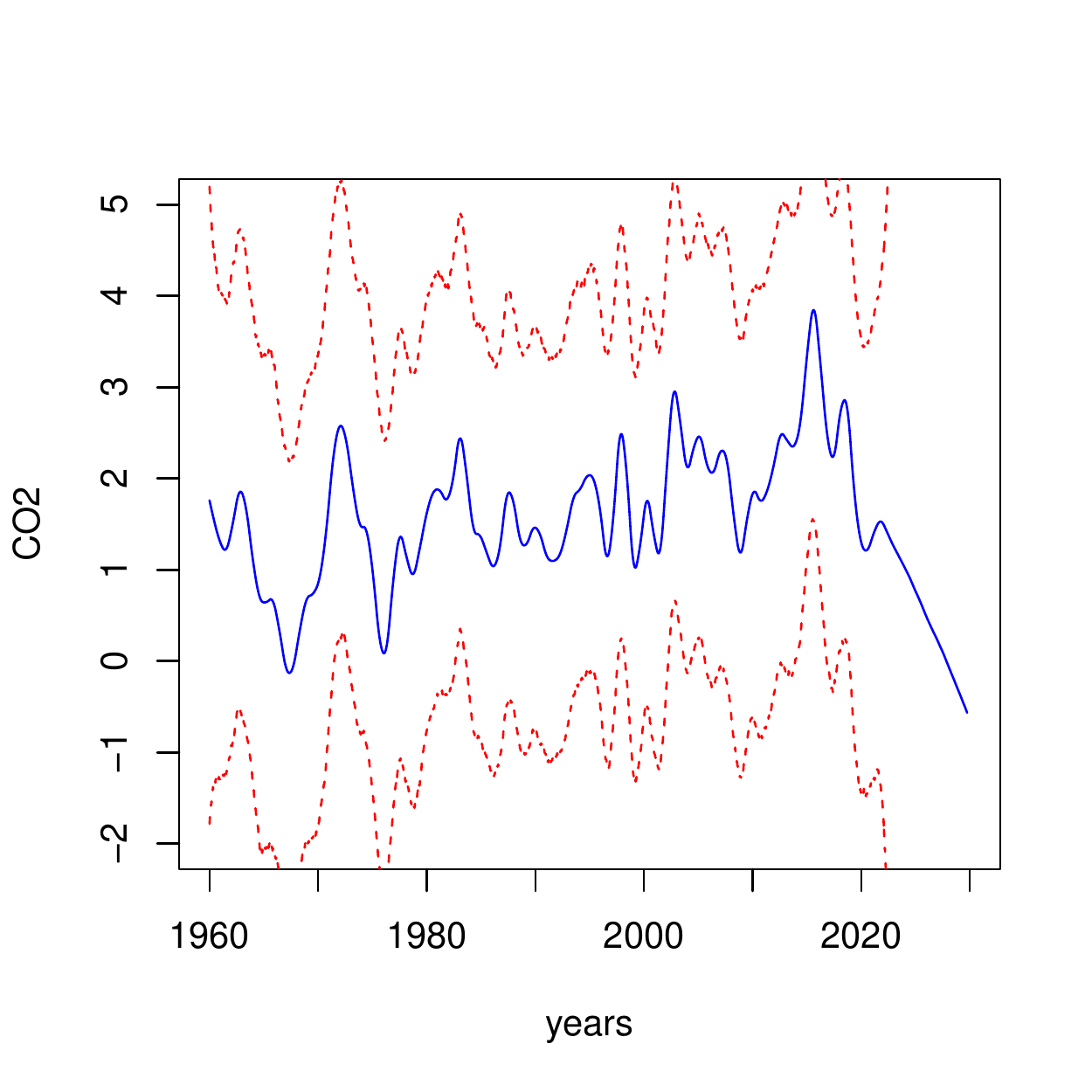}
    }
     \caption{Detailed inferential results for the CO2 example in \cref{sec:CO2}. (a,c) display the posterior distribution of the long-term trend; (b,d) display the posterior distribution of the first derivative of the long-term trend. In (a)-(d), the blue solid lines denote the posterior mean, and the red dashed lines denote the 95 percent posterior credible interval.
     The derivative of $g_{tr}$ is estimated to be much smoother in M1, whereas in M2 the result is oscillating due to the unaccounted quasi-periodicity.
     }
    \label{fig:DetailedResultofCO2}
\end{figure}

\section{Discussion}\label{sec:discussion}

In this paper, we introduced a class of Gaussian processes called seasonal Gaussian processes (sGP) for making model-based inference on quasi-periodic functions.
We then proposed an accurate and computationally efficient sB-spline approximation to the sGP, which can handle densely irregularly spaced locations. 
The proposed approximation is shown to converge to the sGP as the number of basis functions increases, both theoretically and numerically. 
We implemented our proposed approaches in a flexible and scalable manner using modern approximate Bayesian inference algorithms, as described in \cite{aghqsoftware}, and demonstrated their practical usage in diverse real data applications.

While our focus was on making fully Bayesian inference for the unknown quasi-periodic function $g$, our proposed approaches could also be implemented in a mixed effect model with parameters estimated through maximum likelihood \citep{rasmussen2003gaussian} or generalized cross-validation \citep{golub1979generalized}. 
An interesting avenue for future work is to extend our approach to incorporate adaptive modeling of the sGP, where the standard deviation parameter $\sigma$ varies over time. 
This could be achieved using a strategy similar to the one described in \cite{yue2014bayesian}.

We have made the software that implements our proposed methods available on a Github repository, which can be found at \href{https://github.com/AgueroZZ/sGPfit}{https://github.com/AgueroZZ/sGPfit}.

\newpage
\begin{center}
{\large\bf SUPPLEMENTARY MATERIAL}
\end{center}

\begin{description}

%\item[Supplement:] The online supplement contains the proofs for the proposition and theorems in the paper and a detailed derivation of the proposed sB-spline approximation from the FEM.

\item[R-package:] The R-package \textit{sGPfit} that implements the proposed methodology in the paper can be found at \href{https://github.com/AgueroZZ/sGPfit}{github.com/AgueroZZ/sGPfit}.

\item[Codes and Data:] All the data used in this paper are publicly available for download. The mortality data in \cref{sec:Mortality} is available at \href{https://www.statcan.gc.ca/}{statcan.gc.ca}. 
The lynx data in \cref{sec:Lynx} is available in the R package \textit{datasets} (version 4.2.1).
The sunspot data in \cref{sec:sunspot} is available at \href{https://www.sidc.be/silso/}{sidc.be/silso}.
The CO2 data in \cref{sec:CO2} can be accessed from \href{https://scrippsco2.ucsd.edu}{scrippsco2.ucsd.edu}.

The codes to replicate each of the examples can be found at \href{https://github.com/AgueroZZ/sGPcode}{github.com/AgueroZZ/sGPcode}.

\item[Supplement:] The online supplement includes an R-markdown tutorial that provides step-by-step guidance for users to implement our proposed approach using the Lynx dataset in R.

\end{description}

\nocite{sarkka2019applied}
\nocite{schultz1969approximation}
\bibliographystyle{apalike}
\bibliography{bibliography}

\newpage

% \spacingset{2}

\section*{Appendix A: Derivation of the sGP covariance}

\begin{propS}[Covariance Function of the Seasonal Gaussian Process]
Let $g \sim \text{sGP}(\alpha, \sigma)$. 
Then $g$ has a covariance function:
\begin{equation}
\begin{aligned}
        C(x_1,x_2) &= \bigg(\frac{\sigma}{\alpha}\bigg)^2 \bigg[\frac{x_1}{2}\cos(\alpha(x_2-x_1)) - \frac{\cos(\alpha x_2)\sin(\alpha x_1)}{2\alpha} \bigg] \\
        &= \bigg(\frac{\sigma}{\alpha}\bigg)^2 \bigg[ \frac{\cos(\alpha x_2)x_1}{2}\cos(\alpha x_1) + \left(\frac{\sin(\alpha x_2)x_1}{2} - \frac{\cos(\alpha x_2)}{2\alpha}\right)\sin(\alpha x_1) \bigg],
\end{aligned}
\end{equation}
for any $x_1,x_2 \in \mathbb{R}^+$ such that $x_1\leq x_2$.
\end{propS}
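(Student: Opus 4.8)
The plan is to solve the SDE \eqref{equ:sGPdef} explicitly as a stochastic integral and then read off the covariance via the Itô isometry. First I would identify the causal Green's function (impulse response) of the operator $L = d^2/dx^2 + \alpha^2$ under the initial conditions \eqref{equ:boundaryCondition}. Since the homogeneous solutions span $\{\cos(\alpha x),\sin(\alpha x)\}$ by \eqref{equ:NullSpace}, the retarded Green's function is $G(x,u) = \alpha^{-1}\sin(\alpha(x-u))$ for $x \ge u$ and $0$ otherwise; one checks directly that $L_x G(x,u)=0$ for $x>u$, that $G(u,u)=0$, and that $\partial_x G(x,u)|_{x=u^+}=1$, which is the jump condition for a second-order operator. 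Because $g(0)=g'(0)=0$ annihilates the homogeneous part, the solution is the Wiener integral
\[
g(x) = \frac{\sigma}{\alpha}\int_0^x \sin(\alpha(x-u))\,dW(u),
\]
where $W$ is the Brownian motion associated with the white noise $\xi$.

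Next, assuming without loss of generality $x_1 \le x_2$, I would apply the Itô isometry. Since the two integrands are supported on $[0,x_1]$ and $[0,x_2]$, the covariance collapses to an integral over $[0,\min(x_1,x_2)] = [0,x_1]$:
\[
C(x_1,x_2) = \frac{\sigma^2}{\alpha^2}\int_0^{x_1} \sin(\alpha(x_1-u))\,\sin(\alpha(x_2-u))\,du.
\]
This turns the probabilistic claim into a purely deterministic trigonometric integral, which is the heart of the computation.

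The remaining work is routine but demands care with the angle bookkeeping. I would apply the product-to-sum identity $\sin A \sin B = \tfrac12[\cos(A-B)-\cos(A+B)]$ with $A=\alpha(x_1-u)$ and $B=\alpha(x_2-u)$, so that $A-B=\alpha(x_1-x_2)$ is constant in $u$ while $A+B=\alpha(x_1+x_2-2u)$ integrates to a sine term. Integrating term by term over $[0,x_1]$ yields $\tfrac{x_1}{2}\cos(\alpha(x_2-x_1))$ from the first piece and $\tfrac{1}{4\alpha}[\sin(\alpha(x_2-x_1)) - \sin(\alpha(x_1+x_2))]$ from the second; collapsing the latter bracket by the sum-to-product identity $\sin C - \sin D = 2\cos(\tfrac{C+D}{2})\sin(\tfrac{C-D}{2})$ with $C=\alpha(x_2-x_1)$, $D=\alpha(x_1+x_2)$ gives exactly $-\cos(\alpha x_2)\sin(\alpha x_1)/(2\alpha)$. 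This produces the first displayed form of $C$. The second form then follows by expanding $\cos(\alpha(x_2-x_1)) = \cos(\alpha x_2)\cos(\alpha x_1) + \sin(\alpha x_2)\sin(\alpha x_1)$ and regrouping the coefficients of $\cos(\alpha x_1)$ and $\sin(\alpha x_1)$.

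The only genuine subtlety I anticipate is justifying the stochastic-integral representation rigorously, rather than the trigonometry: one must confirm that the initial conditions uniquely pin down the particular solution, so that no homogeneous component with random coefficients contaminates $g$, and that truncating the integration domain to $[0,\min(x_1,x_2)]$ through the causal Green's function is legitimate. Once the representation $g(x) = (\sigma/\alpha)\int_0^x \sin(\alpha(x-u))\,dW(u)$ is established, everything downstream is a deterministic calculation whose only challenge is keeping the trigonometric identities organized.
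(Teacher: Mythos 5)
Your proposal is correct, and its core is the same as the paper's Appendix A: both arrive at the Wiener-integral representation $g(x) = \tfrac{\sigma}{\alpha}\int_0^x \sin(\alpha(x-u))\,dW(u)$, reduce the covariance to the deterministic integral $\tfrac{\sigma^2}{\alpha^2}\int_0^{x_1}\sin(\alpha(x_1-u))\sin(\alpha(x_2-u))\,du$ via the isometry property of white-noise integrals, and evaluate it with product-to-sum identities; your trigonometric bookkeeping checks out exactly and reproduces both displayed forms. Where you genuinely diverge is in how that representation is derived. The paper works in state space: it augments $g$ with $g'$, rewrites the SDE as the first-order vector equation $\boldsymbol{g}'_{aug} = \boldsymbol{F}\boldsymbol{g}_{aug} + \boldsymbol{L}W$ with $\boldsymbol{F}$ the companion matrix, invokes the general solution formula for linear SDEs (citing S\"arkk\"a, 2019), and reads the kernel off the Taylor series of $\exp(\boldsymbol{F}x)$ using $\boldsymbol{F}^{2k} = (-\alpha^2)^k\boldsymbol{I}$ and $\boldsymbol{F}^{2k+1} = (-\alpha^2)^k\boldsymbol{F}$. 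You instead construct the causal Green's function of the scalar operator $L$ directly and verify the defining properties (homogeneous off the diagonal, zero on the diagonal, unit jump in the first derivative), which is more elementary and self-contained; it also resolves the subtlety you flag, since the causality of the Green's function together with $g(0)=g'(0)=0$ pins down the solution uniquely with no stray homogeneous component. What the paper's heavier formulation buys is reuse: essentially the same matrix-exponential computation immediately yields the transition matrix $\boldsymbol{R}_i = \exp(\boldsymbol{F}d_i)$ and noise covariance $\boldsymbol{\Sigma}_i$ of the Markov state-space representation in Theorem 2 (Appendix B), and from there the PSD formula in Corollary 2, so the vector route is amortized across the paper's later results, whereas your scalar route proves only this proposition.
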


\begin{proof}

It is obvious that the differential operator $L$ is linear.
Define $\boldsymbol{g}_{aug}(x) = (g(x), g'(x))^T$ and therefore $\boldsymbol{g'}_{aug}(x) = (g'(x), g''(x))^T$, then the SDE can be rewritten in the vector form:
\begin{equation}\label{equ:vector-SDE}
    \begin{aligned}
    \boldsymbol{g'}_{aug} = \boldsymbol{F} \boldsymbol{g}_{aug} + \boldsymbol{L}W,
    \end{aligned}
\end{equation}
where $\boldsymbol{F} = \begin{bmatrix} 0 & 1 \\ -\alpha^2 & 0\\ \end{bmatrix}$ and $\boldsymbol{L} = \begin{bmatrix} 0 \\ \sigma \\ \end{bmatrix}$.

\

\noindent Using the result from \cite{sarkka2019applied} (section 4.3), the solution of the linear SDE can be written as:
\begin{equation}\label{equ:linear-SDE-sol}
    \begin{aligned}
    \boldsymbol{g}_{aug}(x) &= \exp(\boldsymbol{F}x)\boldsymbol{g}_{aug}(0) + \int_0^x \exp(\boldsymbol{F}(x-\tau))\boldsymbol{L}W(\tau)d\tau \\
    &= \int_0^x \exp(\boldsymbol{F}(x-\tau)) \boldsymbol{L}W(\tau)d\tau,
    \end{aligned}
\end{equation}
where $\exp(\boldsymbol{F}x)$ denotes the matrix exponential defined as $\exp(\boldsymbol{F}x) = \sum_k\frac{\boldsymbol{F}^k x^k}{k!}$.

\

\noindent Note that $\boldsymbol{F}^{2k} = (-\alpha^2)^k \boldsymbol{I}$ and $\boldsymbol{F}^{2k} = (-\alpha^2)^k \boldsymbol{F}$. With Taylor series, the first component of $\boldsymbol{g}_{aug}(x)$ can be therefore written as:
\begin{equation}\label{equ:linear-SDE-sol-2}
    \begin{aligned}
    {g}(x) = \int_0^x \frac{\sigma}{\alpha} \sin(a(x-\tau)) W(\tau)d\tau.
    \end{aligned}
\end{equation}
Assume arbitrary $0<x_1\leq x_2$, the covariance function can be computed for $g$ as:
\begin{equation}\label{equ:true-cov}
    \begin{aligned}
    k(x_1,x_1) &= \int_0^{x_1} \frac{\sigma}{\alpha} \sin(\alpha(x_1-\tau)) \frac{\sigma}{a} \sin(\alpha(x_2-\tau)) d\tau \\
           &= \bigg(\frac{\sigma}{\alpha}\bigg)^2 \bigg[\frac{x_1}{2}\cos(\alpha(x_2-x_1)) - \frac{\cos(\alpha x_2)\sin(\alpha x_1)}{2\alpha} \bigg],
    \end{aligned}
\end{equation}
using properties of Gaussian white noise \citep{harvey1990forecasting}.

\end{proof}

\section*{Appendix B: Proof of the State-Space Representation}

\begin{theoremS}[State Space Representation of the sGP]
Consider $g \sim \text{sGP}(\alpha, \sigma)$, and let $\boldsymbol{s} = \{s_1,..., s_n\} \subset \R^{+}$ denotes a set of $n$ sorted locations and spacing $d_1 = s_1$ and $d_i = s_i - s_{i-1}$ for $i \in \{2,..,n\}$. Then the augmented vector $\boldsymbol{g}_{aug}(s_i) = [g(s_i), g'(s_i)]^T$ can be written as a Markov model:
\begin{equation}
\boldsymbol{{g}}_{aug}(s_{i+1}) = \boldsymbol{R}_{i+1} \boldsymbol{{g}}_{aug}(s_{i}) + \boldsymbol{\epsilon}_{i+1},
\end{equation}
where $\boldsymbol{\epsilon}_i \overset{ind}{\sim}N(0,\boldsymbol{\Sigma}_i)$. The $2 \times 2$ matrices $\boldsymbol{R}_{i}$ and $\boldsymbol{\Sigma}_{i} = \boldsymbol{Q}_{i}^{-1}$ are respectively defined as:
\begin{equation}
    \begin{aligned}
    \boldsymbol{R}_{i} = \begin{bmatrix} \cos(\alpha d_i) & \frac{1}{\alpha}\sin(\alpha d_i) \\
    -\alpha\sin(\alpha d_i) & \cos(\alpha d_i)  \\
    \end{bmatrix}, \  
    \boldsymbol{\Sigma}_i = \sigma^2 \begin{bmatrix} \frac{1}{\alpha^2}\bigg(\frac{d_i}{2} - \frac{\sin(2\alpha d_i)}{4\alpha}\bigg) & \frac{\sin^2(\alpha d_i)}{2\alpha^2} \\
    \frac{\sin^2(\alpha d_i)}{2\alpha^2} & \frac{2\alpha d_i + \sin(2\alpha d_i)}{4\alpha}  \\
    \end{bmatrix}.
    \end{aligned}
\end{equation}
\end{theoremS}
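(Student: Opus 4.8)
The plan is to build directly on the linear stochastic differential equation solution established in Appendix A. Because the operator $L$ has constant coefficients, the vector form $\boldsymbol{g}'_{aug} = \boldsymbol{F}\boldsymbol{g}_{aug} + \boldsymbol{L}W$ is time-homogeneous and can be solved on any subinterval with the same transition structure. First I would apply the variation-of-constants formula on the interval $[s_i, s_{i+1}]$ to obtain
\begin{equation*}
\boldsymbol{g}_{aug}(s_{i+1}) = \exp(\boldsymbol{F}d_{i+1})\,\boldsymbol{g}_{aug}(s_i) + \int_{s_i}^{s_{i+1}} \exp\!\big(\boldsymbol{F}(s_{i+1}-\tau)\big)\boldsymbol{L}\,W(\tau)\,d\tau,
\end{equation*}
where $d_{i+1} = s_{i+1}-s_i$. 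This immediately identifies the coefficient matrix as $\boldsymbol{R}_{i+1} = \exp(\boldsymbol{F}d_{i+1})$ and the error term $\boldsymbol{\epsilon}_{i+1}$ as the stochastic integral above.

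The second step is to evaluate the matrix exponential explicitly. Using the identities $\boldsymbol{F}^{2k} = (-\alpha^2)^k\boldsymbol{I}$ and $\boldsymbol{F}^{2k+1} = (-\alpha^2)^k\boldsymbol{F}$ already invoked in Appendix A, the Taylor series of $\exp(\boldsymbol{F}t)$ splits into the cosine and sine series, yielding
\begin{equation*}
\exp(\boldsymbol{F}t) = \cos(\alpha t)\,\boldsymbol{I} + \frac{\sin(\alpha t)}{\alpha}\,\boldsymbol{F},
\end{equation*}
which, evaluated at $t=d_i$, is precisely the stated matrix $\boldsymbol{R}_i$.

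Next I would compute the covariance of $\boldsymbol{\epsilon}_{i+1}$. As a linear functional of Gaussian white noise it is zero-mean Gaussian, and by the white-noise covariance property (the same used in Appendix A) together with the change of variable $u = s_{i+1}-\tau$,
\begin{equation*}
\boldsymbol{\Sigma}_{i+1} = \int_0^{d_{i+1}} \exp(\boldsymbol{F}u)\,\boldsymbol{L}\boldsymbol{L}^T\,\exp(\boldsymbol{F}^T u)\,du.
\end{equation*}
Since $\boldsymbol{L}\boldsymbol{L}^T = \mathrm{diag}(0,\sigma^2)$, the integrand reduces to $\sigma^2$ times the outer product of the vector $\big[\alpha^{-1}\sin(\alpha u),\,\cos(\alpha u)\big]^T$, so the three distinct entries follow from elementary integrals of $\sin^2(\alpha u)$, $\sin(\alpha u)\cos(\alpha u)$, and $\cos^2(\alpha u)$ over $[0,d_{i+1}]$. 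Applying the double-angle identities (in particular $1-\cos(2\alpha d_{i+1}) = 2\sin^2(\alpha d_{i+1})$ for the off-diagonal term) matches each entry of the stated $\boldsymbol{\Sigma}_i$.

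Finally, I would establish the Markov property itself. The state $\boldsymbol{g}_{aug}(s_i)$ depends on the driving noise $W$ only over $[0,s_i]$, while $\boldsymbol{\epsilon}_{i+1}$ depends on $W$ only over $[s_i,s_{i+1}]$; independence of white-noise increments over disjoint intervals therefore makes $\boldsymbol{\epsilon}_{i+1}$ independent of $\boldsymbol{g}_{aug}(s_i)$ and the $\boldsymbol{\epsilon}_i$ mutually independent. The main obstacle here is bookkeeping rather than anything conceptual: the delicate part is carefully carrying out the covariance integral and the trigonometric simplifications so that every entry matches the claimed $\boldsymbol{\Sigma}_i$, whereas the matrix exponential and the independence argument are essentially inherited from the linear-SDE machinery of Appendix A.
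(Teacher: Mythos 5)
Your proposal is correct and follows essentially the same route as the paper's Appendix B proof: the variation-of-constants solution of the linear SDE on $[s_i, s_{i+1}]$, evaluation of $\exp(\boldsymbol{F}t)$ via the Taylor series using $\boldsymbol{F}^{2k} = (-\alpha^2)^k\boldsymbol{I}$ and $\boldsymbol{F}^{2k+1} = (-\alpha^2)^k\boldsymbol{F}$, the white-noise covariance integral for $\boldsymbol{\Sigma}_{i+1}$, and independence from disjoint integration intervals. Your explicit closed form $\exp(\boldsymbol{F}t) = \cos(\alpha t)\boldsymbol{I} + \alpha^{-1}\sin(\alpha t)\boldsymbol{F}$ and the change of variable $u = s_{i+1}-\tau$ are only cosmetic tidyings of the paper's identical computation.
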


\begin{proof}

To show the above Markov representation, note that the value of $g(s_{i+1})$ given $g(s_i)$ can be written similarly as \citep{sarkka2019applied}: $$\boldsymbol{g}_{aug}(s_{i+1}) = \exp(\boldsymbol{F}d_{i+1})\boldsymbol{g}_{aug}(s_i) + \int_{s_i}^{s_{i+1}} \exp(\boldsymbol{F}(s_{i+1}-\tau))\boldsymbol{L}W(\tau)d\tau.$$

Recall that $\boldsymbol{F}^{2k} = (-a^2)^k \boldsymbol{I}$ and $\boldsymbol{F}^{2k+1} = (-a^2)^k \boldsymbol{F}$, then apply the Taylor series expansion for both components in the integral above. It then can be rewritten as:
\begin{equation}\label{markov}
    \begin{aligned}
    \boldsymbol{g}_{aug}(s_{i+1}) &= \exp(\boldsymbol{F}d_{i+1})\boldsymbol{g}_{aug}(s_i) + \int_{s_i}^{s_{i+1}} \exp(\boldsymbol{F}(s_{i+1}-\tau))\boldsymbol{L}W(\tau)d\tau \\
    &= \boldsymbol{R}_{i+1} \boldsymbol{g}_{aug}(s_{i}) + \int_{s_i}^{s_{i+1}} \begin{bmatrix} \frac{1}{a}\sin\big(a(s_{i+1} - \tau)\big) \\ \cos\big(a(s_{i+1} - \tau)\big) \end{bmatrix} \sigma W(\tau) d\tau \\
    &:= \boldsymbol{R}_{i+1} \boldsymbol{g}_{aug}(s_{i}) + \boldsymbol{\epsilon}_{i+1}.
    \end{aligned}
\end{equation}

Note that since each $\boldsymbol{\epsilon}_{i+1}$ involves integration at disjoint intervals, their independence follows from the property of Gaussian white noise \citep{harvey1990forecasting}. To check its covariance matrix $\boldsymbol{\Sigma_{i+1}}$, note that:
\begin{equation}\label{noise-cov}
    \begin{aligned}
    \boldsymbol{\Sigma_{i+1}} &= \sigma^2
    \resizebox{.8\textwidth}{!}{$\begin{bmatrix}
    \int_{s_i}^{s_{i+1}} \frac{1}{a^2}\sin^2\big(a(s_{i+1} - \tau)\big) d\tau  & \frac{1}{a}\int_{s_i}^{s_{i+1}} \sin\big(a(s_{i+1} - \tau)\big)\cos\big(a(s_{i+1} - \tau)\big) d\tau \\
    \frac{1}{a}\int_{s_i}^{s_{i+1}} \sin\big(a(s_{i+1} - \tau)\big)\cos\big(a(s_{i+1} - \tau)\big) d\tau &
    \int_{s_i}^{s_{i+1}} \cos^2 \big(a(s_{i+1} - \tau)\big)
    \end{bmatrix}$} \\
    & = \sigma^2
     \begin{bmatrix} \frac{1}{a^2}\bigg(\frac{d_{i+1}}{2} - \frac{\sin(2ad_{i+1})}{4a}\bigg) & \frac{\sin^2(ad_{i+1})}{2a^2} \\
    \frac{\sin^2(ad_{i+1})}{2a^2} & \frac{2ad_{i+1} + \sin(2ad_{i+1})}{4a}  \\
    \end{bmatrix},
    \end{aligned}
\end{equation}
which completes the proof.

\end{proof}

\section*{Appendix C: Finite Element Method}
The Finite Element Method (FEM) used to construct the finite-dimensional approximation can be understood as the following procedures. 

Given the (linear) stochastic differential equation (SDE) that defines the sGP model:
$$L g(x) = \sigma \xi(x),$$ where $L = a^2 + \frac{d^2}{dx^2}$ is a linear differential operator and $\xi(x)$ is the standard Gaussian white noise process.
Let $\Omega \subset \mathbb{R}^+$ denotes a bounded interval of interest.
Let $\mathbb{B}_k:=\{\varphi_i, i \in [k] \}$ denote the set of $k$ pre-specified basis functions, and let $\mathbb{T}_q:=\{\phi_i, i \in [q]\}$ denote the set of $q$ pre-specified test functions. We consider finite dimensional approximation with form $\tilde{g}(.) = \sum_{i=1}^{k}w_i \varphi_i(.)$. The weights $\boldsymbol{w} := [w_1,...,w_k]^T \in \mathbb{R}^k$ is a set of random weights to be determined. 

In our FEM construction, we used the sB-splines defined over $\Omega$ as the basis functions, and chose the test functions by $\mathbb{T}_k:=\{\phi_i = L\varphi_i, i \in [k]\}$, which is called a least squares approximation in \cite{spde}.
The distribution of the unknown weight vector can be found by fulfilling the weak formulation at the test function spaces $\mathbb{T}_k$, such that 
\begin{equation}
\langle L\tilde{g}(x) , \phi_i(x)\rangle \overset{d}= \sigma \langle \xi(x) , \phi_i(x)\rangle,
\end{equation}
for any test function $\phi_i \in \mathbb{T}_k$.
This equation can also be vectorized as:
$$\langle L\tilde{g}(x) , \phi_i(x) \rangle_{i=1}^k =H \boldsymbol w,$$ where the $ij$ component of the $k \times k$ $H$ matrix can be computed as $H_{ij} = \langle L\varphi_j(x) , L\varphi_i(x) \rangle_{i=1}^k.$

The inner product on the right $\langle \xi(x),\phi_i(x) \rangle_{i=1}^k$ will have Gaussian distribution with zero mean vector and covariance matrix $H$ by properties of Gaussian white noise \citep{harvey1990forecasting}. Therefore, the basis coefficients $\boldsymbol{w}$ will be multivariate Gaussian with zero mean and covariance $H^{-1}HH^{-1} = H^{-1}$. Each element of the matrix $H$ can be written as:
\begin{equation}\label{equ:H-expansion}
\begin{aligned}
        H_{ij} &=  \langle L\varphi_j , L\varphi_i \rangle \\
           &=  \langle a^2\varphi_j + \frac{d^2\varphi_j}{dx^2} , a^2\varphi_i + \frac{d^2\varphi_i}{dx^2} \rangle \\
           &= a^4 \langle \varphi_j , \varphi_i \rangle + a^2 \langle \frac{d^2\varphi_j}{dx^2} , \varphi_i \rangle  + a^2 \langle \varphi_j , \frac{d^2\varphi_i}{dx^2} \rangle + \langle \frac{d^2\varphi_j}{dx^2} , \frac{d^2\varphi_i}{dx^2} \rangle,
\end{aligned}
\end{equation}
hence $H = a^4 G + C + a^2 M$ with $G_{ij} =  \langle \varphi_i , \varphi_j \rangle$, $C_{ij} = \langle \frac{d^2\varphi_i}{dx^2} , \frac{d^2\varphi_j}{dx^2} \rangle$ and $M_{ij} = \langle \varphi_i, \frac{d^2\varphi_j}{dx^2} \rangle + \langle \frac{d^2\varphi_i}{dx^2}, \varphi_j \rangle$ for each element of the matrices.

\section*{Appendix D: Proof of the Convergence Result}
\begin{theorem*}[Convergence of B-spline Approximation]
Let $\Omega = [a,b]$ where $a,b \in \mathbb{R}^+$ and let $g \sim \text{sGP}(\alpha, \sigma)$. Assume $\mathbb{B}_k$ is a set of $k$ cubic B-splines constructed with equally spaced knots over $\Omega$, and $\tilde{g}_k$ denotes the corresponding FEM approximation, then:
$$\lim_{k\to \infty} \C_{k}(x_1,x_2) = \C(x_1,x_2),$$
for any $x_1,x_2 \in \Omega$, where $\C(x_1,x_2) = \Cov[g(x_1),g(x_2)]$, $\C_{k}(x_1,x_2) = \Cov[\tilde{g}_k(x_1),\tilde{g}_k(x_2)]$.
\end{theorem*}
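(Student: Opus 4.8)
The plan is to realize both the true and the approximate covariance as $L^2(\Omega)$ inner products of fixed ``representer'' functions, show that the approximation's representer is the orthogonal projection of the true one onto the growing subspace $LV_k$, and then reduce the claim to a density statement for cubic splines, with the null space $\mathrm{span}\{\cos(\alpha x),\sin(\alpha x)\}$ as the delicate point. Writing $V_k=\mathrm{span}(\mathbb{B}_k)$ and $LV_k=\{L\varphi:\varphi\in V_k\}$, the mild-solution formula from Appendix~A gives $g(x)=\sigma\langle\xi,\psi_x\rangle$ with representer $\psi_x(\tau)=\tfrac{1}{\alpha}\sin(\alpha(x-\tau))\mathbf{1}\{\tau\le x\}$, so that $\C(x_1,x_2)=\sigma^2\langle\psi_{x_1},\psi_{x_2}\rangle$. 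On the approximation side, solving the vectorized least-squares weak formulation yields $\tilde g_k(x)=\sigma\langle\xi,\psi_x^k\rangle$, where $\psi_x^k\in LV_k$ is the unique element with $\langle\psi_x^k,L\varphi_i\rangle=\varphi_i(x)$ for every $i$; hence $\C_k(x_1,x_2)=\sigma^2\langle\psi_{x_1}^k,\psi_{x_2}^k\rangle$, and in particular $\C_k(x,x)=\sigma^2\sup_{\varphi\in V_k}\varphi(x)^2/\|L\varphi\|^2$.

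Next I would identify $\psi_x^k$ with the $L^2$-projection $P_k\psi_x$ of $\psi_x$ onto $LV_k$. Since $L$ is formally self-adjoint with $L\psi_x=\delta_x$ in the distributional sense, integrating by parts gives $\langle\psi_x,L\varphi_i\rangle=\varphi_i(x)+(\text{boundary terms at }\partial\Omega)$; when those contributions vanish this matches the defining property of $\psi_x^k$, so $\langle\psi_x-\psi_x^k,L\varphi_i\rangle=0$ for all $i$ and $\psi_x^k=P_k\psi_x$. Granting this, self-adjointness and idempotency of $P_k$ give $\C_k(x_1,x_2)-\C(x_1,x_2)=\sigma^2\langle P_k\psi_{x_1}-\psi_{x_1},\psi_{x_2}\rangle$, so by Cauchy--Schwarz $|\C_k-\C|\le\sigma^2\|P_k\psi_{x_1}-\psi_{x_1}\|\,\|\psi_{x_2}\|$. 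The theorem then follows once $\|P_k\psi_x-\psi_x\|\to0$, i.e.\ $\mathrm{dist}_{L^2}(\psi_x,LV_k)\to0$; for this I would solve $Lu=\psi_x$ for some $u\in H^2(\Omega)$, approximate $u$ by a spline $\varphi_k\in V_k$ with $\|\varphi_k-u\|_{H^2}\to0$ as the mesh refines \citep{schultz1969approximation}, and use boundedness of $L:H^2\to L^2$ to conclude $\|L\varphi_k-\psi_x\|_{L^2}\to0$.

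The spline-density estimate and the inner-product reduction are routine; the main obstacle is controlling the SDE null space, which enters precisely through the boundary terms above. Because cubic splines approximate $\cos(\alpha x)$ and $\sin(\alpha x)$ arbitrarily well in $H^2$, there exist $\varphi\in V_k$ with $\varphi(x)\ne0$ yet $\|L\varphi\|$ arbitrarily small, which would make the supremum expression for $\C_k(x,x)$ diverge and the projection identity fail. The initial conditions $g(0)=g'(0)=0$ are what rescue the argument: restricting to splines compatible with these conditions forces the only null solution with vanishing initial data to be zero, which should yield a uniform coercivity bound $\|L\varphi\|\ge c\,\|\varphi\|_{H^2}$ on the admissible space and simultaneously cancel the boundary terms. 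Making this coercivity and the boundary-term cancellation precise --- showing that imposing the initial conditions both stabilizes $\C_k(x,x)$ and validates $\psi_x^k=P_k\psi_x$ --- is the crux of the proof.
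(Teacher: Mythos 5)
Your skeleton is, modulo the isometry $f \mapsto Lf$, exactly the paper's Appendix D proof: your representer $\psi_x$ is the paper's Green function $L C_x$, your projection identity $\psi_x^k = P_k\psi_x$ is the paper's Lemma 1 (the approximate covariance as an inner product of projected representers), and your Cauchy--Schwarz plus spline-density conclusion is the paper's Lemma 2, which also rests on \cite{schultz1969approximation}. Where you and the paper part ways is the handling of the delicate point, and there your diagnosis is only half right. What is genuinely needed --- and what the paper builds in by requiring every basis function to satisfy $\varphi(0)=\varphi'(0)=0$, so that the spline space sits inside the constrained Sobolev space of Appendix D --- is the boundary-term cancellation: integrating by parts twice over $[0,x]$ gives
\begin{equation*}
\langle \psi_x, L\varphi\rangle_{L^2} \;=\; \varphi(x) \;-\; \cos(\alpha x)\,\varphi(0) \;-\; \tfrac{1}{\alpha}\sin(\alpha x)\,\varphi'(0),
\end{equation*}
and no terms at the right endpoint $b$ ever appear, because $\psi_x$ is supported on $[0,x]$ and $\sin(\alpha(x-\tau))$ vanishes at $\tau=x$. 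Once the basis is constrained, $\psi_x^k = P_k\psi_x$ holds exactly. Your divergence example (splines creeping toward the null-space sinusoids) is correct and is worth stating: it shows this constraint on the basis is necessary rather than cosmetic, a hypothesis the main-text statement of the theorem leaves implicit and the appendix asserts only in passing.

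However, the uniform coercivity bound $\|L\varphi\|_{L^2} \ge c\,\|\varphi\|_{H^2}$ that you identify as the crux is not needed anywhere in the argument. Once the projection identity holds, $C_k(x,x) = \sigma^2\|P_k\psi_x\|_{L^2}^2 \le \sigma^2\|\psi_x\|_{L^2}^2$ is automatically bounded, since orthogonal projections are $L^2$-contractions; no stability estimate is required to prevent the blow-up you fear. Likewise your density step $\mathrm{dist}_{L^2}(\psi_x, LV_k)\to 0$ uses only the easy direction, boundedness of $L: H^2 \to L^2$, applied to a constrained spline approximant of $u = C_x$; since $C_x(0)=C_x'(0)=0$, a Hermite-type interpolant inherits the constraint, and Schultz's rates apply. (If you did want coercivity, it is a one-line consequence of variation of constants: $\varphi(0)=\varphi'(0)=0$ implies $\varphi(x) = \int_0^x \alpha^{-1}\sin(\alpha(x-\tau))\,(L\varphi)(\tau)\,d\tau$, and this Volterra operator is bounded from $L^2$ to $H^2$ --- no compactness or limiting argument is needed.) So your plan is sound and coincides with the paper's, but the step you flag as the main obstacle closes in a few lines once the initial conditions are imposed on the spline space, which is precisely how the paper proceeds.
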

\begin{proof}
The proof of this theorem starts with a similar strategy as in \cite{spde}.
Without the loss of generality, we assume the variance parameter of the sGP $\sigma = 1$ and $\Omega = [0,1]$.
We denote the 3rd order Sobolev space as $H^3(\Omega) = \{f \in \mathcal{L}^2(\Omega): D^q f \in \mathcal{L}^2(\Omega) \ \forall |q| \leq 3\}$. We then define the constrained Sobolev space $\Sobolev$ as:
$$\Sobolev = \{f \in H^3(\Omega): f(0) = f'(0) = 0\}.$$
Since $Lf = 0$ implies $f \in \text{span}\{\cos(\alpha x), \sin(\alpha x)\}$, it is clear that
\begin{equation}
    \begin{aligned}
        \ip[\Sobolev]{f}{h} := \ip[\Omega]{Lf}{Lh}
                            = \int_\Omega Lf(x) Lh(x) dx
    \end{aligned}
\end{equation}
defines an inner product for $f, h \in \Sobolev$.

Let $b > 1$ and $\Tilde{\Omega} = [0,b) \supset \Omega$, then for each $f \in H^3(\Omega)$, there exists a zero-extension $f_0 \in H^3(\Tilde{\Omega})$ such that $f_0(x) = f(x)$ for $x \in \Omega$, and $f_0(b) = f'_0(b) = 0$.
Hence, we assume without the loss of generality that for each $f \in \Sobolev$, $f(1) = f'(1) =0$ from now. This implies that the differential operator $D^q$ in $\Sobolev$ has adjoint operator $(D^q)^* = (-1)^q D^q$ for each $q \in \mathbb{Z}$.

Define $\Sobolev_k = \text{span} \{\mathbb{B}_k\}$.
Note $\Sobolev_k \subset \Sobolev$ by our construction of the B-spline basis.
This implies if $f(x) \in \Sobolev$, then there exists a projection $\Tilde{f}(x) = \sum_{i=1}^k w_i \varphi_i(x) \in \Sobolev_k$ which satisfies:
\begin{equation}
    \begin{aligned}
        \ip[\Sobolev]{f-\tilde{f}}{\tilde{h}} = \ip[\Sobolev]{f}{\tilde{h}} - \ip[\Sobolev]{\tilde{f}}{\tilde{h}} = 0, \  \forall \tilde{h} \in \Sobolev_k.
    \end{aligned}
\end{equation}

\noindent To prove $\lim_{k\to \infty}\C_{k}(x_1,x_2) = \C(x_1,x_2)$, note that for any $f,h \in \Sobolev$,
\begin{equation}
    \begin{aligned}
        \Cov[\ip[\Sobolev]{g}{f},\ip[\Sobolev]{g}{h}] &:=  \Cov[\ip[\Omega]{Lg}{Lf},\ip[\Omega]{Lg}{Lh}] \\
                    &= \Cov[\ip[\Omega]{ \xi}{Lf},\ip[\Omega]{ \xi}{Lh}] \\
                    &=  \ip[\Omega]{Lf}{Lh}, \\
                    &=  \ip[\Sobolev]{f}{h}.
    \end{aligned}
\end{equation}
The second equality follows from the definition of the sGP, and the third equality follows from the property of Gaussian white noise \citep{harvey1990forecasting}.

\noindent Since the B-spline approximation $\tilde{g}_k \in \Sobolev_k$, we know $$\ip[\Sobolev]{\tilde{g}_k}{f}= \ip[\Sobolev]{\tilde{g}_k}{f-\tilde{f} + \tilde{f}} = \ip[\Sobolev]{\tilde{g}_k}{\tilde{f}} + \ip[\Sobolev]{\tilde{g}_k}{f - \tilde{f}} = \ip[\Sobolev]{\tilde{g}_k}{\tilde{f}}.$$
Using this result and the fact that the B-spline approximation $\tilde{g}_k$ is a least square solution, we have
\begin{equation}
    \begin{aligned}
        \Cov\bigg[\ip[\Sobolev]{\tilde{g}_k}{f},\ip[\Sobolev]{\tilde{g}_k}{h}\bigg] &= \Cov\bigg[\ip[\Sobolev]{\tilde{g}_k}{\tilde{f}},\ip[\Sobolev]{\tilde{g}_k}{\tilde{h}}\bigg] \\
        &= \Cov\bigg[\ip[\Omega]{\xi}{L\tilde{f}},\ip[\Omega]{\xi}{L\tilde{h}}\bigg] \\
        &= \ip[\Sobolev]{\tilde{f}}{\tilde{h}}.
    \end{aligned}
\end{equation}

\noindent Let $\C_s(x) = \C(s,x)$ denote the covariance function of the sGP defined at any $s \in \Omega$. Based on the previous result in \cref{theorem:sGP_Property}, we know $\C_s(x) \in \Sobolev$ and $L \C_s(x) = \frac{1}{\alpha} \sin[\alpha(s-x)^{+}]$ is the Green function of $L$. Its projection into $\Sobolev_k$ is denoted as $\tilde{\C}_s(x)$

\begin{lemma}\label{lemma:RK}
Given the same setting in the main theorem
\begin{equation}
    \begin{aligned}
    \C(x_1,x_2) &= \Cov \big[\ip[\Sobolev]{g}{{\C}_{x_1}},\ip[\Sobolev]{g}{{\C}_{x_2}}  \big] \\
        \C_{k}(x_1,x_2) &= \Cov \bigg[\ip[\Sobolev]{\tilde{g}_k}{\tilde{\C}_{x_1}},\ip[\Sobolev]{\tilde{g}_k}{\tilde{\C}_{x_2}}  \bigg].
    \end{aligned}
\end{equation}
\end{lemma} 

\begin{proof}

The first part directly follows from the proof in \cref{theorem:sGP_Property}. The second part can be proved using the fact that $L$ is self-adjoint and $L \C_{x_1}(x) = \frac{1}{a} \sin[a(x_1-x)^{+}]$ is the Green function, which implies $\ip[\Sobolev]{\varphi_i}{\tilde{\C}_{x_1}} = \varphi_i(x_1)$ for each $i \in [k]$. The detailed proof proceeds as the following.

\noindent By construction of the B-spline approximation, 
\begin{equation}
    \begin{aligned}
    \C_{k}(x_1,x_2) &= \text{Cov}\bigg[ \sum_i w_i \varphi_i(x_1), \sum_i w_i \varphi_i(x_2)  \ \bigg] \\
    &= \text{Cov}\bigg[ \boldsymbol{\Phi}(x_1) ^T \boldsymbol{w}, \boldsymbol{\Phi}(x_2) ^T \boldsymbol{w} \bigg] \\
    &= \boldsymbol{\Phi}(x_1) ^T \Sigma_{\boldsymbol{w}} \boldsymbol{\Phi}(x_2),
    \end{aligned}
\end{equation}
where $\boldsymbol{\Phi}(x) = [\varphi_1(x), ..., \varphi_k(x)]^T$. 

\noindent Since $\tilde{\C}_{x_1}(x)$ is the projection of ${\C}_{x_1}(x)$ to $\Sobolev_k$, $\tilde{\C}_{x_1}(x) = \sum_i w_{x_1,i} \varphi_i(x)$ for some weights $\boldsymbol{w}_{x_1} = [w_{x_1,1}, ... , w_{x_1,k}]^T$. The same argument can be used for $\tilde{\C}_{x_2}$.
Therefore
\begin{equation}
    \begin{aligned}
        \text{Cov}\bigg[\ip[\Sobolev]{\tilde{g}_k}{\tilde{\C}_{x_1}},\ip[\Sobolev]{\tilde{g}_k}{\tilde{\C}_{x_2}}  \bigg] &= \ip[\Sobolev]{\tilde{\C}_{x_1}}{\tilde{\C}_{x_2}} \\
        &= \ip[\Sobolev]{\boldsymbol{\Phi}(x) ^T \boldsymbol{w}_{x_1}}{\boldsymbol{\Phi}(x) ^T \boldsymbol{w}_{x_2}} \\
        &= \ip[\Omega]{L\boldsymbol{\Phi}(x) ^T \boldsymbol{w}_{x_1}}{L\boldsymbol{\Phi}(x) ^T \boldsymbol{w}_{x_2}} \\
        &= \boldsymbol{w}_{x_1}^T \Sigma^{-1}_{\boldsymbol{w}} \boldsymbol{w}_{x_2}.
    \end{aligned}
\end{equation}
Solving $\ip[\Sobolev]{\tilde{\C}_{x_1}}{\varphi_i} = \ip[\Sobolev]{\C_{x_1}}{\varphi_i}$ for each $i \in [k]$ yields that $\boldsymbol{w}_{x_1} = \Sigma_{\boldsymbol{w}}\boldsymbol{\omega}_{x_1}$, where $\boldsymbol{\omega}_{x_1} \in \mathbb{R}^k$ with $i$th element ${\omega}_{x_1,i} = \ip[\Sobolev]{\C_{x_1}}{\varphi_i}$. 
Hence it only remains to show $\boldsymbol{\omega}_{x_1} = \boldsymbol{\Phi}(x_1)$, which holds because
\begin{equation}
\begin{aligned}
 \omega_{x_1, i} &= \ip[\Sobolev]{\C_{x_1}}{\varphi_i} \\
 &= \ip[\Omega]{L\C_{x_1}}{L \varphi_i} \\
 &= \ip[\Omega]{L^{*}L\C_{x_1}}{ \varphi_i} \\
 &= \varphi_i(x_1), \ \forall i \in [k].
\end{aligned}
\end{equation}
The last equality holds because $L$ is a self-adjoint operator and $L\C_{x_1}$ is the Green function. This lemma is hence proved.
\end{proof}

\noindent Using the above result and \cref{lemma:RK}, it suffices to prove that 
\begin{equation}\label{equ:InnerConv}
    \begin{aligned}
        \ip[\Sobolev]{\tilde{\C}_{x_1}}{\tilde{\C}_{x_2}} \rightarrow \ip[\Sobolev]{\C_{x_1}}{\C_{x_2}},
    \end{aligned}
\end{equation}
as $k \rightarrow \infty$. 
For this step, we will use the following lemma on the spline approximation:

\begin{lemma}\label{lemma:SplineApproxi}
Given the same setting in the main theorem, define the norm $||f||_\Sobolev = \ip[\Sobolev]{f}{f}^{1/2}$ for $f \in \Sobolev$ then 
\begin{equation}
    \begin{aligned}
    ||\C_{s} -\tilde{\C}_{s}||_{\Sobolev}  &= O(1/k),
    \end{aligned}
\end{equation}
for each $s \in \Omega$.
\end{lemma} 
\begin{proof}
The proof of this lemma mostly follows from the result in \cite{schultz1969approximation}. 
Given $\Sobolev_k$ is a spline space with degree $3$ and mesh size $1/k$ and $\C_s \in H^3(\Omega)$, by theorem 3.3 in \cite{schultz1969approximation} we have
\begin{equation}
    \begin{aligned}
        ||D^q(\C_s - \tilde{\C}_s)||_{H^2(\Omega)} \leq c_q\bigg(\frac{1}{k}\bigg)^{3-q}
    \end{aligned}
\end{equation}
for each $0\leq q \leq 2$, where $c_q$ is a constant that depends on $||\C_s||_{H^3(\Omega)}$ but does not depend on $k$.
Note that 
\begin{equation}
    \begin{aligned}
        ||\C_s - \tilde{\C}_s||^2_{\Sobolev} &= \alpha^4||\C_s - \tilde{\C}_s||^2_{\mathcal{L}^2} + ||D^2(\C_s - \tilde{\C}_s)||^2_{\mathcal{L}^2} - 2\alpha \ip[\Omega]{(\C_s - \tilde{\C}_s)}{D^2(\C_s - \tilde{\C}_s)} \\
        &= \alpha^4||\C_s - \tilde{\C}_s||^2_{\mathcal{L}^2} + ||D^2(\C_s - \tilde{\C}_s)||^2_{\mathcal{L}^2} +  2\alpha||D(\C_s - \tilde{\C}_s)||^2_{\mathcal{L}^2},
    \end{aligned}
\end{equation}
where the second equality holds since $D^* = -D$. The lemma is proved since $||.||_\Sobolev$ and $||.||_{H^2(\Omega)}$ are equivalent norm.

\end{proof}

\noindent Using \cref{lemma:SplineApproxi}, \cref{equ:InnerConv} can be proved with Cauchy Schwarz inequality and the fact that the sequence $||\tilde{\C}_{x_2}||_\Sobolev$ is bounded,
\begin{equation}
    \begin{aligned}
        \bigg|\ip[\Sobolev]{\tilde{\C}_{x_1}}{\tilde{\C}_{x_2}} - \ip[\Sobolev]{\C_{x_1}}{\C_{x_2}}\bigg| &= 
        \bigg|\ip[\Sobolev]{\tilde{\C}_{x_1} - \C_{x_1}}{\tilde{\C}_{x_2}} - \ip[\Sobolev]{\C_{x_1}}{\C_{x_2} - \tilde{\C}_{x_2}}\bigg| \\
        &\leq \big|\big| \tilde{\C}_{x_1} - \C_{x_1} \big|\big|_\Sobolev \big|\big| \tilde{\C}_{x_2} \big|\big|_\Sobolev + \big|\big| \tilde{\C}_{x_2} - \C_{x_2} \big|\big|_\Sobolev \big|\big| \C_{x_1} \big|\big|_\Sobolev \\
        & = c/k,
    \end{aligned}
\end{equation}
where $c$ is some constant independent of $k$. The theorem is hence proved.
\end{proof}

\end{document}